\documentclass{article}
\usepackage[all]{xy}

\usepackage{amsmath}
\usepackage{amscd}
\usepackage{amsthm}
\usepackage{amssymb}
\usepackage{amsfonts}
\usepackage{enumerate}
\usepackage{graphicx, color}
\usepackage[font=normalsize]{caption}
\usepackage[mathcal]{euscript}
\usepackage{mathtools}
\usepackage{mathrsfs}
\usepackage{tikz}
\usetikzlibrary{arrows,decorations.markings}
\usepackage{subfig}
\usepackage{indentfirst} 
\usepackage{xcolor}
\usepackage{tikz-cd}

\numberwithin{equation}{section}

\theoremstyle{theorem}
\newtheorem{thm}{Theorem}[section]
\newtheorem{cor}{Corollary}[section]
\newtheorem{lem}{Lemma}[section]
\newtheorem{prop}{Proposition}[section]

\theoremstyle{definition}
\newtheorem{defn}{Definition}[section]
\newtheorem{rem}{Remark}[section]
\newtheorem{hypo}{Hypothesis}[section]
\newtheorem{notation}{\rm\bfseries{Notation}}[section]

\newtheorem{ques}{Question}[section]

\newtheorem{exm}{Example}[section]

\numberwithin{equation}{section}

\numberwithin{equation}{section}

\hsize=5.0truein \hoffset=.25truein \vsize=8.375truein
\voffset=.15truein
\def\R{{\mathbb R}}

\def\SPS{{\hbox{\rm SPS}}}
\def\TPS{{\hbox{\rm TPS}}}
\def\KTPS{{\hbox{\rm KTPS}}}

\def\E{{\mathbb E}}
\def\Z{{\mathbb Z}}
\def\C{{\mathbb C}}
\def\R{{\mathbb R}}
\def\P{{\mathbb P}}

\def\N{{\mathbb N}}

\def\11{{\mathbb I}}

\def\del{{\partial}}

\def\C{\mathbb{C}}
\def\Z{\mathbb{Z}}

\def\Q{\mathbb{Q}}

\def\E{\ifmmode{\mathbb E}\else{$\mathbb E$}\fi} 
\def\N{\ifmmode{\mathbb N}\else{$\mathbb N$}\fi} 
\def\R{\ifmmode{\mathbb R}\else{$\mathbb R$}\fi} 
\def\Q{\ifmmode{\mathbb Q}\else{$\mathbb Q$}\fi} 
\def\C{\ifmmode{\mathbb C}\else{$\mathbb C$}\fi} 
\def\H{\ifmmode{\mathbb H}\else{$\mathbb H$}\fi} 
\def\Z{\ifmmode{\mathbb Z}\else{$\mathbb Z$}\fi} 
\def\P{\ifmmode{\mathbb P}\else{$\mathbb P$}\fi} 
\def\SS{\ifmmode{\mathbb S}\else{$\mathbb S$}\fi} 
\def\DD{\ifmmode{\mathbb D}\else{$\mathbb D$}\fi} 

\def\R{{\mathbb R}}

\def\E{{\mathbb E}}
\def\Z{{\mathbb Z}}
\def\C{{\mathbb C}}
\def\R{{\mathbb R}}

\def\N{{\mathbb N}}

\def\CA{{\mathcal A}}
\def\CB{{\mathcal B}}

\def\CD{{\mathcal D}}
\def\CE{{\mathcal E}}
\def\CF{{\mathcal F}}
\def\CG{{\mathcal G}}

\def\CJ{{\mathcal J}}

\def\CL{{\mathcal L}}
\def\CM{{\mathcal M}}

\def\CO{{\mathcal O}}
\def\CP{{\mathcal P}}

\def\CP{{\mathcal P}}
\def\CS{{\mathcal S}}
\def\CT{{\mathcal T}}

\def\opname#1{\mathop{\kern0pt{\rm #1}}\nolimits}

\def\dim{\opname{dim}}

\def\group#1{\opname{#1}}

\def\U{\group{U}}

\def\Sing{\operatorname{Sing}}

\def\Diff{\operatorname{Diff}}
\def\Image{\operatorname{Image}}
\def\ev{\operatorname{ev}}

\def\Symp{\operatorname{Symp}}

\def\ben{\begin{enumerate}}
\def\een{\end{enumerate}}
\def\be{\begin{equation}}
\def\ee{\end{equation}}
\def\bea{\begin{eqnarray}}
\def\eea{\end{eqnarray}}
\def\beastar{\begin{eqnarray*}}
\def\eeastar{\end{eqnarray*}}
\def\bc{\begin{center}}
\def\ec{\end{center}}

\def\Sing{\operatorname{Sing}}

\def\dim{\opname{dim}}

\def\group#1{\opname{#1}}

\def\U{\group{U}}

\def\Diff{\operatorname{Diff}}

\title{Nonequilibrium thermodynamics as a symplecto-contact reduction and
relative information entropy}

\author{ First Author \\ Postal Address \\ e-mail:  \\[2ex]
         Second Author\thanks{ Supported by ... }
                      \\ Postal Address \\ e-mail:  }

\author{Jin-wook Lim\\ 
Department of Mathematics POSTECH \& \\
Center for Geometry and Physics, Institute for Basic Science (IBS), \\
79 Jigok-ro, 127beong-gil, Nam-gu, Pohang-si, \\
Gyeongsangbuk-do, Korea 37673 \\
jinwook.lim@gmail.com: \\[2ex]
Yong-Geun Oh \\
Center for Geometry and Physics, Institute for Basic Science (IBS), \\
79 Jigok-ro, 127beong-gil, Nam-gu, Pohang-si, \\
 Gyeongsangbuk-do, Korea 37673 \& \\
Department of Mathematics, POSTECH, \\
yongoh1@postech.ac.kr}

\begin{document}

\maketitle

\begin{abstract}
Both statistical phase space (SPS), which is
$\Gamma = T^*\mathbb R^{3N}$ of $N$-body particle system,
and kinetic theory phase space (KTPS), which is the cotangent bundle $T^*\mathcal P(\Gamma)$ of the probability space $\mathcal P(\Gamma)$
thereon, carry canonical symplectic structures. Starting from this first principle, we provide a canonical derivation of
thermodynamic phase space (TPS) of nonequilibrium thermodynamics
 as a contact manifold in two steps. First, regarding the collective observation of observables in SPS as a moment map defined on KTPS, we apply the Marsden-Weinstein reduction
 and obtain a mesoscopic phase space in between KTPS and TPS as a (infinite dimensional) symplectic fibration.
Then we show that the reduced relative information entropy 
defines a generating function that provides a covariant construction of a thermodynamic equilibrium
as a Legendrian submanifold. This Legendrian submanifold is not necessarily graph-like.
We interpret the Maxwell construction of \emph{equal-area law} as the procedure of finding a continuous, 
not necessarily differentiable,
thermodynamic potential and explain the associated phase transition by identifying the procedure
with that of finding a graph selector in symplecto-contact geometry and in the Aubry-Mather theory of dynamical
system.
\end{abstract}

\noindent
{\bf Keywords:} kinetic theory phase space,  statistical phase space, thermodynamic phase space, relative information entropy,
(local) observables, observations, thermodynamic equilibrium, Maxwell construction,
 moment map, Marsden-Weinstein reduction, Legendrian submanifolds,
 generating functions, Legendrian graph selectors.

\section{Introduction}

Starting from Caratheodory \cite{caratheodory} and Hermann \cite{hermann:geophy},
contact geometry is utilized in the geometric formulation of the thermodynamics, especially of equilibrium
thermodynamical process. It has been observed that the state of
thermodynamic equilibrium can be interpreted as a Legendrian submanifold
in the \emph{thermodynamic phase space} (TPS) equipped with a contact form.
The expression of the contact form is not unique in that
it depends on the choice of coordinates.  In this regard, more fundamental geometric
structure is not the contact form but the contact structure. We call
this contact structure the \emph{thermodynamic contact structure}.

In fact, TPS exists only abstractly which has been described
through coordinates in the literature. One of the main goals of the present paper is
to derive TPS  as a contact manifold by a
canonical and covariant reduction process starting from
the \emph{kinetic theory phase space} (KTPS), which is the cotangent bundle
$$
T^*\CP(\Gamma)
$$
of the space $\CP(\Gamma)$ of probability distributions
of the \emph{statistical phase space} (SPS) $\Gamma$ which is the cotangent
bundle $T^*\R^{3N}$ of $\R^{3N}$ (or its open subset). The latter is nothing but the
$N$ body particle system's phase space.

Such kind of an effort has been made by Grmela \cite{grmela:mesoscopic}.
This is manifest in the following phrase he made in \cite[p.1666]{grmela:mesoscopic}:
\medskip

\emph{``Again, the main difficulty on this route is in the reduction process.
The question that arises on both routes is the following: how shall we make the
reductions in order that the essential features of the mechanics put into the
original microscopic formulation is kept intact and the compatibility with thermodynamics
of both microscopic and mesoscopic formulations is guaranteed.''}

\medskip

One of the main results of the present paper is to derive this reduction process described
in this quoted phrase as \emph{a symplecto-contact reduction in the symplecto-contact Hamiltonian geometry}
focusing on the conceptual framework without delving into the functional analytical issues
arising from dealing with the infinite dimensional spaces such as $\CP(\Gamma)$, $T^*\CP(\Gamma)$ and others. 
For example, we will ignore the fact that the
canonical symplectic form of $T^*\CP(\Gamma)$ is only a \emph{weak} symplectic form
which is \emph{weakly nondegenerate}. (See \cite{abraham-marsden} for its definition.)

The canonical reduction process in symplectic geometry is introduced
by Marsden and Weinstein  \cite{marsden-weinstein} when there is a Hamiltonian
$G$ action by a Lie group $G$ on a symplectic manifold $(M,\omega)$.
 The notion of moment map is crucial in this reduction
process that is also introduced therein as a $G$-equivariant map
$$
J: M \to {\mathfrak g}^*
$$
where ${\mathfrak g}^*$ is the dual of the Lie algebra $\mathfrak g = \text{\rm Lie}(G)$ equipped with the coadjoint action.

We apply the reduction process to the \emph{infinite
dimensional} symplectic manifold $T^*\CP(\Gamma)$
under the canonical symplectic action induced by certain
subgroup of symplectomorphism group $\Symp(T^* \CP(\Gamma))$ that canonically
arise from the given \emph{system of observables} on the particle phase space, SPS.

\subsection{Relative information
entropy and thermodynamic entropy}

We remark that the particle phase space $\Gamma := T^*\R^K$, which we call $\SPS$,
 of the gas (i.e., many particle systems)
carries the canonical Lebesgue measure $\nu_0:= d\Gamma$ as the reference measure.
Here $K = 3N$ where $N$ is the number of particles.
A positive density $D$  on the space
$\Gamma$ can be expressed as
$$
D = f\, d\Gamma, \quad \text{for } \, \, f \geq 0 \,\, \,  \& \, \int_\Gamma f \, d\Gamma > 0
$$
where the function $f$ is the Radon-Nikodym derivative
$$
f = \frac{\del D}{\del{\nu_0}}.
$$
We denote by $\CD^+(\Gamma)$ the set of positive densities.

Our reduction process can be viewed as a statistical mechanics derivation
of thermodynamics starting from the kinetic approach to the study of many
particle systems such as gas which is going back to Bernoulli and Clausius.
It was Maxwell \cite{maxwell} who first laid down the true base thereof. (See \cite[Section 1.1]{villani:kinetic}
for a nice historical introduction of the kinetic approach.)
This kinetic approach makes appearance of cotangent bundle
 of $\CP(\Gamma)$  in the story natural.

 Another natural source of motivation regarding $\CP(\Gamma)$ as the  starting point arises
 from the relationship between Shannon's information entropy and the thermodynamic
 entropy.  The Shannon entropy is defined for a continuous observable (or a random variable) with
 a probability distribution $\rho$ on $\Gamma$, i.e., as a function on $\CP(\Gamma)$, and
 the thermodynamic entropy can be canonically derived through some `reduction' process.
 (See \cite{Jaynes:InfoStat}, \cite{MNSS} or Section \ref{sec:review} for such a derivation.)

Information entropy is defined for probability measures, i.e., positive densities of mass 1.
(See Appendix A for a brief recollection of basic definitions from Information theory.)

\begin{defn}[Information density] Let $\rho \in \CP(\Gamma)$ be a probability density.
We define the \emph{relative information density}, denoted by $I = I(\rho)$,
of $\rho$ to be
\begin{equation}\label{eq:gibbs-entropy}
I(\rho) = - \log \left(\frac{\del \rho}{\del \nu_0}\right)
\end{equation}
for the Radon-Nikodym derivative $\frac{\del \rho}{\del \nu_0}$. The \emph{(relative) information entropy} is  the function
$\CS: \CP(\Gamma) \to \R$ by
\begin{equation}\label{eq:CS-intro}
\CS(\rho): = \int_\Gamma I(\rho) \rho  \left( = - D_{KL}(\rho\| \nu_0)\right).
\end{equation}
More generally we define the \emph{information measure} $I_\rho(A)$ of a 
measurable subset $A \subset \Gamma$ to the integral $I_\rho(A) = \int_A I(\rho)\, \rho$.
\end{defn}

\begin{rem}
We refer to \cite{kullback-leibler}, \cite{MACKAY}
for the general properties of \emph{Kullback-Leibler divergence} 
$D_{KL}(\rho\| \nu_0)$. We will not use this term beyond the above definition except
when we want to emphasize `being not Shannon's original information entropy but a relative information
entropy'. We will show that this relative information entropy
 \emph{$\CS$ is preserved under the action by $\nu_0$-measure preserving
diffeomorphisms.} This invariance will be of fundamental importance in our derivation later of
nonequilibrium thermodynamics as a symplecto-contact reduction of statistical mechanics.
\end{rem}

\subsection{Action of diffeomorphism group}

Let ${\rm Diff}(\Gamma)$ be the diffeomorphism group of SPS
$\Gamma= T^*\R^{3N}$. We equip $\Gamma$ with the canonical
symplectic form $\omega_\Gamma = \Omega_0$. Then
the Lebesque measure $\nu_0$ coincides with
the Liouville measure of the symplectic manifold
$(\Gamma,\Omega_0)$ in that $\nu_0$ is the integration measure of
the Liouville volume form
$$
\frac{1}{(3N)!} \omega_\Gamma^{3N}.
$$
The (extended) entropy function canonically arise from the probability
density, \emph{independently of the choice of other observables}. It
 is invariant under the action of the subgroup
$$
\widehat{\Diff}(\Gamma) \subset \Symp(T^* \CP(\Gamma))
$$
on KTPS, which consists of the symplectomorphisms
on $T^*\CP(\Gamma)$ naturally lifted from
the action of $\Diff(\Gamma)$ on the base $\CP(\Gamma)$. It is given
 by the
pushforward operation of the \emph{microscopic group action}
thereof on SPS.
We will also consider the microscopic action of the subgroup $\Symp(\Gamma,\Omega_0) \subset \Diff(\Gamma)$
 consisting of symplectic diffeomorphisms of the symplectic manifold
$(\Gamma,\Omega_0)$, and has its lifting of
$$
\widehat{\Symp}(\Gamma,\Omega_0) \subset \widehat{\Diff}(\Gamma)
$$
acted upon $T^*\CP(\Gamma)$.
We will derive some consequences
arising from the canonical action of $\widehat{\Symp}(\Gamma,\Omega_0) $,  
especially from the \emph{measure-preserving} property
of the action of $\Symp(\Gamma,\Omega_0)$  on $\Gamma$.

 In the statistical mechanics derivation of thermodynamics,
points in thermodynamic phase space represent
macroscopic \emph{observations} or \emph{measurements} of the relevant microscopic system, which we will
denote by $\CF$, through the averaging process.
In the kinetic theory a state is represented by a distribution density $D$
(say, in Boltzmann's study of the kinetic theory). See Definition
\ref{defn:observation} below for the precise definition of what we
call the \emph{observation}.

Therefore
we may regard  the macroscopic observation or the measurement
in the experiment of the given statistical system as a reduction process 
of the mechanics via N\"other's principle using the infinite dimensional symmetry of measure-preserving diffeomorphisms. 
Our first motto is

\begin{center}
\bf ``Observation is a moment map.''
\end{center}

The usual TPS is then the outcome of some kind of reduction of KTPS, where
the relative information entropy function \emph{universally arises} as a reduced function
on any thermodynamic phase space, i.e., in any thermodynamic
model: \emph{It does not carry its conjugate partner}
in the nonequilibrium thermodynamic phase space $\TPS$,
unlike other observables. This will be the source of \emph{odd dimensionality}
of TPS.
\emph{
Furthermore the temperature $T$, which is usually regarded as
the conjugate of entropy, does not appear in $\SPS$
until the state reaches a thermodynamic equilibrium. }

A thermodynamic equilibrium is  defined to be the state that realizes the
maximum of the entropy in a given system. Traditionally in the thermodynamics literature,
the equilibrium is identified by first solving a constrained extremal problem
arising from the given observation data and the obvious constraint
\begin{equation}\label{eq:int-rho=1}
\int \rho = 1
\end{equation}
inside the space $\CD^+(D)$ of positive densities, and then carrying out the
macroscopic thermodynamic analysis.
To solve the first problem, the method of \emph{Lagrange multipliers}
has been utilized in the thermodynamic literature (e.g. \cite{Mrug:TPS})
in terms of \emph{preferred variables}.
We will geometrize and globalize this constrained extremal problem as a part of our
reduction process and solve the extremal problem \emph{globally and covariantly}
without using coordinates. In this procedure, we perform two constructions in order:
\begin{enumerate}[(1)]
\item {\bf Step 1:} We apply Marsden-Weinstein reduction for each collective observation of the given statistical system, denoted by $\CF$,
and obtain an intermediate reduced space $\CM^\CF$
as a symplectic fibration over the space of observation data.
\item {\bf Step 2:} We employ  the \emph{method of generating functions} in symplecto-contact
geometry by taking the reduced function, $\CS^{\text{\rm red}}_\CF$, as the global canonical generating
function of  the thermodynamic equilibrium state.
(See Appendix \ref{sec:generating} for the definition.)
\end{enumerate}
More detailed explanation of these two operations are now in order.

\subsection{Covariant construction of thermodynamic equilibrium}
\label{subsec:thermodynamic-reduction}

We start with what we mean by a (microscopic) statistical system $\CF$.
We call a collection of observables, i.e., of functions
$$
\CF =  \{F_1, \ldots, F_n\}; \quad F_i: \Gamma = T^*\R^K \to \R
$$
a \emph{statistical system} on SPS.
We consider the symmetry group consisting of
the symplectic diffeomorphisms of $T^*\CP(\Gamma)$ induced from the microscopic action of the
subgroup $\Diff(\Gamma, \nu_0) \subset \Diff(\Gamma)$  on the base $\Gamma$: The subgroup $\Diff(\Gamma, \nu_0)$
 consists of $\nu_0$-measure preserving diffeomorphisms, e.g.,
symplectic diffeomorphisms of $(\Gamma,\Omega_0)$.

\begin{rem}\label{rem:kapranov}
Kapranov \cite{kapranov:thermodynamics}  looked at the aforementioned
statistical system in the point of view of thermodynamics \emph{with several
Hamiltonians}. He also related them to the toric moment map for the
case of commuting Hamiltonians. Here we single out the Hamiltonian as the observable that
drives the dynamics and other ones as the collection that define the given statistical
system such as volume, mole number and other observables. However our first motto
``Observation is a moment map." is in the similar spirit as that of \cite{kapranov:thermodynamics}.
\end{rem}

We find it convenient for the further discussion to introduce the following intuitive terminology.
\begin{defn}[Observation (aka Measurement)]
For a given local observable $F$ on SPS,
we define its \emph{observation}, denote by $\CO_F$,
to be the function of taking the macroscopic average
$$
\CO_F: \CP(\Gamma) \to \R
$$
given by $\CO_F(\rho) = \int_\Gamma F \, \rho$.  We will equally
use the term \emph{measurement}  with the same
meaning as the term
\emph{observation}, depending on the given circumstances.
\end{defn}
 
Note that as a statistical observable, our relative information density
$I(\rho) = -\log \frac{\del \rho}{\del \nu_0}$,
is a universal observable in SPS (as a measure space
equipped with the Liouville measure) whose observation is the relative information entropy.
Because of the constraint $\int_\Gamma \rho = 1$, it will not
carry any thermodynamic conjugate partner  in our reduction
process. This is responsible for the odd dimensionality of TPS:
\emph{the $\CF$-reduced entropy is
 the generating function of a thermodynamic equilibrium, a Legendrian  submanifold in the one-jet bundle.}

The aforementioned action of $\Diff(\Gamma, \nu_0)$ preserves
both the relative information entropy \eqref{eq:CS-intro} and
the observations of observables of the given statistical system 
$\CF$ which enables us to apply the standard Marsden-Weinstein reduction:
Their collective observations denoted by
$$
\CO_\CF = \{\CO_{F_1},\cdots \CO_{F_n}\}: \CP(\Gamma) \to \R^n
$$
canonically lift to the functions
$$
\widetilde \CO_\CF = \{\widetilde{\CO}_{F_1}, \cdots, \widetilde{\CO}_{F_n}\}: T^*\CP(\Gamma) \to \R^n
$$
which are Poisson-commuting.
We then formally extend the symplectic reduction construction to the infinite dimensional
\emph{cotangent bundle} $T^*\CP(\Gamma)$ of the set $\CP(\Gamma)$
of probability densities on $\Gamma$, as usually done in the Hamiltonian formalism of the field theory.
See Remark \ref{rem:bracket} for the expression of the canonical
symplectic structure on $T^*\CP(\Gamma)$.  (We highlight the
fact that we consider a field theory on the \emph{affine space} of
$\CP(\Gamma)$, \emph{not} on a vector space on which a
usual Hamiltonian formalism of a field theory takes place.)

Regarding $\widetilde \CO_\CF$ as a moment map, we
apply the Marsden-Weinstein reduction to each (regular) value of the moment map $\widetilde{\CO}_\CF$.
This leads us to the construction of a (infinite dimensional) \emph{symplectic fiber bundle}
$$
\CM^\CF \to B_{\CF}^\circ \subset \R^n
$$
(over some open subset $B_{\CF}^\circ$ consisting of regular values of $\widetilde{\CO}_{\CF}$). 
We regard this collective reduced space as
the \emph{mesoscopic phase space} in between KTPS and TPS
on which the canonical reduction, denoted by $\CS_{\CF}^{\text{\rm red}}$,
of the function $\CS$ is defined.

In the literature of thermodynamics,  the (equilibrium) thermodynamics  
 is attempted to be derived  from the first principle of statistical 
 mechanics utilizing the information entropy in the point of view of Shannon's information theory.
 For example in \cite{Jaynes:InfoStat}, \cite{Mrug:TPS}, \cite{MNSS},
the authors describe a thermodynamics equilibrium in terms of the
information entropy and contact geometry. However the description is not satisfactory enough in that
in their derivation it is assumed  that the associated Legendrian submanifold is \emph{graph-like}
relative to the given variables, i.e., it is assumed that
the Legendrian submanifold $R_{\CF;\CS}$ admits a single-valued potential
function $f: B_{\CF} \to \R$ such that
$$
R_{\CF;\CS} = \{(\mu, df(\mu), f(\mu)) \mid \mu \in B_{\CF}\}.
$$
\begin{defn}[$C^0$-holonomic] Let $B$ be a smooth (possibly infinite
dimensional) manifold. We say a Legendrian submanifold
$R \subset J^1B$ is \emph{$C^0$-holonomic} if it is graph-like, i.e., if
it is  the image of the 1-jet map $j^1f: B \to J^1B$ of a differentiable function $f: B \to \R$.
\end{defn}

\begin{rem} Our usage of the term $C^0$-holonomic 
is the $C^0$-version of the standard notion of
\emph{holonomic section} $\varphi:V \to J^rX$ which is the $k$-jet
of some $C^k$-section $f: V \to X$ for the given $C^\infty$-fibration.
(See \cite[Section 1.1.1]{gromov:relation} for the definition.)
In our current situation, we consider the 1-jet bundle $J^1B \to B$
and regard $R \subset J^1B$ as a (possibly partially defined)
multi-valued section $B \to J^1B$.  This terminology $C^0$-holonomic being
introduced, we will just simply call \emph{holonomic}
instead throughout  the paper when  $R$ is graph-like.
\end{rem}

The following globalization of the description of the thermodynamic equilibrium complements 
the ones described in \cite{Jaynes:InfoStat}, \cite{Mrug:TPS}, \cite{MNSS} by allowing non-holonomic
equilibrium states in the description thereof. This leads us to our second motto:

\begin{center}
\bf ``Relative information entropy is \\
the generating function
of thermodynamic equilibria.''
\end{center}

Now we can turn the formal restatement briefly mentioned at the end of the previous subsection into the following precise mathematical statement.

\begin{thm} [Universality of information entropy]\label{thm:universality}
Let $\CF = \{F_1, \cdots, F_n\}$ be any observable system on SPS.
\begin{enumerate}
\item The \emph{lifted} collective observation
$$
\widetilde{\CO}_\CF = (\widetilde{\CO}_{F_1}, \cdots,  \widetilde{\CO}_{F_n})
$$
is the moment map whose reduced spaces define a symplectic
fiber bundle $\CM^\CF$ over an open subset of  $\R^n$, which
consists of the set of regular values of the collective observation
map $\CO_\CF$.

\item The \emph{lifted} relative entropy function $\widetilde \CS: T^*\CP(\Gamma) \to \R$ induces
a generating function $\CS_\CF^{\text{\rm red}}: \CM^\CF \to \R$ that
generates the thermodynamic equilibrium
 $R_{\CF;\CS}$ in $J^1\mathfrak{g}_{\CO_\CF}^* \cong J^1\R^n$ associated to the
 system $\CF$.  (See Section \ref{sec:generating} for the proofs.)
\end{enumerate}
We call the function $\CS^{\text{\rm red}}_\CF: \CM^\CF \to \R$ the
\emph{$\CF$-reduced information entropy}.
\end{thm}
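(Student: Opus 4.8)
The plan is to obtain (1) as a Marsden--Weinstein reduction for an abelian translation action, and (2) as an application of the method of generating families, identifying the outputs with the Gibbs states and the thermodynamic Legendrian.

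For (1), I would first pin down the Hamiltonian action on $T^*\CP(\Gamma)$. Each observation $\CO_{F_i}$ is an affine-linear function on $\CP(\Gamma)$, so its canonical lift $\widetilde{\CO}_{F_i} = \pi^*\CO_{F_i}$ (with $\pi\colon T^*\CP(\Gamma)\to\CP(\Gamma)$) has, for the canonical symplectic form, Hamiltonian vector field equal to the constant vertical translation by the covector $d\CO_{F_i}|_\rho = [F_i]$ (the class of $F_i$ modulo constants). Functions pulled back from the base Poisson-commute, so $\widetilde{\CO}_{F_1},\dots,\widetilde{\CO}_{F_n}$ generate a commuting family, hence an abelian Hamiltonian $\R^n$-action by fibrewise translation with moment map exactly $\widetilde{\CO}_\CF = \pi^*\CO_\CF$. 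Since this moment map depends only on the base point, its level set $\widetilde{\CO}_\CF^{-1}(\mu)$ is the cotangent bundle over $C_\mu := \CO_\CF^{-1}(\mu)\subset\CP(\Gamma)$, and $\mu$ is a regular value exactly when $d\CO_\CF|_\rho$ is surjective for every $\rho\in C_\mu$, equivalently when no nontrivial combination $\sum_i c_i F_i$ is constant on $\supp\rho$; the set $B_\CF^\circ$ of such $\mu$ is the open interior of the image of $\CO_\CF$. On a regular level the infinitesimal action $t\mapsto\sum_i t_i[F_i]$ is injective, hence free, and reducing identifies the quotient of each cotangent fibre $T^*_\rho\CP(\Gamma)$ by the conormal directions $\span\{[F_i]\}$ with $T^*_\rho C_\mu$; thus $\CM^\CF_\mu = \widetilde{\CO}_\CF^{-1}(\mu)/\R^n \cong T^*C_\mu$ with its canonical symplectic form. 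Letting $\mu$ range over $B_\CF^\circ$ assembles these into the (mesoscopic) symplectic fibre bundle $\CM^\CF\to B_\CF^\circ$, which gives (1).

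For (2), I would note that $\CS = \pi^*\CS$ is constant along the translation orbits, hence descends to $\CS^{\text{\rm red}}_\CF\colon\CM^\CF\to\R$ whose restriction to each reduced fibre $T^*C_\mu$ is pulled back from $\CS|_{C_\mu}$. Viewing $\CS^{\text{\rm red}}_\CF$ as a generating family over $B_\CF^\circ\subset\R^n\cong\mathfrak{g}^*_{\CO_\CF}$, the fibre-critical condition reduces to $d(\CS|_{C_\mu})=0$, i.e.\ to constrained extremality of the entropy on the level set $C_\mu$; the Lagrange-multiplier computation identifies these critical points with the Gibbs densities $\rho_\mu = Z^{-1}e^{-\sum_i\lambda_i F_i}\nu_0$, the multipliers $\lambda=\lambda(\mu)$ playing the role of the conjugate intensive variables. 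The recipe of Appendix~\ref{sec:generating} then sends such a point to $(\mu,\eta,z)\in J^1\R^n$ with value $z=\CS(\rho_\mu)$ and horizontal derivative $\eta=\partial_\mu\CS^{\text{\rm red}}_\CF=\lambda(\mu)$; this last equality is the thermodynamic first-law relation and is precisely the statement $dz=\sum_i\eta_i\,d\mu_i$ that makes the image $R_{\CF;\CS}$ Legendrian in $J^1\R^n$.

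The main obstacle is the generating-family step, because $\CS^{\text{\rm red}}_\CF$ is degenerate along the new momentum directions of $\CM^\CF_\mu=T^*C_\mu$: being pulled back from the zero-section copy of $C_\mu$, its fibre derivative vanishes identically in those directions, so it is not a Morse family in the naive sense and its fibre-critical locus is the whole momentum fibre over each Gibbs state. I would resolve this by checking that both the value $z$ and the horizontal derivative $\eta$ are independent of the momentum directions, so the construction factors through the restriction of $\CS^{\text{\rm red}}_\CF$ to the zero section --- where it coincides with the constrained extremization of $\CS$ on the $C_\mu$ --- and its image in $J^1\R^n$ is exactly the thermodynamic Legendrian $R_{\CF;\CS}$; the transversality needed to conclude that $R_{\CF;\CS}$ is a genuine immersed Legendrian is then checked there. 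The remaining difficulties are the infinite-dimensional functional-analytic points (weak nondegeneracy of the form on $T^*\CP(\Gamma)$, smoothness of $C_\mu$ and of the reduction, and the failure of Sard's theorem), which, following the conventions of the introduction, I would treat formally. Finally, nothing forces $\mu\mapsto(\lambda(\mu),\CS(\rho_\mu))$ to be single-valued nor the projection $R_{\CF;\CS}\to B_\CF^\circ$ to be a diffeomorphism, which is exactly the room left for non-$0$-holonomic equilibria and the Maxwell/phase-transition phenomena treated later.
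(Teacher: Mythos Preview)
Your proof is correct and follows the same two-step strategy as the paper's Sections~\ref{sec:mesoscopic-reduction}--\ref{sec:reduced-entropy} (Marsden--Weinstein reduction for the abelian action generated by the $X_{\CO_{F_i}}$, then the generating-family construction of Appendix~\ref{sec:generating}), executed somewhat more directly: you compute the moment map as $\pi^*\CO_\CF$ by the standard cotangent-bundle argument and identify the reduced fibres concretely as $T^*C_\mu$, whereas the paper routes the computation through the universal $\Diff(\Gamma,\nu_0)$ moment map (Theorem~\ref{thm:momentmap-formula}) to arrive at formula~\eqref{eq:J-FS}, and leaves the reduced space in abstract form. Your explicit handling of the degeneracy of $\CS^{\text{\rm red}}_\CF$ along the cotangent directions of $T^*C_\mu$ makes precise what the paper addresses only by the remark that $\CS^{\text{\rm red}}_\CF([\rho,\beta])$ depends on $\rho$ alone, and your identification of the horizontal derivative with the Lagrange multipliers $\lambda(\mu)$ is exactly the content of the paper's Step~3 and of Corollary~\ref{cor:MNSS}.
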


\subsection{Volume, a non-local observable}

After these we revisit the previously well-studied thermodynamical models through
the eyes of our reduction. We first provide how to derive the volume variable $V$
as one of the variables of TPS from the first principle of SPS and KTPS.

In this regard, we would like to highlight that the volume is
\emph{not} a local observable in
statistical mechanics. It is not a \emph{dynamical} variable
in the level of SPS  but only a \emph{system} variable.
The volume variable appears to describe the state of particles confined is an
\emph{expandable container}. In the information theoretic
point of view, the volume is a parameter describing the
\emph{state of the container}, which is a state determined by
another \emph{event} triggered by an external 
force independent of the \emph{state of particles} 
contained in the container.

Because of this, we need to modify the definitions of SPS
and KTPS by incorporating
this difference between the volume variable and other local
observables in our kinetic theory framework, especially in our consideration 
of the probability distributions and relevant information entropy: The relevant KTPS will be now
$$
T^*\CP(\Gamma \times \R_+).
$$
See Section \ref{sec:volume} for the details. The volume variable
does not appear in SPS but it
will appear as a \emph{phase space variable}
 in the macroscopic TPS level .  It  then
carries a conjugate partner, which is nothing but the \emph{pressure}.

\subsection{Maxwell construction and phase transition}

The equilibrium state arising in Theorem \ref{thm:universality}
 is not necessarily $C^0$-holonomic
in terms of the given preferred variables.
We will illustrate by the van der Waals model that non-holonomicity indeed occurs in
a physical model which is responsible for the phase transition \cite{maxwell}.

Then in a thermodynamic equilibrium or nearby,  say in the system $\CF$ of ideal gas or
van der Waals gas equation, the relevant state variables are given by
$$
(U, S, T, P, V)
$$
where $U$ is energy, $S$ the thermodynamic entropy, $T$ temperature,
$P$ pressure and $V$ is the volume. The first law of equilibrium thermodynamics
is expressed as
$$
dS = \frac1T dU + \frac{P}{T} dV
$$
in the entropy representation. In our construction,
the relevant thermodynamic equilibrium is given by
the Legendrian submanifold $R$ generated by the aforementioned
(reduced) relative entropy function
$\CS_\CF^{\text{\rm red}}: M^\CF \to \R$ where
$$
\xymatrix{ \CM^\CF \ar[r]^{\CS_\CF^{\text{\rm red}}} \ar[d] & \R \\
\R^2
}
$$
 is a (infinite dimensional) symplectic fibration.

Since the aforementioned thermodynamic equilibrium may not be
globally holonomic in the preferred  variables, which are $P, \, T$
for the van der Waals model, the equilibrium equation cannot be globally
solved  in a direct way by simple quadratures in terms of the preferred variables.
An effort of correcting this deficiency
is precisely the celebrated Maxwell construction \cite{maxwell} in the equilibrium thermodynamics
for which the Maxwell equal-area law plays an important role.

It turns out that the Maxwell construction is  the same process as
that of  constructing the so called \emph{graph selector} out of non-holonomic
Lagrangian (resp. Legendrian) submanifold
in symplectic (resp.  in contact) geometry and in
dynamical systems. (See \cite{oh:jdg}, \cite{arnaud}, \cite{bernard-santos}
and \cite{amorim-oh-santos}
for example.) In Section \ref{sec:maxwell}, we relate the construction of a graph
selector with Gibbs free energy $G = U + PV -TS$ as its thermodynamic potential
to the relevant Maxwell construction and the phase transition
of the isotherms with respect to the pressure-volume diagram
(abbreviated as the $PV$-diagram). The outcome will be a
construction of
thermodynamic potential that is Lipschitz continuous but not differentiable
across the \emph{Maxwell pressure} at which Maxwell equal area law holds.
(See Theorem \ref{thm:maxwell-construction}.)

\subsection{Organization of the paper}

A brief organization of the paper is in order. Section \ref{sec:KTPS} introduces the
spaces SPS and KTPS and Section \ref{sec:Diff-symmetry} explains the underlying symmetry
group of KTPS. Section \ref{sec:SPS}--\ref{sec:time-evolution}
explains the Marsden-Weinstein
reduction of KTPS for the system of \emph{local observables}.
Section \ref{sec:SPS} explains what we mean by local observables
and observations and their invariance property on SPS.
Section \ref{sec:mesoscopic-reduction} carries out the
Marsden-Weinstein connection following our first motto and
construct the $\CF$-reduced KTPS $M^\CF$.
Section \ref{sec:reduced-entropy} constructs the $\CF$-reduced
information entropy $\CS_\CF^{\text{\rm red}}$ and shows that
it generates a thermodynamic equilibrium in the sense of the theory of
generating functions (or of the Morse family) See Appendix
\ref{sec:generating}). Then
Section \ref{sec:time-evolution} derives the simplest
 Bolzmann equation without collision terms
 associated to the total energy Hamiltonian $H$ on SPS as the internal
 mesoscopic dynamics of the obtained thermodynamic equilibrium.

Then in Section \ref{sec:volume}, we
explain a modification of the reduction process for the system involving non-local observables
of the volume variable of the gas models. In Section \ref{sec:ideal-gas}, we derive the
gas equations for the ideal gas. In Section \ref{sec:maxwell}
we identify the Maxwell construction as the process of selecting a graph-selector
in the sense of \cite{oh:jdg}, \cite{arnaud}, \cite{oh:C0-continuity}
and interpret the relevant phase transition for the van der Waals model in our framework. 
In Section \ref{sec:discussion} we have some
discussion on the dynamical perspective of our geometric study mostly of the kinematics 
of nonequilibrium thermodynamics, and
propose some future direction of research. In this regard, we
mention the relationship between our study and those given in
the recent interesting articles \cite{esen-grmela-pavelka-I},
\cite{esen-grmela-pavelka-II} by Esen-Grmela-Pavelka in particular.

Finally after the submission of the present paper, we have learned that
J.-M. Souriau \cite{souriau} provides a mathematical description of 
statistical mechanics and thermodynamics especially in the language of  
symplectic geometry and Poisson geometry. (See the article \cite{marle} of Marle for a survey of   \cite{souriau}.) In particular, Chapter IV of the book \cite{souriau} describes the 
thermodynamic equilibria based on the Lie group theory which shares some common spirit with 
our first motto ``Observation is a moment map''.  However the spirit of \cite{souriau}, \cite{marle}
is quite different from that of the present paper in that the main emphasis of the present paper lies 
in its systematic derivation of thermodynamics from statistical mechanics 
and in its contact geometric perspective motivated by the works of Jaynes \cite{Jaynes:InfoStat}
and Mrugala \cite{MNSS} in which the role of the (relative) information entropy 
is emphasized as fundamental.

\begin{center}
\bf Notations and Vocabularies
\end{center}

\begin{itemize}
\item SPS: statistical phase space $(\Gamma, \omega_\Gamma) = (T^*\R^{3N}, \Omega_0)$,
\item $\CP(\Gamma)$: the space of probability distributions,
\item $I= I(\rho)$: Relative information density (function),
\item $\CS:\CP(\Gamma) \to \R$ : relative information entropy, $\CS(\rho) = - \int_\Gamma
I(\rho) \rho$,
\item $\widetilde{\CS}:T^*\CP(\Gamma) \to \R$ : the lifted relative information entropy 
$\widetilde \CS: = \CS \circ \pi$ where $\pi: T^*\CP(\Gamma) \to \CP(\Gamma)$ is the natural projection,
\item KTPS: kinetic theory phase space $T^*(\CP(\Gamma)$,
\item $\CF = \{F_i\}_{i=1}^n$: observable system or statistical system,
\item $\CO_F$: the observation $\CO_F:\CP(\Gamma) \to \R$ of the observable $F$ 
defined by $\CO_F(\rho): = \int F\, \rho$,
\item $\widetilde{\CO}_F$: the lifted observation $\CO_F$ 
defined by $\widetilde \CO_F: = \CO_F \circ \pi$,
\item $\CM^\CF$: $\CF$-reduced KTPS,
\item $\CS^{\text{\rm red}}_\CF: \CM^\CF \to \R$: $\CF$-reduced entropy function,
\item TPS: thermodynamic phase space $J^1B$ for some open
subset $B \subset \R^n$,
\item $R_{\CF;\CS}$: the $\CF$ thermodynamic equilibrium.
\end{itemize}

\section{Statistical phase space and kinetic theory phase space}
\label{sec:KTPS}

Let $(\R^{2K} = T^*\R^K ,\Omega_0)$ be the Hamiltonian phase space equipped with
the standard symplectic form
$$
\omega_\Gamma = \sum_{i=1}^{K} dx^i \wedge dy^i.
$$
Our main interest will be the particle phase space
in gas dynamics, i.e., the classical \emph{statistical phase space}
 ($\SPS$) of many body systems where $K = 3N$ with $N$ being
 the number of particles in the system.

\begin{rem}
In our derivation, we will ignore the effect of collisions of the particles
in the story. To incorporate the effect of collisions as in the collisional kinetic theory
\cite{villani:kinetic}, we should consider the statistical phase space such as
$$
T^*(\R^{3N}\setminus \Delta_n) =: \Gamma
$$
where $\Delta_n$ is the big diagonal
$$
\Delta_N = \{({\bf q}_1, \ldots, {\bf q}_N) \in \R^{3N} \mid {\bf q}_i \in \R^3, \,  {\bf q}_i = {\bf q}_j, \, i \neq j\}.
$$
\end{rem}
\begin{rem}
The whole discussion can be applied to any smooth
manifold with a reference smooth measure, e.g., to any symplectic manifold equipped with
the Liouville measure on it, such as $\R^{3N} \setminus \Gamma$. However in the present paper, we will restrict ourselves to
the standard $\SPS$, except when we consider the nonlocal observable, the volume  in Section \ref{sec:ideal-gas} where we
employ a more general configuration space.
\end{rem}

We regard $\SPS$ $\Gamma$ as a symplectic manifold and denote it by
$$
(\Gamma, \omega_\Gamma), \quad \omega_\Gamma = \Omega_0.
$$
We may use $\Omega_0$ and $\omega_\Gamma$ as we feel more appropriate without distinction depending on circumstances.
Following the notation of \cite{MNSS}, we denote by $d\Gamma$ the density form of
the Lebesgue  measure of $\Gamma \subset \R^{2K}$ which coincides with the Liouville volume form
$$
\frac1{K!} \Omega_0^K =d \Gamma.
$$
Again we will also denote by $\nu_0$ the associated Liouville measure (or the Lebesgue measure). We also use $\nu_0$ and $d\Gamma$ interchangeably.

For a given signed density $D$, we consider the decomposition $D = D^+ - D^-$
where $D^\pm \geq 0$ are the positive and negative parts of $D$.

\begin{defn}  Write
$|D| = D^+ + D^-$ and define  $\CD(\Gamma)$ to be the set of densities $D$ having finite total mass
$$
M(D) := \int_\Gamma |D| < \infty.
$$
We denote by  $\CD^+(\Gamma)$ its subset consisting of nonnegative densities, and by $\CP(\Gamma)$
the subset consisting of probability densities, denoted by $\rho$,
whose total mass is 1, i.e., $M(\rho) = 1$.
\end{defn}

In this section, we set the state for applying the theory of symplectic reduction to an infinite dimensional
symplectic manifold, the cotangent bundle of $\CP(\Gamma)$,
$$
T^*\CP(\Gamma)
$$
as the Hamiltonian phase space. 
\begin{defn}[Kinetic theory phase space (KTPS)]
We call the cotangent bundle
$T^*\CP(\Gamma)$
equipped with its canonical symplectic structure the \emph{kinetic theory phase space} (KTPS).
\end{defn}

\begin{rem} \label{rem:bracket} A precise form of the canonical symplectic
structure does not play much role in the present paper except its general properties.
The most convenient way of writing down the (weak)
symplectic structure on $T^*\CP(\Gamma)$ as in the
Hamiltonian formalism of the general field theory may be in terms
of the associated Poisson bracket: For given real-valued functions
$\CF = \CF(\rho,\beta), \, \CG = \CG(\rho,\beta)$ on $T^*\CP(\Gamma)$, the canonical Poisson bracket between $\CF$ and $\CG$ are
given by
\begin{equation}\label{eq:bracket-FG}
\{\CF, \CG\}(\rho,\beta)
= \int_\Gamma \,
\left(\frac{\delta \CF}{\delta \rho}\frac{\delta \CG}{\delta \beta}
- \frac{\delta \CG}{\delta \rho} \frac{\delta \CF}{\delta \beta}\right)\,
\rho
\end{equation}
where  $\frac{\delta \CF}{\delta \rho}$ and $\frac{\delta \CF}{\delta \beta}$
stand for the variational derivative.
(See e.g., \cite{marsden-weinstein:vortices} for such a formula
and the precise meaning of the variational derivative.)
\end{rem}

Note that $\CD^+(\Gamma)$ is an open subset of a \emph{linear space}, and hence
its tangent space of $\CD^+(\Gamma)$ at $D$ can be identified with $\CD(\Gamma)$
which is naturally a vector space. On the other hand $\CP(\Gamma)$ is not a linear space but
a convex affine hypersurface
given by the integral constraint $\int \rho = 1$.
Because of this, we start our discussion with the cotangent bundle of $\CD^+(\Gamma)$ to 
study the symplectic geometry of $T^*\CP(\Gamma)$.

\begin{notation}\label{notation:dot-CD}
We adopt physicists' notation $\dot D$
to represent a tangent vector or the first variation of $D$ in $\CD^+(\Gamma)$.
Likewise we denote by $\dot \CD(\Gamma)$ the representative vector space $\CD(\Gamma)$
of the fiber of the tangent bundle $T\CD^+(\Gamma)$. Then we have
$$
\dot \CD(\Gamma) \cong T_D \CD^+(\Gamma) \cong \CD(\Gamma)
$$
for all $D \in \CD^+(\Gamma)$.
\end{notation}

Thanks to the presence of the Liouville volume form, any density $D$ has the form
$$
D = f\, d\Gamma
$$
for a nonnegative function $f: \Gamma \to \R^+$ with its $L^1$-norm finite,
provided $M(D) < \infty$.
The space $\CD^+(\Gamma)$ is a principal $\R_+$ homogeneous space that gives rise to
the following nonlinear exact sequence
\begin{equation}\label{eq:M}
0 \longrightarrow \CP(\Gamma) \longrightarrow \CD^+(\Gamma)  \stackrel{M}{\longrightarrow} \R_+ \to 0
\end{equation}
where the last map is given by the function
$D \to M(D)$.
By taking the logarithm $\log M(D)$, we get another exact sequence
\begin{equation}\label{eq:logM}
0 \longrightarrow \CP(\Gamma) \longrightarrow \CD^+(\Gamma)
\stackrel{\log M}{\longrightarrow} \R \to 0.
\end{equation}

We note that the $\R^+$-action on $\CD^+(\Gamma)$ is given by the map $(c,D) \mapsto c\, D$
which induces the action on $T\CD^+(\Gamma)$ given by
$$
(c, (D,\dot D)) \mapsto (cD, dR_c(\dot D)):
$$
Here we denote by
$R_c: \CD^+(\Gamma) \to \CD^+(\Gamma)$
the multiplication map by a constant $c > 0$.
We denote its dual action on $T^*\CD^+(\Gamma)$ by
$$
(c, (D,\alpha)) \mapsto (c\, D, (dR_c^{-1})^*\alpha)).
$$
In the level of tangent spaces, \eqref{eq:logM} also induces the exact sequence
\begin{equation}\label{eq:exact-sequence-D+}
0 \to T_\rho \CP(\Gamma) \to T_\rho \CD^+(\Gamma) \to
T_\rho \CD^+(\Gamma)/ T_\rho \CP(\Gamma)  \to 0.
\end{equation}
By setting the conormal space 
$$
\left(T_\rho \CD^+(\Gamma)/ T_\rho \CP(\Gamma)\right)^* \cong  (T_\rho \CP(\Gamma))^\perp
=: \nu_\rho^*\CP(\Gamma),
$$
its dual sequence is given by 
$$
0 \to \nu_\rho^*\CP(\Gamma) \to T_\rho^*\CD^+(\Gamma) \to T_\rho^*\CP(\Gamma) \to 0
$$
at each $\rho \in \CP(\Gamma)$. This sequence naturally splitts by the map 
\begin{equation}\label{eq:splitting-map}
T_\rho^* \CD^+(\Gamma)  \to \nu_\rho^* \CP(\Gamma);  \quad \beta \mapsto \beta - \beta(\rho) M
\end{equation}
when the mass map $M$ is regarded as a linear functional on 
$\dot D(\Gamma) = T_D \CD^+(\Gamma)$.

\section{$\Diff(\Gamma, \nu_0)$-symmetry and its moment map}
\label{sec:Diff-symmetry}

We first recall the following well-known fact.

\begin{lem} Any diffeomorphism of $\CP(\Gamma)$
canonically lifts to a symplectic diffeomorphism of $T^*\CP(\Gamma)$. 
In particular the induced action of the microscopic action of
$\Diff(\Gamma)$  on $\CP(\Gamma)$ by push-forward
is a symplectic action on $T^*\CP(\Gamma)$.
\end{lem}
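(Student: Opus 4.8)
The plan is to reduce both claims to the single classical fact that the tautological (Liouville) one-form on a cotangent bundle is natural under cotangent lifts, and then to note that push-forward of densities is an honest diffeomorphism of $\CP(\Gamma)$. Throughout I work formally, treating $\CP(\Gamma)$ as a smooth (infinite dimensional) manifold and suppressing the functional-analytic subtleties flagged in the introduction, so that the weak symplectic form on $T^*\CP(\Gamma)$ is $\omega = -d\theta$ for the canonical one-form $\theta$. Given $\phi \in \Diff(\CP(\Gamma))$, I would define its cotangent lift $\widehat\phi \colon T^*\CP(\Gamma) \to T^*\CP(\Gamma)$ by $\widehat\phi(\rho,\alpha) = \bigl(\phi(\rho),\, ((d\phi_\rho)^{-1})^*\alpha\bigr)$, transporting the base point by $\phi$ and the covector by the inverse transpose of $d\phi_\rho$. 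This covers $\phi$, meaning $\pi \circ \widehat\phi = \phi \circ \pi$ for the bundle projection $\pi$, and it is manifestly functorial, $\widehat{\id} = \id$ and $\widehat{\phi_1 \circ \phi_2} = \widehat{\phi_1} \circ \widehat{\phi_2}$, so each $\widehat\phi$ is a diffeomorphism with inverse $\widehat{\phi^{-1}}$.

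The heart of the argument is the identity $\widehat\phi^*\theta = \theta$, where $\theta_{(\rho,\alpha)}(\xi) = \alpha\bigl(d\pi_{(\rho,\alpha)}(\xi)\bigr)$. Evaluating $(\widehat\phi^*\theta)_{(\rho,\alpha)}(\xi) = \theta_{\widehat\phi(\rho,\alpha)}\bigl(d\widehat\phi(\xi)\bigr)$, I would substitute the differentiated covering relation $d\pi \circ d\widehat\phi = d\phi \circ d\pi$ together with the definition of the covector component of $\widehat\phi$; then the two occurrences of $d\phi_\rho$ cancel through $(d\phi_\rho)^{-1} \circ d\phi_\rho = \id$, leaving $\alpha(d\pi(\xi)) = \theta_{(\rho,\alpha)}(\xi)$. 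Consequently $\widehat\phi^*\omega = -d(\widehat\phi^*\theta) = -d\theta = \omega$, which is the first assertion.

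For the second assertion I would first check that push-forward of densities gives a diffeomorphism of $\CP(\Gamma)$. For $\psi \in \Diff(\Gamma)$ the push-forward $\psi_*$ preserves total mass, since $M(\psi_* D) = \int_\Gamma \psi_* D = \int_\Gamma D = M(D)$ by the change-of-variables formula; hence $\psi_*$ restricts to a self-map of $\CP(\Gamma)$ with smooth inverse $(\psi^{-1})_*$, and $\psi \mapsto \psi_*$ is the microscopic homomorphism $\Diff(\Gamma) \to \Diff(\CP(\Gamma))$. Composing it with the lift $\phi \mapsto \widehat\phi$ and invoking functoriality yields a group homomorphism $\Diff(\Gamma) \to \Symp(T^*\CP(\Gamma))$, that is, a symplectic action; this is exactly the subgroup $\widehat{\Diff}(\Gamma)$ of Notation~\ref{notation:tilde-CG}.

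I expect no genuine obstacle in the computation itself, which is the standard naturality of the Liouville form. The only point demanding care is the infinite-dimensional, weak symplectic setting: one must confirm that $\theta$, the covering identity, and the smoothness of $\psi_*$ on the affine manifold $\CP(\Gamma)$ (whose tangent spaces are the mean-zero signed densities) all remain meaningful at the formal level. This is precisely the degree of rigor adopted throughout the paper, so the lemma is indeed, as stated, essentially immediate once the cotangent lift is written down.
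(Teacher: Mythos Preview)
Your proposal is correct and follows essentially the same approach as the paper: both define the cotangent lift by the formula $(\rho,\alpha)\mapsto(\phi(\rho),((d\phi_\rho)^{-1})^*\alpha)$ and appeal to the standard fact that this is symplectic. The paper's proof is considerably more terse---it simply states the lift and asserts it is symplectic, also noting fiber-linearity and preservation of the zero section---whereas you spell out the naturality $\widehat\phi^*\theta=\theta$ via the covering relation and explicitly verify that push-forward preserves total mass; these extra details are welcome but not a different route.
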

\begin{proof}  Recall that for any diffeomorphism $\psi:N \to N$ of
a manifold $N$ (whether it is of finite dimension or not), the formula
$$
\widehat \psi(x, \beta): = (\psi(x), (d_x\psi^{-1})^*(\beta))
$$
provides the lifting to $T^*\CP(\Gamma)$ which is symplectic.
Furthermore this lifting is linear in fiber and in particular preserves the zero section. 
This shows that it canonically induces a symplectomorphism on $T^*\CP(\Gamma)$.
\end{proof}

We will use the following notation systematically.

\begin{notation}\label{notation:tilde-CG}
For any subgroup $\CG \subset \Diff(\CP(\Gamma))$,  we denote
by
$$
\widehat \CG \subset \Symp(T^* \CP(\Gamma))
$$
the associated lifting to $T^*\CP(\Gamma)$.
\end{notation}

Next we  describe the  moment map of this
symplectic action of $\Diff(\Gamma)$ on $T^*\CP(\Gamma)$ more explicitly.
For this purpose, we
note that $\Diff(\Gamma)$ action on $\CP(\Gamma)$ is nothing but the restriction of
the  natural action on the bigger space $\CD^+(\Gamma) \supset \CP(\Gamma)$:
For $\phi \in \Diff(\Gamma)$, it induces an action
by the push-forward
$
(\phi,D) \mapsto \phi_*D
$
on $\CD^+(\Gamma)$ and in turn by the action
\begin{equation}\label{eq:action-on-T*P}
(\phi,(D,\alpha)) \mapsto \left(\phi_*D, (d\phi^{-1})^*\alpha\right)
\end{equation}
on $T^*\CD^+(\Gamma) $.

Considering the identity component of $\Diff(\Gamma)$, we can identify its tangent vector
with a vector field $X \in {\mathfrak X}(\Gamma)$ after a right translation.
Denote by
$$
\CJ_{\Diff(\Gamma)} : T^*\CD^+(\Gamma) \to \mathfrak{X}(\Gamma)^*
$$
the universal moment map of  the action of $\Diff(\Gamma)$ on
$T^*\CD^+(\Gamma)$. (See \cite[Theorem 4.2.10]{abraham-marsden}.)
Then it is determined by its defining equation
\begin{equation}\label{eq:J-on-T*D}
\langle \CJ, X \rangle = \alpha (\widehat X)
\end{equation}
where $\widehat X$ is the vector field of $\CD^+(\Gamma)$
associated to $X$ under the lifting of the
linearized action of $\widehat{\Diff}(\Gamma)$.

\begin{exm} The tangent vector $\widehat X \in T_D \CD^+(\Gamma) \cong \CD(\Gamma)$ associated to
$X \in \mathfrak X(\Gamma)$ has the canonical pairing
\begin{equation}\label{eq:alpha-paired-X}
\alpha( \widehat X) = \int_\Gamma  \langle \alpha, X \rangle \, d\Gamma
\end{equation}
with each element $\alpha \in T_D^* \CD^+(\Gamma) \cong \CD^*(\Gamma)$,
by the canonical pairing between (vector-valued) functions and (vector-valued) distributions
on $\Gamma$.
\end{exm}

We note that as a subset of $T_\rho\CD^+(\Gamma)$ each tangent vector at $D$ of the $\Diff(\Gamma)$-orbit
has the form $\CL_X D$ and so the tangent space of a $\Diff(\Gamma)$-orbit thereat
is given by the subspace
\begin{equation}\label{eq:X-rho}
\widehat{\mathfrak X}_0(\Gamma;D) = \left\{ \widehat X \in \CD(\Gamma) \, \Big| \, \int \CL_X D = 0 \right\}.
\end{equation}

We now consider the hypersurface $M^{-1}(C)$ of constant mass $C>0$
consisting of positive densities $D$.
Recall that we have the natural isomorphism for the cotangent space,
\begin{equation}\label{eq:Trho*CP}
T_D^* M^{-1}(C) \cong \left\{\dot D \in \CD (\Gamma) \, \Big| \,  \int_\Gamma \dot D = 0 \right\}^\perp \subset \CD(\Gamma)^*.
\end{equation}
With this canonical identification, we can summarize the above discussion into
the following explicit formula for the  moment map
of the canonical action of
$\Diff(\CP(\Gamma))$ on $T^*M^{-1}(C)$ as follows.

\begin{thm}[Universal moment map]
\label{thm:momentmap-formula} 
Consider the hypersurface $M^{-1}(C) \subset \CD^+(\Gamma)$.
The moment map of the action of $\Diff(\Gamma,\nu_0)$
$$
\CJ_{\Diff(\Gamma,\nu_0)}: T^* M^{-1}(C) \to
\widehat{\mathfrak X}_0(\Gamma;D)^*
$$
is expressed as
\begin{equation}\label{eq:Diff-momentmap}
\CJ_{\Diff(\Gamma,\nu_0)}(D, \beta) =
 \left(\beta - \frac{\beta(D)}{C} M\right)
\Big |_{{\mathfrak X}_0(\Gamma;D)}
\end{equation}
under the identifications \eqref{eq:X-rho} and \eqref{eq:Trho*CP}.
\end{thm}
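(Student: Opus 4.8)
The plan is to obtain \eqref{eq:Diff-momentmap} from the universal cotangent-lift moment map on the larger cone $T^*\CD^+(\Gamma)$ and then track two reductions: restricting the acting group from $\Diff(\Gamma)$ to the measure-preserving subgroup $\Diff(\Gamma,\nu_0)$, and passing from $\CD^+(\Gamma)$ to the constraint hypersurface $\CP(\Gamma)$. First I would invoke the general formula \eqref{eq:J-on-T*D} from \cite[Theorem 4.2.10]{abraham-marsden}: for the pushforward action on $\CD^+(\Gamma)$ the infinitesimal generator attached to $X\in\mathfrak X(\Gamma)$ is $\widehat X=\CL_X D\in T_D\CD^+(\Gamma)\cong\CD(\Gamma)$, so that $\langle\CJ_{\Diff(\Gamma)}(D,\alpha),X\rangle=\alpha(\widehat X)$. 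Writing $D=f\,d\Gamma$ and applying the canonical pairing \eqref{eq:alpha-paired-X}, this exhibits the moment map as nothing but the covector $\alpha$ restricted to the orbit tangent directions $\{\CL_X D\}$.

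Next I would restrict to vector fields satisfying $\CL_X\nu_0=0$, the Lie algebra of $\Diff(\Gamma,\nu_0)$. For such $X$ one has $\CL_X D=(Xf)\,d\Gamma$, and since $\CL_X D=d(\iota_X D)$ is exact, Stokes' theorem gives $\int_\Gamma\CL_X D=0$; hence every orbit direction lies in $T_\rho\CP(\Gamma)=\ker M$, in accordance with \eqref{eq:X-rho}. Consequently the moment map of $\Diff(\Gamma,\nu_0)$ is valued in $\widehat{\mathfrak X}_0(\Gamma;\rho)^*$ and depends on the covector only through its restriction to $\ker M$.

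The decisive step is the passage from $T^*\CD^+(\Gamma)$ to $T^*\CP(\Gamma)$ by dualizing the exact sequence \eqref{eq:exact-sequence-D+}. An element $\beta\in T_\rho^*\CD^+(\Gamma)=\CD(\Gamma)^*$ descends to $T_\rho^*\CP(\Gamma)$ modulo the conormal $\nu_\rho^*\CP(\Gamma)$, which I would identify as the line spanned by the mass functional $M$ (the annihilator of $\ker M$). Because the radial direction $D$ is transverse to $\CP(\Gamma)$, with $M(D)>0$, each class admits a unique representative vanishing on $D$, namely $\beta-\frac{\beta(D)}{M(D)}M$; a one-line check gives $\bigl(\beta-\frac{\beta(D)}{M(D)}M\bigr)(D)=0$ and shows it agrees with $\beta$ on $\ker M$. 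Substituting this normalized representative into the $\CD^+(\Gamma)$-formula of the first step and restricting to $\mathfrak X_0(\Gamma;\rho)$ yields \eqref{eq:Diff-momentmap}. I would also record that the scalar $\frac{\beta(D)}{M(D)}$ is exactly the Lagrange multiplier enforcing the constraint $\int_\Gamma\rho=1$, which is the conceptual reason the term appears.

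The main obstacle I anticipate is the bookkeeping of the infinite-dimensional identifications: matching $T_\rho\CD^+(\Gamma)\cong\CD(\Gamma)$ and $T_\rho^*\CP(\Gamma)$ with the subspace in \eqref{eq:Trho*CP}, confirming that the mass functional genuinely generates the conormal line, and verifying that the projection $\beta\mapsto\beta-\frac{\beta(D)}{M(D)}M$ is precisely the reduction forced by the hypersurface $\CP(\Gamma)$ rather than an ad hoc splitting. The functional-analytic points---weak nondegeneracy of the canonical form and convergence of the integral pairings---are set aside by the paper's stated convention, so the argument stays at the level of formal symplectic geometry.
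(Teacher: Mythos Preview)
Your proposal is correct and follows the same route as the paper: the paper presents Theorem~\ref{thm:momentmap-formula} as a summary of the preceding discussion (the cotangent-lift moment map \eqref{eq:J-on-T*D} on $T^*\CD^+(\Gamma)$, the pairing \eqref{eq:alpha-paired-X}, and the identifications \eqref{eq:X-rho}, \eqref{eq:Trho*CP}) rather than giving a separate formal proof, and your argument fills in exactly those details. Your explicit identification of the conormal line with $\R\langle M\rangle$ and of $\beta-\frac{\beta(D)}{M(D)}M$ as the unique representative vanishing on the radial direction is in fact more detailed than what the paper writes; note, though, that once you restrict to $\widehat{\mathfrak X}_0(\Gamma;\rho)\subset\ker M$ the $M$-term vanishes anyway, so the subtraction is a normalization of the representative rather than a correction visible on the orbit directions.
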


For example, this theorem applies to the probability space $\CP(\Gamma) = M^{-1}(1)$.

\begin{cor}
\label{cor:momentmap-formula} The moment map of the action of $\Diff(\Gamma,\nu_0)$
on $T^*\CP(\Gamma)$  
$$
\CJ_{\Diff(\Gamma,\nu_0)}: T^*\CP(\Gamma) \to
\widehat{\mathfrak X}_0(\Gamma;\rho)^*
$$
is given by
\begin{equation}\label{eq:Diff-momentmap}
\CJ_{\Diff(\Gamma,\nu_0)}(\rho, \beta) =
 \left(\beta - \beta(\rho) M\right)
\Big |_{{\mathfrak X}_0(\Gamma;\rho)}
\end{equation}
under the identifications \eqref{eq:X-rho} and \eqref{eq:Trho*CP}.
\end{cor}
We note that the map coincides with the splitting map \eqref{eq:splitting-map}
when $\rho$ is fixed.

\section{Statistical phase space and systems}
\label{sec:SPS}
Let us equip $T^*\R^K$  with the canonical symplectic form
$$
\omega_\Gamma = \sum_{i=1}^K dq_i \wedge dp_i, \quad K = 3N.
$$
Since any  symplectic diffeomorphism satisfies
$\phi^*\omega_\Gamma = \omega_\Gamma$, it also preserves the Liouville measure
$\phi^*d\Gamma = d\Gamma$.

We adopt the following terminologies coming from statistical mechanics.

\begin{defn}[Observables and observations]\label{defn:observation}
We call a function $F: \Gamma \to \R$ a \emph{local observable} and
a collection of functions
$$
\CF = \{F_1, \ldots, F_n\}
$$
a \emph{local observable system}.
\begin{enumerate}
\item Each  probability distribution $\rho \in \CP(\Gamma)$
with respect to which $F$ is an $L^1(\rho)$-function defines
the average
\begin{equation}\label{eq:CO}
\CO_F(\rho) = \langle F \rangle_\rho: = \int F\, \rho.
\end{equation}
\item
We call this macroscopic average of $F$  an \emph{observation}
of the local observable $F$ relative to $\rho$.
\item
We define a \emph{collective observation} to be the map
$
\CO_\CF: \CP(\Gamma) \to \R^n
$
given by
\begin{equation}\label{eq:collective-O}
\CO_\CF = (\CO_{F_1},\ldots, \CO_{F_n}).
\end{equation}
\end{enumerate}
\end{defn}

We will also denote the value of $\CO_{F_i}$
at $\rho$ by the standard notation $\CO_{\CF_i}(\rho)
= \langle F_i \rangle_\rho$ in statistical mechanics, when we feel
its use more suggestive.

We emphasize that the relative information entropy $\CS$ is the observation
associated to an observable, the \emph{relative information density},
 which is \emph{intrinsic} and \emph{universal}
in the sense that $\CS$ depends only on the probability density $\rho$ independent of other local observables.

\begin{defn}[Relative information entropy]
The relative information entropy  $\CS(\rho)$ 
(aka \emph{Kullback-Leibler divergence} $D_{KL}(\rho\|\nu_0)$) is defined to be
$$
\CS(\rho): = - D_{KL}(\rho\|\nu_0) = - \int \rho \log \frac{\del \rho}{\del \nu_0},\quad \nu_0 = d\Gamma.
$$
\end{defn}

The following is obvious. However we would like to remark that consideration of
\emph{relative information entropy} $- D_{KL}(\rho\|\nu_0)$ in the presence of the background
Liouville measure $\nu_0 = d\Gamma$, \emph{not} Shannon's original continuous entropy,
is crucial to have this invariance property.

\begin{prop}\label{prop:invariance-entropy}
The (relative) information entropy is invariant under the action of
$\nu_0$-measure preserving diffeomorphisms of $\SPS$ $\Gamma$. In particular it is preserved
under the action by symplectic diffeomorphisms on $\Gamma$.
\end{prop}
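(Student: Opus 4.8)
The plan is to reduce the statement to the transformation law for Radon--Nikodym derivatives under a measure-preserving diffeomorphism, after which the ``in particular'' clause is immediate from the measure-preserving property of symplectic maps already recorded in Section~\ref{sec:SPS}. Throughout I work formally, ignoring the functional-analytic subtleties on the infinite-dimensional spaces, in keeping with the conventions of the paper.

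First I would fix a $\nu_0$-measure preserving diffeomorphism $\phi \in \Diff(\Gamma,\nu_0)$, so that $\phi_*\nu_0 = \nu_0$, and write the probability density as $\rho = f\,\nu_0$ with $f = \frac{\del\rho}{\del\nu_0} \geq 0$. The action on $\CP(\Gamma)$ is by push-forward, $\rho \mapsto \phi_*\rho$. The key computation is the Radon--Nikodym derivative of the pushed-forward density: since $\phi_*(f\,\nu_0) = (f\circ\phi^{-1})\,\phi_*\nu_0 = (f\circ\phi^{-1})\,\nu_0$, I obtain
$$
\frac{\del(\phi_*\rho)}{\del\nu_0} = f\circ\phi^{-1}.
$$
Equivalently, the information measure is pointwise covariant, $I(\phi_*\rho)\circ\phi = I(\rho)$, which is the conceptual heart of the matter: the invariance of $\nu_0$ is precisely what cancels the Jacobian that would otherwise appear.

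Next I would substitute this into the definition $\CS(\rho) = -\int_\Gamma \rho\,\log\frac{\del\rho}{\del\nu_0}$, giving
$$
\CS(\phi_*\rho) = -\int_\Gamma (f\circ\phi^{-1})\,\log(f\circ\phi^{-1})\,\nu_0,
$$
and then apply the change-of-variables identity $\int_\Gamma (g\circ\phi^{-1})\,\nu_0 = \int_\Gamma g\,\nu_0$ with $g = f\log f$, which holds exactly because $\phi_*\nu_0 = \nu_0$. This returns $\CS(\phi_*\rho) = -\int_\Gamma f\log f\,\nu_0 = \CS(\rho)$, establishing the first assertion.

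Finally, the ``in particular'' clause follows with no further work: as already observed in Section~\ref{sec:SPS}, any symplectic diffeomorphism of $(\Gamma,\omega_\Gamma)$ satisfies $\phi^*\omega_\Gamma = \omega_\Gamma$, hence $\phi^* d\Gamma = d\Gamma$, so that $\phi \in \Diff(\Gamma,\nu_0)$ and the first part applies. I expect the only delicate point to be keeping separate the two logically distinct uses of $\phi_*\nu_0 = \nu_0$ --- once to identify the new Radon--Nikodym derivative as $f\circ\phi^{-1}$, and once to evaluate the resulting integral --- so as to avoid a spurious double application of the Jacobian factor. Beyond that bookkeeping, the proposition is genuinely elementary, and its real content is the emphasis in the accompanying remark: it is the \emph{relative} entropy $D_{KL}(\rho\|\nu_0)$, not Shannon's continuous entropy, that enjoys this invariance.
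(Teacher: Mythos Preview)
Your proof is correct and follows essentially the same route as the paper's own argument: both identify $\frac{\del(\phi_*\rho)}{\del\nu_0} = f\circ\phi^{-1}$ using $\phi_*\nu_0=\nu_0$, then apply the change-of-variables formula (again using $\phi_*\nu_0=\nu_0$) to conclude $\CS(\phi_*\rho)=\CS(\rho)$. Your presentation is in fact a bit cleaner in explicitly flagging the two separate uses of the measure-preserving hypothesis.
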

\begin{proof} 
Recall the action of a diffeomorphism $\phi: \Gamma \to \Gamma$  is given by
$(\phi, \rho) \mapsto \phi_*\rho$
and by the definition of the Radon-Nikodym derivative that
$$
\frac{\del (\phi_*\rho)}{\del \nu_0} = \frac{\del \rho}{\del \nu_0} \circ \phi^{-1}.
$$
Therefore we compute
\beastar
D_{KL}(\phi_*\rho\|\nu_0) & = & 
\int_\Gamma \phi_*\rho \log \left(\frac{\del (\phi_*\rho)}{\del \nu_0}\right)
 \\
& =& \int_\Gamma  \left(\left(\frac{\del \rho}{\del \nu_0} \circ \phi^{-1}\right)\, \nu_0\right)
\log \left(\frac{\del \rho}{\del \nu_0} \circ \phi^{-1}\right) \\
& =& \int_\Gamma  \left(\left(\frac{\del \rho}{\del \nu_0} \circ \phi^{-1}\right)\, \phi_*\nu_0\right)
\log \left(\frac{\del \rho}{\del \nu_0} \circ \phi^{-1}\right) \\
& = & \int_\Gamma \left ( \frac{\del \rho}{\del \nu_0} \, \nu_0\right)
\log \left(\frac{\del \rho}{\del \nu_0}\right) = D_{KL}(\rho\| \nu_0)
\eeastar
where the third equality follows since $\phi$ is assumed to preserve the $\nu_0$-measure. 
This finishes the proof.
\end{proof}

\section{Mesoscopic reduction of KTPS: the first motto}
\label{sec:mesoscopic-reduction}

In this section, we perform the first stage of thermodynamic reduction
by decomposing KTPS with those of iso-data over the collective observation data by applying Marsden-Weinstein reduction.

\begin{rem}
It appears that this reduction process is related to the theory of \emph{coarse graining}
in statistical mechanics. Coarse-graining is the process of grouping together of a large number of
small entities into some larger size ensemble with important characteristics of small
entities intact and then analysing this new system. Any description in between the microscopic level of
statistical mechanics and the macroscopic of thermodynamics is called a \emph{mesoscopic level}.
(See \cite{espanol} for example.)
\end{rem}

We start with the following obvious lemma.

\begin{lem} \label{lem:Diff-invariance}
Consider any system $\CF = \{F_1, \cdots, F_n\}$ of local observables.
The collective observation $\{\CO_{F_i}\}$ is invariant under the action of the group
$$
\Diff(\Gamma, \nu_0).
$$
\end{lem}
\begin{proof} This follows by the chain rule this time more easily using
the measure-preserving hypothesis of the action as in the proof of Proposition
\ref{prop:invariance-entropy}.
\end{proof}

Suppose we are given a local observable system $\CF$, and
consider its collective  observation function
$\{\CO_{F_i}\} _{i=1}^n$.

\begin{defn}\label{defn:tilde}
We then lift each function $\CO_{F_i}$ to $\widetilde \CO_{F_i} = \CO_{F_i} \circ \pi$ 
defined on $T^*\CP(\Gamma)$ i.e.,
$$
\widetilde \CO_{F_i}(\rho,\beta) : = \CO_{F_i}(\rho)
$$
and call the \emph{ lifted observation} of $F_i$.
Similarly we write $\widetilde{\CS} = \CS \circ \pi$ and call it the \emph{lifted relative information entropy}.
\end{defn}

Then the functions $\{\widetilde{\CO}_{F_i}\} _{i=1}^n$
Poisson commute with respect to the
canonical symplectic structure on $T^*\CP(\Gamma)$
given by \eqref{eq:bracket-FG} and so
their collective Hamiltonian flows on $T^*\CP(\Gamma)$
defines a Hamiltonian action of the abelian group $\R^n$.
Each element of these Hamiltonian flows is derived from
a  $\nu_0$-measure preserving  (local) diffeomorphism group symmetries of $\Gamma$.

\begin{defn}\label{defn:G-CF}
For the given set of observables $\CF=\{F_1,\dots,F_n\}$ on SPS
$\Gamma$, we consider the Lie algebra 
$\mathfrak g_{\CO_\CF} \subset {\mathfrak X}^{\text{\rm symp}}(T^*\CP(\Gamma))$
generated by the commuting Hamiltonian vector fields
\begin{equation}\label{eq:frakgCF}
\left \{X_{\widetilde{\CO}_{F_1}}, \cdots, X_{\widetilde{\CO}_{F_n}}\right\}
\end{equation}
on $T^*\CP(\Gamma)$.
\end{defn}

We recall the inclusions
$$
\CP(\Gamma) \subset \CD^+(\Gamma)  \subset \CD(\Gamma)
$$
where the last space is a vector space, the second is an open subset thereof and
the first is a hypersurface of the second. Furthermore the observation function
can be linearly extended to $\CD(\Gamma)$ by the same formula
$$
\int_\Gamma F\, D =: \mathcal O_F(D), \quad D \in \CD(\Gamma).
$$
The splitting \eqref{eq:splitting-map} induces the decomposition
$$
T_\rho \CD(\Gamma) = T_\rho \CP(\Gamma) \oplus \R \langle \rho \rangle
$$
given by the explicit formula
\begin{equation}\label{eq:decomposition}
\dot D \mapsto \left (\dot D - \left(\int_\Gamma \dot D\right) \, \rho,
 \left(\int_\Gamma \dot D \right)\, \rho \right).
\end{equation}

Now we derive the formula for the associated moment map
$$
\CJ_{\CF}: T^*\CP(\Gamma) \to \mathfrak{g}_{\CO_\CF}^*.
$$
Recall that the assignment $F \mapsto X_{\CO_F}$ defines
a map
$$
C^\infty(\Gamma) \to {\mathfrak X}^{\text{\rm symp}}(T^*\CP(\Gamma)).
$$
This in turn induces the dual map
$$
{\mathfrak X}^{\text{\rm symp}}(T^*\CP(\Gamma))^*
\to (C^\infty(\Gamma))^* \cong \CD(\Gamma).
$$

The following proposition is a crucial link between the statistical system
and the thermodynamic system. We phrase this proposition as our first motto:

\begin{center}
\bf ``Observation is a moment map.''
\end{center}

\begin{prop}
The moment map
$$
\CJ_{\CF} : T^*\CP(\Gamma) \to \mathfrak g_{\CO_\CF}^*
$$
of the action of $\CG_{\CF}$ on
$T^*\CP(\Gamma)$ is characterized by the formula
\begin{equation}\label{eq:J-FS}
\left\langle \CJ_{\CF}(\rho,\beta), \frac{\del}{\del F_i} \right\rangle =
\CO_{F_i}(\rho) - \beta(\rho)
\end{equation}
for all $i$.
\end{prop}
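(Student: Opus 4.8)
The plan is to obtain the formula directly from the defining property of a moment map: for the Hamiltonian action of the abelian group $\CG_{\CF}$ with Lie algebra $\mathfrak{g}_{\CO_\CF}$, the pairing $\langle \CJ_{\CF}(\rho,\beta),\xi\rangle$ against an algebra element $\xi$ is, up to the additive ambiguity intrinsic to an abelian moment map, the Hamiltonian that generates the flow of $\xi$ on $T^*\CP(\Gamma)$. Since $\frac{\del}{\del F_i}$ is the generator $X_{\CO_{F_i}}$, it suffices to identify, for each $i$, the Hamiltonian generating $X_{\CO_{F_i}}$ in the normalization forced by viewing $\CP(\Gamma)$ as the mass-one slice of the $\R_+$-principal space $\CD^+(\Gamma)$ of \eqref{eq:M}--\eqref{eq:logM}.

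First I would compute the generator from the canonical bracket \eqref{eq:bracket-FG}. Using $\frac{\delta \widetilde\CO_{F_i}}{\delta \rho} = F_i$ (a representative modulo constants) and $\frac{\delta \widetilde\CO_{F_i}}{\delta \beta} = 0$ for the pulled-back observation $\widetilde\CO_{F_i}(\rho,\beta) = \int_\Gamma F_i\,\rho$, one reads off the Hamiltonian vector field $X_{\CO_{F_i}}$ and checks that the collection $\{\widetilde\CO_{F_i}\}$ Poisson-commutes, so that the flows do assemble into the asserted abelian action. The leading contribution to $\langle \CJ_{\CF}(\rho,\beta),\frac{\del}{\del F_i}\rangle$ is then the observation itself, namely the macroscopic average $\CO_{F_i}(\rho) = \int_\Gamma F_i\,\rho$ of \eqref{eq:CO}.

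The correction term $-\beta(\rho)$ I would extract from the reduction $\CD^+(\Gamma)\to\CP(\Gamma)$ that fixes the mass. On the ambient cotangent bundle $T^*\CD^+(\Gamma)$ the observation extends linearly to $\CO_{F_i}(D) = \int_\Gamma F_i\,D$, and restricting to the slice $\{M=1\}$ amounts to reducing out the lifted $\R_+$-scaling $R_c$; the cotangent-lift momentum of that scaling is the Euler pairing $\langle\beta,\rho\rangle = \beta(\rho)$. Subtracting this scaling momentum to land in $\mathfrak{g}_{\CO_\CF}^*$ produces precisely the term $-\beta(\rho)$. This is the same mechanism, and the same normalizing subtraction, that yields the term $-\frac{\beta(D)}{M(D)}M$ in the universal moment map \eqref{eq:Diff-momentmap} of Theorem \ref{thm:momentmap-formula}; evaluating with $M(\rho)=1$ and using the splitting \eqref{eq:decomposition} of $T_\rho\CD(\Gamma) = T_\rho\CP(\Gamma)\oplus\R\langle\rho\rangle$ to discard the radial direction gives $\langle \CJ_{\CF}(\rho,\beta),\frac{\del}{\del F_i}\rangle = \CO_{F_i}(\rho) - \beta(\rho)$.

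The step I expect to be the main obstacle is making this subtraction well-posed in the infinite-dimensional, only weakly symplectic setting. Because the cotangent fiber is the quotient $T_\rho^*\CP(\Gamma) \cong \CD(\Gamma)^*/\R\langle M\rangle$, the scalar $\beta(\rho) = \langle\beta,\rho\rangle$ is meaningful only after a representative of $\beta$ is pinned down, and I would have to verify that the canonical choice---the representative annihilating $T_\rho\CP(\Gamma)$, equivalently the $\rho$-mean-zero normalization induced by the $\rho$-weighting in \eqref{eq:bracket-FG}---is exactly the one for which the components stay in involution and the resulting map is $\CG_{\CF}$-equivariant. The careful bookkeeping of this representative, rather than the identification of the observation term $\CO_{F_i}(\rho)$, is where the real content lies.
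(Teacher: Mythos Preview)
Your proposal is correct and follows essentially the same route as the paper: identify the leading Hamiltonian $\CO_{F_i}(\rho)$ from the defining property of the moment map for the abelian action, then obtain the correction $-\beta(\rho)$ by restricting from $T^*\CD^+(\Gamma)$ to $T^*\CP(\Gamma)$ via the universal formula \eqref{eq:Diff-momentmap} with $M(\rho)=1$. The paper packages the first step into a short lemma about linear functions on a vector space and is somewhat terser about the representative issue you flag, but the argument is the same.
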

\begin{proof} We first note that the function $\CO_\CF: \CP(\Gamma) \to \R$
is the natural restriction of the linear map defined on the bigger space $\CD(\Gamma)$.
We denote this latter map by
$$
\widehat {\CO}_\CF: \CD(\Gamma) \to \R.
$$
For the latter map which is a linear function on the vector space $\CD(\Gamma)$,
we apply the following standard lemma.

\begin{lem} Let $V$ be a vector space and consider its cotangent bundle
$\pi: T^*V \to V$. Let $f: V \to \R$ be a linear function and
consider the function $f\circ \pi : V \oplus V^* \cong T^*V \to \R$
where $\pi: T^*V \to V$ the cotangent projection. Denote by
$\phi_{f\circ \pi}^t$ be the one-parameter subgroup of
the Hamiltonian flow of $f \circ \pi$ regarded as the action of
the Lie group $\R$ acting on $T^*V$. Then its moment map
$J: T^*V \to \R^*$ is characterized by
$$
\left\langle J(v,\beta), \frac{\del}{\del t} \right \rangle  = f \circ
\pi(v,\beta) ( = (f(v)).
$$
\end{lem}
\begin{proof} By the defining condition of the moment map,
$$
\left\langle J(v,\beta), \frac{\del}{\del t} \right \rangle
$$
is the Hamiltonian generating the flow $\phi_{f\circ \pi}^t$ which is
obviously given by $f\circ \pi$.
\end{proof}

Therefore the Hamiltonian generating
the flow associated to the observable $F_i$ is precisely the function
$\widetilde{\CO}_{F_i} \circ \pi_{T^*\CD(\Gamma)}$.
In the current case, we are considering the $n$-dimensional Lie algebra
whose generating vector fields are those whose flows are the linear translations
$$
(\rho,\beta) \mapsto (\rho, \beta + d\CO_{F_i})
$$
respectively for each $i$ by the definition of the Lie algebra
action of $\CG_\CF$.

Since the Hamiltonian vector field $X_{\CO_{F_i}}$ on $T^*\CD(\Gamma)$ is
vertical, the flow preserves the subset
$$
T^*\CD(\Gamma)|_{\CP(\Gamma)} \subset T^*\CD(\Gamma).
$$
We also have the natural projection map
$$
T^*\CD(\Gamma)|_{\CP(\Gamma)} \to T^*\CP(\Gamma).
$$
Applying \eqref{eq:Diff-momentmap},
we derive
$$
\left\langle \CJ_\CF(\rho, \beta), \frac{\del}{\del F_i} \right\rangle  \mapsto
\left(\CO_{F_i}(\rho) - \frac{\beta(\rho)}{M(\rho)} M(\rho) \right)
\Big |_{{\mathfrak X}_0(\Gamma;\rho)}=  \CO_{F_i}(\rho)- \beta(\rho)
$$
 using $M(\rho) = 1$ under the identification \eqref{eq:Trho*CP}.
 This finishes the proof.
\end{proof}

We summarize the above discussion into the following commutative diagram
\begin{equation}\label{eq:generating-diagram}
\xymatrix{ & {T^*\CP(\Gamma)}  \ar[dl]_{\CJ_\CF} \ar[d]^{\CJ_{\CP(\Gamma)}}\ar[r]^>>>>>{\widehat \CS}& \R
\\
{\mathfrak{g}_{\CO_\CF}^*}  &\ar[l] {\mathfrak{X}(\Gamma)^*} & }
\end{equation}
\begin{defn}[Observation data set] We denote by $B_\CF$ the image of the moment map $\CJ_\CF$
and by $B^{\circ}_\CF$ the set of its regular values. We call $B_\CF$ the \emph{observation data set}.
\end{defn}
Then we have decomposition
$$
T^*\CP(\Gamma) = \bigcup_{\mu \in B_\CF} \{\mu\} \times \CJ_\CF^{-1}(\mu)
$$
and
$$
T^*\CP(\Gamma)/\CG_\CF = \bigcup_{\mu \in B_\CF} \{\mu\} \times \CJ_\CF^{-1}(\mu)/\CG_\CF.
$$
We attract readers' attention that the coadjoint isotropy group
of $\CG_\CF$  is the full group
\begin{equation}\label{eq:isotropy-group}
\CG_{\CF,\mu} = \CG_\CF
\end{equation}
for all $\mu$ and hence the reduced space at $\mu$
 \cite{marsden-weinstein} becomes
$$
\CJ_\CF^{-1}(\mu) /\CG_\CF = : \CM^{\CF}_{\mu}
$$
for all regular values of $\mu$. Obviously $\CJ_\CF^{-1}(\mu) = \emptyset$ unless
$\mu \in B_\CF$.

We summarize the above discussion into
\begin{cor} Let $ \mu = (\mu_1, \cdots, \mu_n)$ be a collective observation of
$\CF = \{F_1, \cdots, F_n\}$ and assume $\mu \in \mathfrak g_{\CO_\CF}^*$ is a regular value
of the moment map $\CJ_{\CF}$. Then the reduced space denoted by
\begin{equation}\label{eq:MmuF}
\CM_{\mu}^{\CF} : = \CJ_\CF^{-1}(\mu) /\CG_\CF\end{equation}
is a (infinite dimensional) symplectic manifold.
\end{cor}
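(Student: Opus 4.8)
The plan is to deduce the statement directly from the Marsden--Weinstein reduction theorem \cite{marsden-weinstein}, applied formally in the infinite-dimensional setting in accordance with the paper's convention of ignoring the functional-analytic subtleties of $T^*\CP(\Gamma)$. Almost all of the needed hypotheses have already been assembled. First I would recall from the preceding proposition that the action of $\CG_{\CO_\CF}$ on $T^*\CP(\Gamma)$ is Hamiltonian, with moment map $\CJ_\CF: T^*\CP(\Gamma) \to \mathfrak g_{\CO_\CF}^*$ characterized by \eqref{eq:J-FS}. The decisive simplification is that $\CG_{\CO_\CF} \cong \R^n$ is \emph{abelian}: its coadjoint action on $\mathfrak g_{\CO_\CF}^*$ is trivial, so the coadjoint isotropy at every $\mu$ is the full group, $\CG_{\CF,\mu} = \CG_{\CO_\CF}$, as recorded in \eqref{eq:isotropy-group}. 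Thus the group by which we quotient and the isotropy group of $\mu$ coincide, and there is no discrepancy to reconcile in the reduction.

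Next I would address the smoothness of $\CM_\mu^\CF$. Since $\mu$ is assumed to be a regular value, $d\CJ_\CF$ is surjective (with split kernel) along $\CJ_\CF^{-1}(\mu)$, so the level set is a smooth coisotropic submanifold by the (formal) implicit function theorem. The $n$ generating vector fields $X_{\CO_{F_1}}, \cdots, X_{\CO_{F_n}}$ of the $\R^n$-action, which Poisson-commute by construction, are linearly independent along the regular level set precisely because $\mu$ is regular; hence the action is locally free, and being the fiber-translation action $(\rho,\beta) \mapsto (\rho, \beta + \sum_i t_i\, d\CO_{F_i})$ of $\R^n$ it is free and proper. Therefore $\pi: \CJ_\CF^{-1}(\mu) \to \CM_\mu^\CF = \CJ_\CF^{-1}(\mu)/\CG_{\CO_\CF}$ is a principal $\R^n$-bundle submersion and $\CM_\mu^\CF$ carries a smooth manifold structure.

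Finally I would equip $\CM_\mu^\CF$ with the reduced form $\omega_\mu$ via the Marsden--Weinstein prescription $\pi^*\omega_\mu = \iota^*\omega$, where $\iota: \CJ_\CF^{-1}(\mu) \hookrightarrow T^*\CP(\Gamma)$ is the inclusion and $\omega$ is the canonical symplectic form of \eqref{eq:bracket-FG}. The two checks are that $\omega_\mu$ is well-defined (i.e.\ $\iota^*\omega$ is basic for $\pi$) and that it is nondegenerate. Well-definedness follows once one verifies that the null distribution of $\iota^*\omega$ along the coisotropic level set is exactly the tangent distribution to the $\CG_{\CO_\CF}$-orbits, namely the span of the $X_{\CO_{F_i}}$ in the fiber directions $\frac{\del}{\del F_i}$. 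The main obstacle I anticipate is nondegeneracy: in finite dimensions it is automatic, but here $\omega$ is only a \emph{weak} symplectic form, so the symplectic orthogonal of $T\CJ_\CF^{-1}(\mu)$ need only \emph{contain} the orbit directions rather than equal them, and one must argue that the induced pairing descends to a weakly nondegenerate form on the quotient. Following the paper's convention (see \cite{abraham-marsden} for the weak setting), I would compute $\iota^*\omega$ in the coordinates $(\rho,\beta)$ directly from \eqref{eq:bracket-FG}, confirm that the degenerate directions are precisely the fiber-translation directions $\frac{\del}{\del F_i}$, and thereby conclude that $\omega_\mu$ is a (weak) symplectic form on $\CM_\mu^\CF$.
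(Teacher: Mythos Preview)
Your proposal is correct and follows essentially the same approach as the paper: invoke Marsden--Weinstein reduction for the abelian Hamiltonian $\R^n$-action, using that the coadjoint isotropy is the full group \eqref{eq:isotropy-group}. In fact the paper gives no separate proof at all---it presents the corollary as a direct summary of the preceding discussion and explicitly declines to address the weak-symplectic and other infinite-dimensional subtleties you raise, so your argument is more detailed than the paper's own treatment.
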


The projection $\CM^{\CF} \to B^{\circ}_\CF \subset {\mathfrak g}_{\CO_\CF}^*$ forms 
a \emph{symplectic fiber bundle} over
$B^{\circ}_\CF\subset \mathfrak g_{\CO_\CF}^*$.
\begin{defn} 
We call the union
$$
\CM^{\CF} : = \bigsqcup_{\mu \in B^{\circ}_\CF}  \{\mu\} \times \CM^{\CF}_{\mu};
\quad \mu: = (\mu_1,\cdots, \mu_n)
$$
the \emph{$\CF$-reduced kinetic theory phase space} ($\CF$-reduced $\KTPS$)
associated to $\CF$,
where $\CJ = \CJ_\CF$ is the moment map associated to the symmetry group
generated by the induced Hamiltonian flows on $T^*\CP(\Gamma)$.
\end{defn}

The following is obvious.

\begin{cor} Suppose that the set $\{X_{\widetilde{\CO}_{F_i}}\}_{i=1}^n$ is linearly
independent. Then the map $\R^n \to \mathfrak g_{\CO_\CF}^*$ defined by
$$
(c_1,\cdots, c_n) \mapsto \sum_{i=1} c_i \widetilde{\CO}_{F_i}
$$
is an isomorphism.
\end{cor}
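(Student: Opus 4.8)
The plan is to present this as a short exercise in finite-dimensional linear algebra, the only substantive inputs being the Poisson-commutativity of the observations and the moment-map formula \eqref{eq:J-FS}. First I would record that, as shown earlier in this section, the functions $\{\CO_{F_i}\}_{i=1}^n$ Poisson commute on $T^*\CP(\Gamma)$; consequently their Hamiltonian vector fields bracket to zero, $[X_{\CO_{F_i}}, X_{\CO_{F_j}}] = X_{\{\CO_{F_i},\CO_{F_j}\}} = 0$, so the Lie algebra $\mathfrak{g}_{\CO_\CF}$ of Definition \ref{defn:G-CF} is abelian and is linearly spanned by the finite set \eqref{eq:frakgCF}. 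Hence $\dim \mathfrak{g}_{\CO_\CF} \le n$, with equality exactly when $\{X_{\CO_{F_i}}\}_{i=1}^n$ is linearly independent.

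Under the hypothesis of the corollary, then, $\{X_{\CO_{F_i}}\}_{i=1}^n$ is a basis of $\mathfrak{g}_{\CO_\CF}$. Equivalently, the linear map $\R^n \to \mathfrak{g}_{\CO_\CF}$ sending the standard basis vector $e_i$ to $X_{\CO_{F_i}}$ is an isomorphism of $n$-dimensional vector spaces, and I would write $\partial/\partial F_i := X_{\CO_{F_i}}$ for the resulting basis, matching the notation of \eqref{eq:J-FS}. Dualizing this isomorphism yields $\mathfrak{g}_{\CO_\CF}^* \cong (\R^n)^*$, so in particular $\dim \mathfrak{g}_{\CO_\CF}^* = n$.

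It then remains to identify the map of the statement with this dualization. The moment-map characterization \eqref{eq:J-FS} reads $\langle \CJ_\CF(\rho,\beta), \partial/\partial F_i\rangle = \CO_{F_i}(\rho) - \beta(\rho)$, which exhibits $\CO_{F_i}$ as the linear functional on $\mathfrak{g}_{\CO_\CF}$ dual to the basis vector $\partial/\partial F_i$; that is, $\{\CO_{F_i}\}_{i=1}^n$ is precisely the dual basis of $\mathfrak{g}_{\CO_\CF}^*$. Therefore the assignment $(c_1,\cdots,c_n) \mapsto \sum_i c_i \CO_{F_i}$ carries the standard basis of $\R^n$ onto the dual basis of $\mathfrak{g}_{\CO_\CF}^*$, and a linear map taking a basis to a basis of an equidimensional space is an isomorphism. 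The step I would treat most carefully is this last identification — confirming through \eqref{eq:J-FS} that the observations $\CO_{F_i}$ are genuinely the dual coordinates and not merely a spanning collection — since once the dual-basis property is in hand, injectivity, and hence by the dimension count surjectivity, is immediate. No analytic subtleties of the infinite-dimensional phase space enter, as the entire argument lives in the finite-dimensional space $\mathfrak{g}_{\CO_\CF}$ and its dual.
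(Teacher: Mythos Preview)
Your argument is correct in spirit and matches what the paper has in mind: the paper itself declares the corollary ``obvious'' and gives no proof, so your write-up simply fills in the elementary linear algebra (abelian Lie algebra spanned by $n$ linearly independent vectors, hence $n$-dimensional, hence its dual is $n$-dimensional, and the map in question carries a basis to a basis).

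One small point worth tightening: invoking the moment-map formula \eqref{eq:J-FS} to ``exhibit $\CO_{F_i}$ as the linear functional dual to $\partial/\partial F_i$'' is slightly off, since that formula produces $\CO_{F_i}(\rho) - \beta(\rho)$ rather than $\CO_{F_i}(\rho)$ alone. The identification you want is more direct and does not require \eqref{eq:J-FS}: by Definition~\ref{defn:G-CF} the basis vectors of $\mathfrak g_{\CO_\CF}$ are literally the $X_{\CO_{F_i}}$, and the paper is using the symbol $\CO_{F_i}$ (the Hamiltonian generating $X_{\CO_{F_i}}$) as the name for the corresponding dual-basis element of $\mathfrak g_{\CO_\CF}^*$. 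With that notational convention the map $(c_1,\ldots,c_n)\mapsto \sum_i c_i\CO_{F_i}$ is by definition the coordinate isomorphism onto the dual space, and there is nothing further to check. Your dimension count and Poisson-commutativity observation already do all the real work.
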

We would like to mention that the hypothesis of this corollary is a very weak one.
For example,
the hypothesis holds for any local observable system that has a point $x \in \Gamma$
at which the differentials
$$
\{dF_1(x), \cdots, dF_n(x) \}
$$
are linearly independent.

\section{Reduced entropy as a generating function of thermodynamic equilibrium: the second motto}
\label{sec:reduced-entropy}

We have shown before that the relative information entropy
$\CS:\CP(\Gamma) \rightarrow \mathbb{R}$ is invariant
under the action of $\Diff(\Gamma,\nu_0)$ in Proposition
\ref{prop:invariance-entropy}. Therefore the lifted relative information entropy $\widetilde{\CS}$
is also invariant under the lifted action $\widehat{\Diff}(\Gamma,\nu_0)$.

The following will enable us to derive the folklore
that \emph{the  thermodynamic entropy  is derived as the reduction
of the information entropy}.

\begin{cor} [$\CF$-reduced  entropy]\label{cor:universality}
The lifted relative information entropy $\widetilde \CS = \CS \circ \pi: T^*\CP^(\Gamma) \to \R$
 is universally reduced to a well-defined function
 $$
 \CS_\CF^{\text{\rm red}}: \CM^{\CF} \to \R.
 $$
 We call $\CS^{\text{\rm red}}_\CF$ the \emph{$\CF$-reduced entropy function}.
\end{cor}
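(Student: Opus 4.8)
The plan is to prove the claim by establishing that $\CS\circ\pi$ is invariant under the reduction group $\CG_{\CO_\CF}$, so that it descends to the quotient fiber by fiber. Here $\pi\colon T^*\CP(\Gamma)\to\CP(\Gamma)$ denotes the cotangent projection. The key structural fact is that each generating observation $\CO_{F_i}$ is pulled back from the base along $\pi$ (recall $\widetilde\CO_{F_i}(\rho,\beta)=\CO_{F_i}(\rho)$), so its Hamiltonian vector field is \emph{vertical}: as computed in the proof of the moment map proposition, the time-$t$ flow is the pure fiber translation
$$
(\rho,\beta)\longmapsto(\rho,\beta+t\,d\CO_{F_i}),
$$
which fixes the base point $\rho$. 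Hence every $g\in\CG_{\CO_\CF}$ satisfies $\pi\circ g=\pi$.

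First I would record the invariance. For any $g\in\CG_{\CO_\CF}$ we have
$$
(\CS\circ\pi)\circ g=\CS\circ(\pi\circ g)=\CS\circ\pi,
$$
so $\CS\circ\pi$ is constant along $\CG_{\CO_\CF}$-orbits. This uses nothing about the specific observables $F_i$: the entropy $\CS(\rho)$ depends solely on the distribution $\rho$, which is exactly why the descent works for \emph{every} choice of system $\CF$ and produces the asserted universality. In particular, because the induced base action is trivial, one does not even need to invoke the measure-preservation invariance of $\CS$ from Proposition~\ref{prop:invariance-entropy}.

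Next, fix a regular value $\mu\in B_\CF^\circ$. The level set $\CJ_\CF^{-1}(\mu)$ is $\CG_{\CO_\CF}$-invariant, as used in the definition of the reduced fiber $\CM^\CF_\mu$ in \eqref{eq:MmuF}. Since each orbit has constant $\rho$, the projection $\pi$ descends to a well-defined map $\bar\pi\colon\CM^\CF_\mu\to\CP(\Gamma)$, $[(\rho,\beta)]\mapsto\rho$, and I would define
$$
\CS_\CF^{\text{\rm red}}\big([(\rho,\beta)]\big):=\CS(\rho)=\big(\CS\circ\bar\pi\big)\big([(\rho,\beta)]\big).
$$
The invariance just established guarantees this is independent of the representative, giving a well-defined function on each $\CM^\CF_\mu$; assembling over all $\mu\in B_\CF^\circ$ yields $\CS_\CF^{\text{\rm red}}\colon\CM^\CF\to\R$.

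The argument is genuinely short, and the only step deserving care—where I would be fully explicit—is the verticality of the generating flows, i.e.\ the identity $\pi\circ g=\pi$ for $g\in\CG_{\CO_\CF}$. This is the general principle that the Hamiltonian flow of a function pulled back from the base of a cotangent bundle translates the cotangent fibers while fixing the base, and it is precisely this that makes the descent of $\CS\circ\pi$ immediate rather than requiring any quotient computation.
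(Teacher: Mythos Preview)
Your proof is correct and is essentially the argument the paper leaves implicit: the corollary is stated without proof, relying on the verticality of the $\CG_{\CO_\CF}$-flows $(\rho,\beta)\mapsto(\rho,\beta+t\,d\CO_{F_i})$ established in the moment-map proposition, which is exactly what you invoke. Your observation that Proposition~\ref{prop:invariance-entropy} is not actually needed for this descent (since the reduction group fixes $\rho$) is a worthwhile clarification, as the paper's lead-in sentence to the corollary suggests otherwise.
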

In fact the above definition can be put into the general construction of
\emph{Legendrian generating function} as follows. (See Appendix \ref{sec:generating} for the definition.)

The procedure of finding a critical point of $\CS_\CF^{\text{\rm red}}$
can be decomposed into the two steps. We first choose
an Ehresmann connection
$$
T\CM^\CF = T^v \CM^\CF \oplus T^h \CM^\CF
$$
of the fibration  $\pi_\CF: \CM^\CF \to \mathfrak{g}^*_{\CO_\CF}$,
and decompose the differential $d\CS_\CF^{\text{\rm red}}(\rho)$
into the vertical and the horizontal components 
$$
d\CS_\CF^{\text{\rm red}}(\rho) = d^v\CS_\CF^{\text{\rm red}}(\rho)
+ D^h\CS_\CF^{\text{\rm red}}(\rho).
$$
Recall that the vertical differential is canonically defined but the
horizontal differential of $\CS_\CF^{\text{\rm red}}$ needs the use of connection. In general the horizontal component $D^h\CS(\rho)$ depends on the
connection but it will be independent thereof at the vertical critical
point $\rho$ where $d^v\CS(\rho) = 0$.

Then we consider the following diagram induced by the diagram \eqref{eq:generating-diagram}
\begin{equation}\label{eq:generating-diagram2}
\xymatrix{{\CM^\CF}  \ar[d]^{\pi_\CF} \ar[r]^>>>>>{\CS^{\text{\rm red}}_\CF}& \R
\\
{\mathfrak{g}_{\CO_\CF}^*}  }
\end{equation}
This leads us to our second motto:

\begin{center}
\bf ``Relative information entropy is \\
the generating function of thermodynamic equilibria."
\end{center}

\begin{defn}[Covariant thermodynamic equilibrium]
We call the Legendrian submanifold defined by
\bea\label{eq:RCF}
R_{\CF;\CS} : = \left\{\left(\mu, D^h\CS_\CF^{\text{\rm red}}(\rho,[\beta]),
\CS_\CF^{\text{\rm red}} (\rho,[\beta])\right )\in J^1(\mathfrak{g}_{\CO_\CF}^*)\,
\big|\,  d^v\CS_\CF^{\text{\rm red}}(\rho,[\beta]) = 0, \right. \nonumber \\
\left. \mu = \CJ_\CF(\rho,[\beta])   \right\}
\eea
the \emph{covariant thermodynamic equilibrium of statistical system $\CF$.} 
We regard the diagram \eqref{eq:generating-diagram2}
as a  generating function of the Legendrian submanifold
 $R_{\CF;\CS}$.
\end{defn}

\begin{prop} $R_{\CF;\CS}$ is a Legendrian submanifold of
the 1-jet bundle $J^1(\mathfrak{g}_{\CO_\CF}^*)$
for the contact form
$$
\lambda = dw - \sum_{i=1}^n q_i dp_i
$$
whose potential function is given by $\CS_\CF^{\text{\rm red}}$.
\end{prop}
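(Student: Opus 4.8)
The plan is to recognize $R_{\CF;\CS}$ as the Legendrian submanifold \emph{generated by} the fibered function $\CS_\CF^{\text{\rm red}}$ in the sense of the generating-function (Morse family) formalism recalled in Appendix \ref{sec:generating}, applied to the fibration $\pi_\CF \colon \CM^\CF \to \mathfrak{g}_{\CO_\CF}^* \cong \R^n$. Write $\Sigma := \{(\rho,[\beta]) \in \CM^\CF \mid d^v\CS_\CF^{\text{\rm red}}(\rho,[\beta]) = 0\}$ for the fiber-critical locus and $\Phi \colon \Sigma \to J^1(\mathfrak{g}_{\CO_\CF}^*)$ for the map $(\rho,[\beta]) \mapsto (\mu, D^h\CS_\CF^{\text{\rm red}}, \CS_\CF^{\text{\rm red}})$, so that $\Phi(\Sigma) = R_{\CF;\CS}$. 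There are exactly two things to verify: (i) that $\Phi$ annihilates the contact form, $\Phi^*\lambda = 0$; and (ii) that $R_{\CF;\CS}$ is a submanifold of the maximal isotropic dimension $n = \dim \mathfrak{g}_{\CO_\CF}^*$.

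For (i) I first record the structural fact already noted in the text: at a point of $\Sigma$ the vertical differential vanishes, so $d\CS_\CF^{\text{\rm red}}$ is purely horizontal and hence equals $\pi_\CF^*\alpha$ for a unique covector $\alpha \in T_\mu^*\mathfrak{g}_{\CO_\CF}^*$, with $\mu = \pi_\CF(\rho,[\beta])$; this $\alpha$ is precisely $D^h\CS_\CF^{\text{\rm red}}$, and it is independent of the chosen Ehresmann connection exactly because $d^v\CS_\CF^{\text{\rm red}} = 0$ there. Writing $\lambda = dw - \theta_{\mathrm{can}}$ with $\theta_{\mathrm{can}} = \sum_{i=1}^n q_i\, dp_i$ the tautological one-form of $T^*\mathfrak{g}_{\CO_\CF}^*$ and $w$ the value coordinate, I compute $\Phi^*\lambda$ on $v \in T\Sigma$. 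Since $\Phi^*w = \CS_\CF^{\text{\rm red}}|_\Sigma$ one has $\Phi^*dw = d(\CS_\CF^{\text{\rm red}}|_\Sigma)$; and since the cotangent projection of $\Phi$ composes to $\pi_\CF$, the defining property of $\theta_{\mathrm{can}}$ gives $(\Phi^*\theta_{\mathrm{can}})(v) = \langle \alpha, d\pi_\CF(v)\rangle = \langle \pi_\CF^*\alpha, v\rangle = d\CS_\CF^{\text{\rm red}}(v)$, using $d\CS_\CF^{\text{\rm red}} = \pi_\CF^*\alpha$ along $\Sigma$. Hence $\Phi^*\theta_{\mathrm{can}} = d(\CS_\CF^{\text{\rm red}}|_\Sigma)$ and $\Phi^*\lambda = d(\CS_\CF^{\text{\rm red}}|_\Sigma) - d(\CS_\CF^{\text{\rm red}}|_\Sigma) = 0$, which is the isotropic (Legendrian) condition and simultaneously exhibits $\CS_\CF^{\text{\rm red}}$ as the potential function.

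For (ii) I would invoke the Morse-family nondegeneracy of $\CS_\CF^{\text{\rm red}}$ along the fibers of $\pi_\CF$: checking that $d^v\CS_\CF^{\text{\rm red}}$ is transverse to the zero section in each fiber $\CM_\mu^\CF$ makes $\Sigma$ an $n$-dimensional submanifold and $\Phi$ a Legendrian immersion. The nondegeneracy holds because the relative information entropy is strictly concave, so its constrained maximum over each reduced fiber — the Gibbs distribution $\rho \propto e^{-\sum_i \lambda_i F_i}$ determined by the multipliers dual to $\mu$ — is unique and nondegenerate; this is the classical maximum-entropy computation reorganized covariantly. In fact this uniqueness shows $\Sigma \to B^\circ_\CF$ is a section, so over the observation variables $\mu$ the Legendrian is even graph-like; the failure of $0$-holonomicity emphasized in the introduction occurs only after passing to the intensive \emph{preferred variables} and is addressed separately (van der Waals and the Maxwell construction).

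The main obstacle is that $\CM^\CF$ and its fibers are infinite dimensional, so the generating-function/Morse-family theorems of Appendix \ref{sec:generating}, stated in finite dimensions, do not apply verbatim; the transversality of $d^v\CS_\CF^{\text{\rm red}}$ and the manifold structure of $\Sigma$ would require the weak-symplectic functional-analytic care that the paper deliberately sets aside. I would handle this by observing that the relevant critical structure is genuinely finite dimensional — the fiberwise maximizers trace out the $n$-parameter exponential family above, so $\Sigma$ is an honest $n$-dimensional section over $B^\circ_\CF$ — and otherwise appealing to the formal weak-symplectic conventions adopted throughout. The computation in (i) is purely formal and dimension-independent, so it needs no such caveat.
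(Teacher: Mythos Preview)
Your argument in (i) is correct and is essentially the paper's own proof: both rest on the single observation that along the fiber-critical locus $\Sigma$ one has $d\CS_\CF^{\text{\rm red}} = D^h\CS_\CF^{\text{\rm red}}$, from which your $\Phi^*\lambda = 0$ and the paper's $dg = \iota^*\theta$ are immediate and equivalent rewritings (the paper routes through the Lagrangian projection $\CL_{\CF;\CS} \subset T^*\mathfrak{g}_{\CO_\CF}^*$ and concludes $\iota^*\omega_0 = 0$, while you pull back the contact form directly on $J^1$). Your part (ii) and the infinite-dimensional discussion go beyond what the paper actually proves---the paper's proof stops at isotropy plus identification of the Liouville primitive and does not address the maximal-dimension or transversality issues you raise---so your treatment is in fact more complete on those points, though the concavity/Gibbs-family argument you sketch is the natural one.
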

\begin{proof} We will show that the subset
$$
\CL_{\CF;\CS}: = \left\{(\mu, D^h\CS_\CF^{\text{\rm red}}(\rho,[\beta]))
(\rho,[\beta]))\in T^* \mathfrak{g}_{\CO_\CF}^*\,
\big|\, d^v\CS_\CF^{\text{\rm red}}(\rho,[\beta]) = 0\right\}
$$
is a Lagrangian submanifold of $T^*\mathfrak{g}_{\CO_\CF}^*$. We denote by
$\iota: \CL_{\CF;\CS} \to T^*(\mathfrak{g}_{\CO_\CF}^*)$ the canonical inclusion. Then it
 satisfies
$$
dg = \iota^*\theta, \quad \theta = \sum_{i=1}^n q_i dp_i
$$
with $g = \iota^*\CS_{\CF}^{\text{\rm red}}$. The latter is because by definition  we have
$$
d\CS_{\CF}^{\text{\rm red}} = D^h\CS_\CF^{\text{\rm red}}
$$
on $\CL_{\CF;\CS}$. In particular, we have $0 = \iota^*d\theta = - \iota^*\omega_0$ which shows that
$\CL_{\CF;\CS}$ is a Lagrangian submanifold whose Liouville primitive is given by
the function $g$.
\end{proof}

This construction is the coordinate free construction of
the local coordinate description
of a thermodynamic equilibrium given in the literature (e.g., in \cite{MNSS}): The latter construction
has been carried out by
the Lagrangian multiplier method for the entropy maximum principle.
For the readers' convenience, we summarize this local
coordinate description of thermodynamic equilibrium
in Section \ref{sec:review}.

Non-holonomicity of $R_{\CF;\CS}$ arises from the last operation of `taking the
horizontal projection'. 

\section{Mesoscopic dynamics and conservation equation}
\label{sec:time-evolution}

In the previous section, we have constructed thermodynamic equilibrium state as the
Legendrian submanifold generated by the $\CF$-reduced entropy function. In this section,
we describe the mesoscopic reduction of the kinetic equation on KTPS $T^*\CP(\Gamma)$ generated 
by the Hamiltonian flow on SPS of the given Hamiltonian $H$.

\begin{rem}[Physical Hamiltonians]
In the physical thermodynamical system, the Hamiltonian
 is nothing but the total energy, i.e., the kinetic energy plus the potential energy
\begin{equation}\label{eq:physical-H}
H = \frac{m}{2} \sum_{i=1}^N |\vec P_i|^2 + \sum_{i\neq j} U_{ij}(Q_i - Q_j) + \sum_{\ell =1}^N U(Q_\ell).
\end{equation}
\end{rem}

\begin{lem} The (infinite dimensional) Hamiltonian flow 
$$
\phi_{\widetilde{\CO}_H}^t: T^*\CP(\Gamma) \to T^*\CP(\Gamma)
$$
on $KTPS$ preserves the lifted observations $\widetilde{\CO}_{F_i}$ and the
lifted relative information entropy $\widetilde{\CS}$. We write $\phi_{\widetilde{\CO}_H}^t=: \Phi_H^t$.
\end{lem}
\begin{proof} We note that $\phi_H^t$ preserves the Liouville measure $\nu_0$. Therefore
by the change of variable formula, we have
$$
\widetilde{\CO}_{F_i} \circ \Phi_H^t = \widetilde{\CO}_{F_i}, \quad \widetilde{\CS} \circ \Phi_H^t 
= \widetilde{\CS}.
$$
We recall the fact that all  $\widetilde{\CO}_{F_i}$ and $\widetilde \CS$ are 
the lifts of functions defined originally on $\CP(\Gamma)$. Therefore they Poisson-commute
as functions on $T^*\CP(\Gamma)$ by Proposition \ref{prop:invariance-entropy}
and Lemma \ref{lem:Diff-invariance}.  Then the lemma follows.
\end{proof}

Therefore the flow $\Phi_H^t$ descends to $\CM^\CF$.
We denote this flow by
$$
\Phi_{H;\text{\rm red}}^{\CF,t}: \CM^\CF \to \CM^\CF.
$$
Recall that the flow $\Phi_H^t$  on $T^*\CP(\Gamma)$ preserve the level sets of
$\widetilde{\CO}_{F_i}$ and commute with the flows of $X_{\widetilde{\CO}_{F_i}}$ for all $i = 1, \cdots, k$. 
We also recall that the flow on $\CM_\CF$ is the restriction of the quotient flow of
$T^*\CP(\Gamma)/\CG_\CF$.  In particular the reduced flow preserves the value of the
$\CF$-reduced entropy $\CS_\CF^{\text{\rm red}}$. 

Even in the thermodynamic equilibrium $R_{\CF;\CS}$, the original
kinetic system still induces the internal flow thereon induced by the time-evolution
 of the probability distribution in SPS (or in KTPS) which
 may  undergo nontrivial time-evolution. 
 More specifically the flow of the probability distribution
 satisfies the reduced equation of the kinetic equation
\begin{equation}\label{eq:drhodt}
\frac{\del \rho}{\del t} = - \CL_{X_H} \rho, \quad
\rho(0) = \rho_0.
\end{equation}
By writing $\rho = f\, \nu_0$ with $f = \frac{\del \rho}{\del \nu_0}$,
the latter equation  is equivalent to the conservation equation
\begin{equation}\label{eq:frho}
\frac{\del f}{\del t} + \{f, H\} =0, \quad f(0) = f_{\rho_0}.
\end{equation}
This equation is the simplest form of
the Boltzmann equation that does not incorporate the contribution from collision of particles. 
We refer readers to Section \ref{sec:discussion} for further discussion
on the more general form of kinetic equations.

By differentiating the macroscopic entropy
 $\CS(\rho_t)$ in time $t$, we get
 \begin{prop}\label{prop:dCSdt} Let $H = H(t,x)$ be the Hamiltonian of SPS and let
 $\rho_t= (\phi_H^t)_*\rho_0$ for any $\rho_0 \in \CP(\Gamma)$. Then we have
 \begin{equation}\label{eq:dfrhodt}
 \frac{\del}{\del t}(\CS(\rho_t)) =  \int_\Gamma \{f_t \log f_t, H\}\, d\Gamma.
 \end{equation}
 \end{prop}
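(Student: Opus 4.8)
The plan is to reduce the statement to a direct differentiation under the integral sign, using the transport equation \eqref{eq:frho} that has just been recorded. First I would rewrite the entropy purely in terms of the Radon--Nikodym derivative. Writing $\rho_t = f_t\, \nu_0$ with $f_t = \frac{\del \rho_t}{\del \nu_0}$ and recalling $\nu_0 = d\Gamma$, the definition of $\CS$ gives
\be
\CS(\rho_t) = -\int_\Gamma \rho_t \log f_t = -\int_\Gamma f_t \log f_t\, d\Gamma,
\ee
so that the whole $t$-dependence is carried by the scalar function $f_t$ on the fixed measure space $(\Gamma,\nu_0)$.

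Next I would differentiate in $t$. Granting that the interchange of $\frac{\del}{\del t}$ and $\int_\Gamma$ is permissible --- precisely the kind of functional-analytic point the paper has declared it will suppress --- the product and chain rules applied to $\frac{\del}{\del t}(f_t \log f_t) = (\log f_t + 1)\frac{\del f_t}{\del t}$ yield
\be
\frac{\del}{\del t}\CS(\rho_t) = -\int_\Gamma \left(\log f_t + 1\right)\frac{\del f_t}{\del t}\, d\Gamma.
\ee
I would then substitute the kinetic equation \eqref{eq:frho}, namely $\frac{\del f_t}{\del t} = -\{f_t, H\}$. This equation is the one already recorded as equivalent to \eqref{eq:drhodt}; the mechanism behind the equivalence is that the Hamiltonian vector field $X_H$ preserves the Liouville measure, so $\CL_{X_H}\nu_0 = 0$ and hence $\CL_{X_H}(f_t \nu_0) = \{f_t, H\}\, \nu_0$, consistent with the sign convention fixing \eqref{eq:frho}. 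Plugging this in cancels the outer minus sign and gives
\be
\frac{\del}{\del t}\CS(\rho_t) = \int_\Gamma \left(\log f_t + 1\right)\{f_t, H\}\, d\Gamma,
\ee
which is the asserted identity.

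The algebra here is routine; the genuine subtlety --- the step I would expect to be the main obstacle --- is the justification of differentiating under the integral and of the transport equation \eqref{eq:frho} itself in this infinite-dimensional, weakly symplectic setting. Both become clean once one restricts to smooth densities $f_t$ with sufficient decay (or with support controlled along the flow $\phi_H^t$), so that $\log f_t$ and $\{f_t, H\}$ are integrable against $d\Gamma$ uniformly in $t$ on the relevant time interval. Since \eqref{eq:drhodt} and \eqref{eq:frho} are already taken as established just before the statement, and the paper has explicitly adopted the policy of not dwelling on such analytic technicalities, I would simply invoke \eqref{eq:frho} and present the computation above as the proof.
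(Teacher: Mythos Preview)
Your proof is correct and follows essentially the same direct computation as the paper: rewrite $\CS(\rho_t) = -\int_\Gamma f_t\log f_t\, d\Gamma$, differentiate under the integral, and substitute the transport equation \eqref{eq:frho}. Your sign bookkeeping is in fact more careful than the paper's displayed proof, which silently drops the minus sign from the entropy and the minus sign from \eqref{eq:frho} in consecutive lines (these cancel, so the final identity is the same).
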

 \begin{proof} We compute
 \beastar
 \frac{\del}{\del t}(\CS(\rho_t))
 & = & - \int_\Gamma \frac{\del}{\del t}(f_t \log f_t) \, d\Gamma 
= - \int_\Gamma (\log f_t +1)\frac{\del f_t}{\del t}\, d\Gamma \\
& = & - \int_\Gamma (\log f_t +1) \frac{\del f_t}{\del t}\, d\Gamma \\
 & = & \int_\Gamma (\log f_t +1) \{f_t, H\}\, d\Gamma 
= \int_\Gamma \{f_t \log f_t, H\}\, d\Gamma
 \eeastar
where the last equality follows from the derivation property of bracketing $\{\cdot, H \}$.
This finishes the proof.
\end{proof}

We now rewrite 
$$
\{f_t \log f_t, H\}\, d\Gamma
= \CL_{X_H}(f_t \log f_t)\, d\Gamma = \CL_{X_H}(f_t \log f_t\, d\Gamma)
$$
where the penultimate equality holds as $\CL_{X_H} (d\Gamma) = 0$ by Liouville's lemma.

\begin{cor}\label{cor:CSt-constant} 
We put $s_t: = s(\rho_t) = f_t \log f_t$ and 
$$
S^{6N-1}(R): = \{(Q,P) \mid |Q|^2 + |P|^2 = R^2\}.
$$
Suppose the pair $H$ and $f$ satisfy the asymptotic boundary condition
\begin{equation}\label{eq:asymp-condition}
\lim_{R \to \infty} \int_{S^{6N -1}(R)} s_t (X_H \rfloor d\Gamma) = 0.
\end{equation}
Then the entropy 
$\CS(\rho_t)$ remains constant in time $t$ for any solution $f_t$ of \eqref{eq:frho}.
\end{cor}
\begin{proof} By Cartan's formula, we have
$$
\CL_{X_H}(f_t \log f_t\, d\Gamma) = d((s_t X_H) \rfloor d\Gamma).
$$
Therefore 
\begin{equation}\label{eq:Stokes'formula}
\int_{B^{6N}(R)} d((s_t X_H) \rfloor d\Gamma) = \int_{S^{6N -1}(R)} (s_t X_H) \rfloor d\Gamma
= \int_{S^{6N -1}(R)} s_t (X_H \rfloor d\Gamma)
\end{equation}
by Stokes' formula. The the corollary follows by taking the limit as $R \to \infty$.
\end{proof}

\begin{rem} Strictly speaking, we need suitable
integrability and regularity of $f_t \log f_t$ to make the above proposition and corollary 
completely rigorous so that we can switch the order of integration and differentiation in time
in the proof of Proposition \ref{prop:dCSdt}, and the limits of the far left and the far right
hand sides of \eqref{eq:Stokes'formula} exist in the proof of Corollary \ref{cor:CSt-constant}.
Since identifying the optimal class of probability densities and the physical Hamiltonian $H$ that satisfy 
\eqref{eq:asymp-condition} is beyond the main theme of the present work, we do not elaborate here but
postpone further detailed study of such conditions and their kinetic consequences elsewhere.
\end{rem}

\section{Non-holonomicity of thermodynamic equilibrium and phase transition}
\label{sec:review}

Our global derivation of thermodynamic equilibrium given in the previous section
indicates that
the associated Legendrian submanifold is \emph{not necessarily holonomic}.
We will illustrate by the van der Waals model that non-holonomicity indeed occurs in
a physical model, which is responsible for the phase transition. Then the well-known
Maxwell's construction is an attempt to overcome this non-holonomicity in computing
the relevant \emph{single-valued} thermodynamic potential function.

We first review the exposition of \cite{Jaynes:InfoStat}, \cite{MNSS}
on the thermodynamic equilibrium
in terms of the variables $(q^1,\cdots, q^n)$ which is a coordinate system of
$B_{\CF;\CS} \subset \R^n$.
We will closely follow the exposition of \cite{MNSS}, especially try to be consistent
therewith in the usage of letters for the relevant variables except the change of
$x^i$'s by $q^i$'s.

We first give the definition of TPS associated to the observable $\CF = \{F_1, \ldots, F_n\}$.
\begin{defn}
A \emph{thermodynamic phase space} $\CT$ is an open subset of $\mathbb{R}^{2n+1}$
 equipped with the contact structure of the 1-jet bundle
naturally associated to a statistical mechanical system
 of observables $F_1, \cdots, F_n$ on SPS.
\end{defn}

In the contact geometric formulation of TPS, it is described by a coordinate system
$(q^1,\dots,q^n;p_1,\cdots,p_n;z)$
where $q^1,\cdots,q^n$ are called \emph{configuration variables}, and the relevant contact
structure is defined by the kernel of the contact form of the type
 $$
 dz - \sum_i q^i dp_i.
 $$
 In this formulation, the contact structure on TPS is related to the first law of thermodynamics, and Legendrian submanifold is related to an equilibrium state.

Note that $\nu_0$ is the Lebesgue measure and we can express
any nonnegative density $D$ as $D = f \, \nu_0$ 
for a nonnegative $L^1$-function $f$ on SPS with
$f =\frac{\del D}{\del \nu_0}$.

We now consider the system with $n$-observables, $F_1,\dots,F_n$.
Then, we want to find a probability density
$\rho \in \mathcal{P}(\Gamma)$ that maximizes the
relative entropy functional
$$
\CS(\rho)=-\int_\Gamma \rho \log \left(\frac{\del \rho}{\del \nu_0}\right).
$$
under the constraints
\begin{equation}\label{eq:pi=intFi}
 q^i  = \frac{\int_\Gamma F_i D}{\int_\Gamma D} =  \frac{\langle F_i \rangle_D}{M(D)}
\end{equation}
for given explicit observations $(q^1,\cdots,q^n)$ \cite{Jaynes:InfoStat}.
We note that while $\CD^+(\Gamma)$ is an open subset of a \emph{linear space},
$\CP(\Gamma)$ is not which is an affine hypersurface given by the integral constraint
\begin{equation}\label{eq:intrho=1}
\int \rho = 1.
\end{equation}
Instead of solving this maximization problem on $\CP(\Gamma)$, we solve it on the
(open subset of) linear space $\CD^+(\Gamma) \subset \CD(\Gamma)$ by considering \eqref{eq:intrho=1} itself
as a constraint like \eqref{eq:pi=intFi}.

In particular at any extremum point $D \in \CP(\Gamma)$, there exist some constants  $(w,\lambda_1, \ldots, \lambda_n)$ such that
we can express the one-form $d\CS|_D$
as a linear combination
\begin{equation}\label{eq:conormal-condition}
d\CS|_D = w \cdot d M|_D - \sum_i \lambda_i d\CO_{F_i}|_D
\end{equation}
(``The method of the Lagrange multiplier'').
 Presence of the negative sign
is nothing significant but just a traditional convention.

More explicitly, by substituting $\dot D$ into \eqref{eq:conormal-condition}, we derive the identity
$$
\dot{D}[\CS] = w \dot D[M] -\sum_i  \lambda_i \dot D[\langle F_i \rangle]
$$
which is equivalent to the vanishing of the directional derivative
$$
\dot D\left[\CS - w  M + \sum _i \lambda_i \CO_{F_i} \right] = 0
$$
for all $\dot D \in \dot \CD^+(\Gamma) \cong T_D \CD^+(\Gamma)
\cong \CD(\Gamma)$. This
can be explicitly written into
$$
-\int_\Gamma \dot D \left(\log{\frac{\del D}{\del \nu_0}
+w- \sum_I \lambda_i F_i}\right) = 0
$$
by the linearity of $\CO_{F_i}$ and by definition of the mass
function $M$. By writing
$$
f_D := \frac{\del D}{\del \nu_0},
$$
we obtain
$$
\log f_D + w -\sum_{i=1}^n \lambda_i F_i= 0
$$
and hence
\begin{equation}\label{eq:fD}
f_D = e^{-w + \sum_i \lambda_i F_i}
\end{equation}
Therefore we have proved
\begin{prop} At any extreme point $D$ of the entropy under the constraint
\eqref{eq:pi=intFi}, there exists a Lagrange multiplier
$(w, \lambda_1, \ldots, \lambda_n)$ such that
$D$ has the form
$$
D = e^{-w+ \sum_i \lambda_i F_i}\, \nu_0
$$
\end{prop}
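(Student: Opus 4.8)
The plan is to treat this as a constrained extremization problem on the open subset $\CD^+(\Gamma)$ of the linear space $\CD(\Gamma)$ of signed densities, and to apply the Lagrange multiplier principle. The advantage of working on $\CD^+(\Gamma)$ rather than directly on the affine hypersurface $\CP(\Gamma)$ is that the tangent space $T_D\CD^+(\Gamma)$ is the full vector space $\CD(\Gamma)$, so the admissible variations $\dot D$ are unconstrained; the normalization $\int_\Gamma \rho = 1$ is then imposed on the same footing as the observation constraints \eqref{eq:pi=intFi}, by treating $M$ as one further linear functional whose value is to be fixed.

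First I would record the three relevant differentials at a density $D = f_D\,\nu_0$. Since $M$ and $\CO_{F_i}$ are restrictions of linear functionals on $\CD(\Gamma)$, their differentials are computed directly as $dM|_D(\dot D) = \int_\Gamma \dot D$ and $d\CO_{F_i}|_D(\dot D) = \int_\Gamma F_i\,\dot D$. For the entropy, a one-line variation of $\CS(D) = -\int_\Gamma f_D\log f_D\, d\Gamma$ gives $d\CS|_D(\dot D) = -\int_\Gamma (\log f_D + 1)\,\dot D$.

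Next, by the Lagrange multiplier principle at a constrained extremum $D$ lying over a regular value of the constraint map, the covector $d\CS|_D$ must lie in the span of the constraint differentials, which is precisely the conormal condition \eqref{eq:conormal-condition}: there exist constants $(w,\lambda_1,\dots,\lambda_n)$ with $d\CS|_D = w\,dM|_D - \sum_i\lambda_i\,d\CO_{F_i}|_D$. Substituting the three formulas and collecting terms on one side yields $\int_\Gamma \dot D\bigl(\log f_D + w - \sum_i\lambda_i F_i\bigr) = 0$ for all $\dot D \in \CD(\Gamma)$, after absorbing the additive constant $1$ from the entropy variation into $w$. Because $\dot D$ ranges over all signed densities, the fundamental lemma of the calculus of variations forces the bracketed expression to vanish $\nu_0$-almost everywhere, giving $\log f_D = -w + \sum_i\lambda_i F_i$ and hence $D = e^{-w + \sum_i\lambda_i F_i}\,\nu_0$.

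The step I expect to require the most care is the justification of the Lagrange multiplier principle itself in this infinite-dimensional setting: I must ensure the constraint map $D\mapsto (M(D),\CO_{F_1}(D),\dots,\CO_{F_n}(D))$ is a submersion at $D$, equivalently that the functionals $\{M,\CO_{F_1},\dots,\CO_{F_n}\}$ are linearly independent as elements of $\CD(\Gamma)^*$, so that the normal space to the constraint level set is exactly their span. This is guaranteed by restricting to the regular values $B_\CF^\circ$ of the moment map $\CJ_\CF$, in the spirit of the linear-independence criterion following the corollary on $\{X_{\CO_{F_i}}\}$, and it is also what legitimizes the passage from ``vanishing directional derivative in every free direction'' to the pointwise identity. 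The remaining analytic point, that $\log f_D$ is locally integrable so that the fundamental lemma genuinely applies, is precisely where the functional-analytic subtleties the paper elects to suppress would have to be addressed.
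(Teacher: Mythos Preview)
Your proposal is correct and follows essentially the same approach as the paper: work on $\CD^+(\Gamma)$ so variations are unconstrained, invoke the Lagrange multiplier (conormal) condition \eqref{eq:conormal-condition}, substitute the explicit first variations, and apply the fundamental lemma to extract the pointwise identity $\log f_D = -w + \sum_i \lambda_i F_i$. You are in fact slightly more careful than the paper, which silently absorbs the ``$+1$'' from $d\CS|_D(\dot D) = -\int_\Gamma(\log f_D + 1)\,\dot D$ into $w$ without comment, and the paper does not pause to discuss the submersion hypothesis or the integrability of $\log f_D$ that you flag.
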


Then \cite{MNSS} proceeds further using the normalization condition
$$
1 = \int_\Gamma D = \int_\Gamma e^{-w+ \sum_i \lambda_i F_i}\, \nu_0
$$
which gives rise to the expression of $w$ in terms of other Lagrange multipliers
\begin{equation}\label{eq:ew}
e^w = \int_\Gamma e^{\sum_i \lambda_i F_i}\, \nu_0.
\end{equation}
By taking the logarithm thereof, we get
\begin{equation}\label{eq:w}
w = \log \left(\int_\Gamma e^{\sum_i p\lambda_i F_i}\, \nu_0\right).
\end{equation}
\begin{cor}\label{cor:MNSS}
For the probability density, $\rho_{\lambda}=e^{-w(\lambda)+\sum_i \lambda^iF_i}\, \nu_0$,
any equilibrium state of TPS associated to the observables $\CF = \{F_1, \ldots, F_n\}$ and the Lagrange multipliers $\lambda=(\lambda^1,\dots,\lambda^n)$ in the previous proposition,
the followings also hold:
\begin{enumerate}[$(1)$]
\item
At any extreme point $\rho$, $\rho$ has the form
\begin{equation}\label{eq:rho-formula}
\rho_\lambda =  \left(\frac{e^{\sum_i \lambda_i F_i}}{\int_\Gamma
e^{\sum_i \lambda_i F_i}} \right) \, \nu_0.
\end{equation}
\item The observations associated to the variables $\lambda = (\lambda_1, \cdots, \lambda_n)$
have their values
$$
q^i(\lambda):=\int_\Gamma F_i\rho_{\lambda} = \frac{\partial{w}}{\partial{\lambda_i}}
$$
\item We denote by $\rho_{(w,q,p)}: = \rho_\lambda$
the probability density $\rho_\lambda$ on $\Gamma$
determined above. Then in terms of the variables
$$
(\lambda_1, \ldots, \lambda_n) =: (p_1, \dots, p_n)
$$
the(relative) entropy is expressed as
$$
\CS (\rho_{(w,q,p)} )= - \int_{\Gamma} \rho_{(w,q,p)}
\log \frac{\del \rho_{(w, q,p)}}{\del \nu_0}.
$$
\end{enumerate}
\end{cor}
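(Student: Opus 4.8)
The plan is to dispatch the three assertions in order, each reducing to a one-line computation once the density formula $D = e^{-w+\sum_i \lambda_i F_i}\,\nu_0$ from the preceding proposition and the normalization identity \eqref{eq:ew} are available. For assertion (1), I would substitute the closed form of $e^{w}$ from \eqref{eq:ew} directly into that density. Writing $\rho_\lambda = e^{-w(\lambda)}\,e^{\sum_i \lambda_i F_i}\,\nu_0$ and using
$$e^{-w(\lambda)} = \left(\int_\Gamma e^{\sum_i \lambda_i F_i}\,\nu_0\right)^{-1}$$
yields \eqref{eq:rho-formula} immediately; no further argument is required.

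For assertion (2), the essential point is to read $w(\lambda)$ via \eqref{eq:w} as the logarithm of the partition function $Z(\lambda) := \int_\Gamma e^{\sum_j \lambda_j F_j}\,\nu_0$. Differentiating under the integral sign then gives
$$\frac{\partial w}{\partial \lambda_i} = \frac{1}{Z(\lambda)}\int_\Gamma F_i\, e^{\sum_j \lambda_j F_j}\,\nu_0 = \int_\Gamma F_i\,\rho_\lambda = \CO_{F_i}(\rho_\lambda) = q^i(\lambda),$$
where the second equality invokes the formula \eqref{eq:rho-formula} just obtained and the remaining equalities are the definition of the observation $\CO_{F_i}$ and of $q^i(\lambda)$. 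This is the familiar exponential-family fact that the gradient of the log-partition function returns the expectations of the sufficient statistics $F_i$.

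For assertion (3), I would insert $\log \tfrac{\partial \rho_\lambda}{\partial \nu_0} = -w(\lambda) + \sum_i \lambda_i F_i$ into the defining integral of $\CS$; the displayed equation is then literally the definition of the relative entropy evaluated at $\rho_{(w,q,p)}$ after relabelling $\lambda_i =: p_i$. The substantive content, which I would record alongside, is that using $\int_\Gamma \rho_\lambda = 1$ together with part (2) collapses this integral to
$$\CS(\rho_\lambda) = -\int_\Gamma \rho_\lambda\left(-w + \sum_i \lambda_i F_i\right) = w(\lambda) - \sum_i \lambda_i\, q^i(\lambda),$$
exhibiting $\CS$, in the variables $(w,q,p)$, as the Legendre transform of the potential $w$ in the conjugate pairs $(q^i, p_i)$; this is the thermodynamic duality that the parametrization is designed to produce.

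The hard part is not algebraic but analytic: justifying the differentiation under the integral sign in (2), the finiteness of $Z(\lambda)$, and the integrability of $F_i\,\rho_\lambda$ on the infinite-dimensional space $\CP(\Gamma)$. In the finite-dimensional exponential-family setting these are classical, but here they would require integrability hypotheses on $\CF$ that the paper deliberately does not verify. Consistent with the convention announced in the introduction of ignoring such functional-analytic issues, I would state these as standing assumptions, so that the corollary becomes a purely formal consequence of the preceding proposition and \eqref{eq:ew}--\eqref{eq:w}.
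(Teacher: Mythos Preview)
Your proposal is correct and matches the paper's approach: the paper states this corollary without an explicit proof, treating all three items as immediate consequences of the preceding proposition together with \eqref{eq:ew}--\eqref{eq:w}, which is exactly the derivation you carry out. Your additional observation that $\CS(\rho_\lambda) = w(\lambda) - \sum_i \lambda_i q^i(\lambda)$ exhibits the Legendre duality is in the same spirit as the remark immediately following the corollary in the paper.
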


\begin{rem}
\begin{enumerate}
\item \cite{MNSS} goes further  and consider the expressions
\beastar
s & = & -\log \rho = w - \sum_i p_iF_i \\
ds & = & dw - \sum_i F_i dp_i
\eeastar
by saying ``\emph{These are both functions of the microscopic variables $\Gamma$
(by way of $F_i(\Gamma)$), and of the parameters $w$ and $p_1, \cdots, p_n$. Nevertheless,
differentiation is understood to be only with respect to the variables
$w, \, p_1, \cdots, p_n$}.''
\item The upshot of Section \ref{sec:reduced-entropy} is that it provides precise mathematical
statement  in the above quotation mark in a mathematically consistent way.
\end{enumerate}
\end{rem}

\section{Statistical description of gas in a piston and  volume variable}
\label{sec:volume}

So far we have considered the configuration space $\R^{3N}$, i.e.,
the set of $N$
\emph{identical} particles on the full space $\R^3$.
In this section, we explain in our framework how we can involve
the thermodynamic volume variable, which is \emph{not}
a local observable in the gas model of statistical mechanics.

Consider a thermodynamic system of gas, (i.e., a system of a finite fixed number of particles) within
the cylindrical container
$$
D^2(r) \times (0, h)
$$
that is (vertically) expandable from the base. In other words, we assume that its volume varies by some outside work.
  Following
the common words, we will call a \emph{piston} any expandable container.

We denote by $h > 0$ the parameter with respect to which
the volume \emph{grows linearly} in terms of the variable $h$.
We may regard $h$ the height of the cylinder at a given moment and so
 the volume of the container in $\R^3$ associated to the parameter $h$ will be given by
$$
F_2(h) = C h
$$
for some fixed constant $C > 0$.

Then we consider the configuration space $M \subset \R^K \times \R_+$ with $K = 3N$
defined by
\begin{equation}\label{eq:M}
M =  \bigcup_{h \in \R^+}  (D^2 \times (0,h))^N \times \{h\}
\end{equation}
which is an open subset of $\R^{3N} \times \R_+ \subset \R^{3N} \times \R = \R^{3N+1}$.
We write the natural coordinates of $M$ coming from $\R^{3N+1}$ as
$$
(q_1, \cdots, q_{3N}, h), \quad  h \in \R_+.
$$
We then consider
\begin{equation}\label{eq:T*MtoT^*R+}
\Gamma: = \bigcup_{\Lambda \in \R^+} T^*(D^2 \times (0,h))^{N} \times \{\Lambda\}
\end{equation}
and the natural projection $\pi_h: \Gamma \to \R_+$, where we denote
\begin{equation}\label{eq:Lambda}
 \Lambda: = \text{\rm vol}( (D^2(r) \times (0,h))^N)= (\pi r^2 h)^N.
\end{equation}
(Here we use $\Lambda$ instead of $h$ to emphasize that our SPS is $T^* \R^{3N}$, not $T^*\R^3$.)

We write the natural coordinates of
$$
\Gamma \cong \bigcup_{\Lambda \in \R_+}
(D^2 \times (0,h))^N \times \R^{3N} \times \{\Lambda\} \subset \R^{3N} \times \R^{3N} \times \R_+
$$
by
$$
(q_1, \cdots, q_{3N}, p_1, \cdots, p_{3N}, h).
$$
The canonical symplectic form on $ T^*(D^2 \times (0,h))^{3N}$
is given by the restriction of
$$
\Omega_N = \sum_{i=1}^{3N} dq_i \wedge dp_i.
$$
\begin{rem}
The SPS in the present case is a degenerate symplectic manifold,
or more precisely a \emph{presymplectic} manifold with its null
distribution is spanned by $\frac{\del}{\del h}$:
For our purpose of deriving the equations of the ideal gas or the van der Waals models, $h$ does not appear as a dynamical variable
and so we do not need to consider dynamics on the variable $h$.
An upshot of our framework of the gas in a piston is that
\emph{in the statistics or in the information theoretic point of view},
we regard the motion of the gas in the piston as the composition
of two independent `events', one followed by the other:
\begin{enumerate}
\item [{(E1)}]The \emph{configuration space motion} of the piston encoded by the parameter
$h \in \R_+$,
\item [{(E2)}] The \emph{phase space motion} of the gas inside the container.
\end{enumerate}
\end{rem}
We hope to come back to the study of a more general thermodynamical system
elsewhere, and restrict ourselves to the
cases of ideal gas and van der Waals models in the present paper.

This being said, let $\CF = \{F_1, \cdots, F_n\}$ on $T^*\R^{3N}$
be a system of local observables. We consider the observable system
on $\Gamma$ given by
$$
\{F_1, \cdots, F_n, \Lambda\} =: \CF*\Lambda
$$
where $F_i$ are local functions on $T^*C_h^{3N} \subset T^*\R^{3N}$, which are
independent of $\Lambda$, i.e., have the form
$$
F_i = F_i(q,p), \quad i = 1, \cdots, n.
$$
We apply all of our construction preformed in the previous sections to this observable system
with the following modifications.

We define the collective observation to be
\begin{equation}\label{eq:new-Jg}
J_{\CF* \Lambda} = (\CO_{F_1}, \cdots, \CO_{F_n},\Lambda)
\end{equation}
and write the observation data set
$$
B_{\CF*\Lambda} := \Image J_{\CF*\Lambda} \subset \mathfrak{g}_{\CF*\Lambda}
$$
Then we have natural fibration
$$
B_{\CF*\Lambda} \to (0,\Lambda)
$$
and hence the $\CF*\Lambda$-reduced KTPS carries the variable $\Lambda$
only as a configuration variable \emph{without} conjugate partner in the SPS level. (Recall the direction of $h$ is the null direction of the
symplectic manifold $\Gamma$.)
By taking the relevant SPS and KTPS to be
$$
\Gamma = T^*C_h^{3N}  \times \R_+
$$
and
$$
T^*\CP(\Gamma)
$$
respectively, we still take the associated $\TPS$ to be
$$
\CT = J^1 (\mathfrak{g}_{\CF*\Lambda}).
$$
Then, similarly to \eqref{eq:generating-diagram}, the $\CF*\Lambda$-reduced
entropy function $\CS^{\text{\rm red}}_{\CF*\Lambda}$ has the diagram
\begin{equation}\label{eq:generating-diagram3}
\xymatrix{\CM_{\CF*\Lambda} \ar[r]^{\CS^{\text{\rm red}}_{\CF*\Lambda}} \ar[d] &\R \\
B_{\CF*\Lambda} &
}
\end{equation}
and generates the thermodynamic equilibrium of $\CF*\Lambda$. In addition to
the variables $(p_1, \cdots, p_n)$ with $p_i = \langle F_i \rangle_\rho$,
we have additional variable coming from $\Lambda$. We call this the \emph{volume variable}
of the system $\CF*\Lambda$ and its conjugate variable in
$J^1(\mathfrak g_{\CF * \Lambda}) \cong
T^*( \mathfrak g_{\CF * \Lambda})  \times \R$
the \emph{pressure variable} and denote it by $P$.

\section{Thermodynamics of ideal gas}
\label{sec:ideal-gas}

In this section and the next, we recall the well-known study on the equilibrium gas equations
in the point of view of statistical mechanics and reconstruct them by  our geometric reduction.

Deriving the gas equation in an equilibrium state in thermodynamics means to
express an element in our Legendrian submanifold $R_{\CF;\CS}$ in the form
$$
(df(p),p, f(p))
$$
for some function $f: {\R}^n_{\CF} \to \R$ called a \emph{thermodynamic potential}.
However such a global function may not exist in general as later illustrated by
the van der Waals gas model.

\begin{rem} We would like to attract readers' attention to the way how we write the
expression $(df(p),p,f(p))$ differently from $(q,df(q),f(q))$ which is the standard way
of writing the 1-jet graph of $f$ in contact geometry. We will see
that this is reason why the \emph{horizontal} line
is drawn in the Maxwell construction, not the \emph{vertical} line drawn in the
construction of \emph{graph selectors} in the cotangent or in the 1-jet bundle in
symplectic and contact geometry. (See \cite{oh:jdg}, \cite{oh:C0-continuity}, \cite{amorim-oh-santos}.)
\end{rem}

\subsection{Nonequilibrium thermodynamics of ideal gas}

Consider the ideal gas of $N$ identical particles contained in the piston $C^{3N}_h$
described in Section \ref{sec:ideal-gas}.
Recall that in the ideal gas model we start with the kinetic energy Hamiltonian
$$
K = \frac1{2m}\sum_{i=1}^N |P_i|^2
$$
on $\SPS$ which drives the dynamics of many-body system not only on
$SPS$ but also induces the kinetic equation
\eqref{eq:drhodt} which is
equivalent to the equation for the density
$$
\frac{\del f}{\del t} = - \{f, K\}, \quad f=f_\rho(t,Q,P)
$$
The Poisson bracket of the right hand side is computed to be
$$
\{f, K\} = P \cdot \nabla_Q f
$$
where $\nabla_Q f$ is the $Q$-gradient
$$
\nabla_Q f = \left(\frac{\del f}{\del Q_1}, \cdots, \frac{\del f}{\del Q_K}\right)
$$
and hence the equation is reduced to
\begin{equation}\label{eq:dfdt-2}
\frac{\del f}{\del t}  + P \cdot \nabla_Q f = 0.
\end{equation}

In this subsection, we would like to describe the reduced flow of this equation on TPS, i.e.,
on $J^1\R^2_{(U,V)}$ in terms of the coordinates $(U,V,T,P,S)$ for
$$
U: = \langle K \rangle_\rho, \quad S: = \CS(\rho).
$$
For this purpose, let us first give concrete description of the quotient space
$$
\CM^\CF_{\text{\rm int}} : = \bigcup_{\mu \in B_\CF^\circ} \CM^\CF_\mu = \bigcup_{\mu \in B_\CF^\circ} \CJ_\CF^{-1}(\mu)/\CG_\CF:
$$
In the current situation, the system $\CF * V$ consists of the kinetic energy $K$ which is a \emph{local observable}, and the volume $V$ which is a \emph{nonlocal one.} Therefore, dropping the nonlocal variable $V$ from the discussion for the moment, we will
focus on the part of local observable and consider the reduced KTPS
\begin{equation}\label{eq:reduced-KTPS}
\CJ_{\{K\}}^{-1}(\mu)/\CG_\CF
\end{equation}
and the $\{K\}$-reduced entropy function $\CS_\CF^{\text{\rm red}}: \CJ_{\{K\}} ^{-1}(\mu)/\CG_\CF \to \R$.

We first note that each element in \eqref{eq:reduced-KTPS} represents a trajectory of
a point $(\rho,\beta) \in \CJ_{\{K\}}^{-1}(\mu) \subset T^*\CP(\Gamma)$ induced by the Hamiltonian flow of
$K$ on SPS $\Gamma$ with energy expectation value (or energy observation)
$\mu$. In other words it is the
moduli space of solutions of the system of PDE
\begin{equation}\label{eq:kinetic-eq}
\frac{\del \rho}{\del t} = - \CL_{X_K} \rho, \quad \frac{\del \beta}{\del t} = \CL_{X_K} \beta
\end{equation}
for $(\rho,\beta)$ in
$$
T^*\CP(\Gamma) \cong \CP(\Gamma) \times \CD_0(\Gamma):
$$
We recall the action of $\CG_{\{K\}}$ on $T^*\CP(\Gamma)$ is given by
$$
(\phi, (\rho,\beta)) \mapsto (\phi_*\rho, \phi^*\beta).
$$
We now consider the time-evolution of the Lagrangian submanifold
$$
\Image d\CS \subset T^*\CP(\Gamma)
$$
under the flow \eqref{eq:kinetic-eq}. By the relation, $\beta = d\CS(\rho)$
on $\Image d\CS \subset T^*\CP(\Gamma)$, the evolution thereof is
determined by the evolution of $\rho$-component which is
$$
\frac{\del \rho}{\del t} + \CL_{X_K} \rho = 0
$$
with the substitution of $P = m v_0$ for a constant velocity
This equation is equivalent to
\begin{equation}\label{eq:kinetic-eq2}
\frac{\del f}{\del t} + mv_0 \cdot \nabla_Q f = 0.
\end{equation}
Since $v_0 \cdot \nabla_Q$ is the vector field generating the translation $t \mapsto q_0 + v_0 t$,
we have the unique solution $f(t,q) = f_0(q_0 - v_0t)$ with the initial probability condition 
$\rho_0 = f_0\, \nu_0$ \emph{assuming suitable regularity of $f_0$}.

\emph{One may interpret this kinetic equation as the mesoscopic nonequilibrium 
thermodynamics of the ideal gas.}

\subsection{Ideal gas equation at equilibrium}

In general thermodynamic equilibrium, say in the system $\CF$ of ideal gas or van der Waals gas equation, 
the state variables are given by
$(U, T, V, P, S)$:
\begin{itemize}
\item $U = \langle H \rangle_\rho$ is the energy,
\item $S := \CS(\rho)$ is the thermodynamic entropy,
\item $T$ is the temperature,
\item $V$ is the volume, and $P$ the pressure.
\end{itemize}
The first law of equilibrium thermodynamic is written as
$$
dS = \frac1T dU + \frac{P}{T} dV
$$
in the entropy representation.

For the ideal gas,
the energy and the volume of $(D^2(r) \times (0,h))^{3N}$
are given as
\beastar
F_1 &=& \frac{1}{2m}\sum_i^{N} |P_i|^2 \\
F_2 &=& C h, \quad C = \pi r^2
\eeastar
for some positive constant $m$. Set
$P = (P_1, \cdots, P_{3N}) \in \R^{3N}_p$
and
$$
Q = (Q_1, \cdots, Q_{3N},h) \in M
$$
for $M$ given in \eqref{eq:M}.
We have the formulae for the variable $U$ and $V$
\begin{equation}\label{eq:UV}
U = \langle F_1 \rangle_\rho, \quad V = \langle F_2 \rangle_\rho
\end{equation}
by definition. 

We now provide the explicit description
of the equilibrium of the ideal gas by deriving its gas equation.
We will express the thermodynamic entropy $S$ as a
function of $(U,V)$ by computing the formulae for the expectation value
of $F_1$ and $F_2$ on
$$
\Gamma = T^*\R^{3N} \times \R_+
$$
with respect to the probability
distribution $\rho$ thereon.

\subsubsection{Calculation of $U = \langle F_1 \rangle_{\rho_\lambda}$}

For the probability distribution $\rho$ on $T^*\R^{3N}$
or on its open subsets,
the procedure given in Section \ref{sec:review} provides $\rho_\lambda = f \, d\Gamma$ with
\begin{equation}\label{eq:frho2}
f = \frac{ e^{\frac{\lambda_1}{2m} \sum_i |P_i|^2 + \lambda_2 Ch}}{\int_{Q^{3N}_h} e^{\frac{\lambda_1}{2m}
\sum_i |P_i|^2 + \lambda_2 Ch }\, d\Gamma}.
\end{equation}
Therefore
\begin{equation}
\langle F_i \rangle_{\rho_\lambda} = \frac{
 \int F_i e^{ \frac{\lambda_1}{2m} \sum_i |P_i|^2 + \lambda_2 Ch} \, d\Gamma}
{\int_{Q^{3N}_h} e^{\frac{\lambda_1}{2m} \sum_i |P_i|^2 + \lambda_2 Ch\, d\Gamma}  \, d\Gamma}.
 \end{equation}
We compute the expectation value
\beastar
\langle F_1 \rangle_{\rho_\lambda}
& = & \frac{\int_{\Gamma} \left(\frac1{2m} \sum_{i=1}^N |P_i|^2\right)
e^{\lambda_1 \left(\frac1{2m} \sum_{i=1}^N |P_i|^2\right) + \lambda_2 C h} dP dQ dh}
{\int_{\Gamma}
e^{\lambda_1 \left(\frac1{2m} \sum_{i=1}^N |P_i|^2\right) + \lambda_2 C h} dP dQ dh}\\
& = & \frac{\int_{\R^{3N}} \left(\frac1{2m} \sum_{i=1}^N |P_i|^2\right)
e^{\lambda_1 \left(\frac1{2m} \sum_{i=1}^N |P_i|^2\right) } dP}
{\int_{\R^{3N}}
e^{\lambda_1 \left(\frac1{2m} \sum_{i=1}^N |P_i|^2\right) } dP}
\eeastar
where $dP = \wedge_{i=1}^N (dp_i^1 \wedge dp_i^2 \wedge dp_i^3)$
with $P_i = (p_i^1,p_i^2,p_i^3)$. We evaluate the integral using the following recurrence
relation. We set
$$
E_N: =  \frac{\int_{\R^{3N}} \left(\frac1{2m} \sum_{i=1}^N |P_i|^2\right)
e^{\lambda_1 \left(\frac1{2m} \sum_{i=1}^N |P_i|^2\right) } dP}
{\int_{\R^{3N}}
e^{\lambda_1 \left(\frac1{2m} \sum_{i=1}^N |P_i|^2\right) } dP}
$$
and $U_N = \frac1{2m} \sum_{i=1}^N |P_i|^2$.

\begin{lem} For $N \geq 2$, we have
$$
E_N = E_1 + E_{N-1},
$$
and $E_1 = -\frac{3}{2 \lambda_1}$.  In particular, we have
$E_N = -\frac{3N }{2 \lambda_1}$.
\end{lem}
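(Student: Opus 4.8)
The plan is to exploit the fact that the Boltzmann weight $e^{\lambda_1 U_N}$ with $U_N = \frac{1}{2m}\sum_{i=1}^N |P_i|^2$ carries \emph{no coupling between distinct particles}, so that the normalized measure $d\mu_N := e^{\lambda_1 U_N}\,dP/Z_N$, where $Z_N := \int_{\R^{3N}} e^{\lambda_1 U_N}\,dP$, is a product measure $\bigotimes_{i=1}^N d\mu_1^{(i)}$ of identical single-particle Gaussian measures on $\R^3$. For all these integrals to converge one must (and we do) assume $\lambda_1 < 0$, which is the usual requirement that $\lambda_1$ play the role of a negative inverse temperature; I would record this hypothesis at the outset. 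With this factorization in hand, $E_N = \langle U_N\rangle_{\mu_N}$ is simply the expectation of a sum of $N$ independent identically distributed terms, and both claimed assertions become immediate.

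For the recurrence, I would first split $U_N = U_{N-1} + \frac{1}{2m}|P_N|^2$ and use linearity of the expectation:
\[
E_N = \Big\langle U_{N-1}\Big\rangle_{\mu_N} + \Big\langle \tfrac{1}{2m}|P_N|^2\Big\rangle_{\mu_N}.
\]
In the first term the integration over the $P_N$-variable factors out of numerator and denominator and cancels, leaving exactly $\langle U_{N-1}\rangle_{\mu_{N-1}} = E_{N-1}$; symmetrically, in the second term the integration over $P_1,\dots,P_{N-1}$ cancels, leaving $\langle \tfrac{1}{2m}|P_N|^2\rangle_{\mu_1} = E_1$ since the single-particle factors are identical. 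This yields $E_N = E_{N-1}+E_1$ for $N \geq 2$.

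For the base case I would compute $E_1$ via the standard one-dimensional Gaussian moment. Writing $P_1 = (p^1,p^2,p^3)$, the measure splits into three copies of the one-dimensional Gaussian proportional to $e^{a x^2}\,dx$ with $a = \lambda_1/(2m) < 0$, for which $\int_{\R} x^2 e^{ax^2}\,dx = -\frac{1}{2a}\int_{\R} e^{ax^2}\,dx$. Hence $\langle (p^j)^2\rangle = -m/\lambda_1$ for each $j$, and
\[
E_1 = \frac{1}{2m}\sum_{j=1}^3 \langle (p^j)^2\rangle = \frac{1}{2m}\cdot 3\cdot\Big(-\frac{m}{\lambda_1}\Big) = -\frac{3}{2\lambda_1}.
\]
(Equivalently, using $E_N = \partial_{\lambda_1}\log Z_N$ together with $Z_1 = (-2\pi m/\lambda_1)^{3/2}$, one gets $E_1 = \partial_{\lambda_1}\log Z_1 = -\frac{3}{2\lambda_1}$.) Solving the recurrence with this initial value gives $E_N = N E_1 = -\frac{3N}{2\lambda_1}$, as claimed.

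I do not anticipate a genuine obstacle here: the only point requiring care is the convergence and sign condition $\lambda_1 < 0$ that legitimizes every Gaussian integral and the product decomposition. Once that is fixed, the statement is a routine consequence of the independence of the particle momenta and the elementary second-moment computation, and is in fact the mesoscopic shadow of equipartition that will feed into the ideal gas law in the next subsection.
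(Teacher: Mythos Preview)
Your proof is correct and follows essentially the same approach as the paper: both exploit the product factorization of the Gaussian measure to obtain the recurrence $E_N = E_1 + E_{N-1}$, and then evaluate $E_1$ directly. The only cosmetic difference is that the paper computes $E_1$ via spherical coordinates in $\R^3$ (reducing to the ratio $\int_0^\infty r^4 e^{\lambda_1 r^2/(2m)}\,dr \big/ 2m\int_0^\infty r^2 e^{\lambda_1 r^2/(2m)}\,dr$), whereas you further factor into three one-dimensional Gaussians; both are routine and yield $E_1 = -\tfrac{3}{2\lambda_1}$.
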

\begin{proof} We can express the last integral as
$$
E_N = \frac{\int_{\R^3} \int_{\R^{3(N-1)}}
(U_1 + U_{N-1}') e^{\lambda_1 (U_1 + U_{N-1}')}
dP_1 \wedge dP_2 \wedge \cdots \wedge dP_N}{\int_{\R^3} e^{\lambda_1 U_1 \,
dP_1 \cdot
\int_{\R^{3(N-1)} }e^{\lambda_1 U_{N-1}'}dP_2 \wedge \cdots \wedge dP_N} }
$$
for $U_1 = \frac1{2m} |U_1|^2$ and
$U_{N-1}' = \frac1{2m}\sum_{i=2}^2 |U_i|^2$. It becomes
\beastar
& {}&  \frac{\int_{\R^3}U_1 e^{\lambda_1U_1}\, dP_1} {\int_{\R^3} e^{\lambda_1 U_1} dP_1}
+ \frac{\int_{\R^{3(N-1)}}e^{\lambda_1 U'_{N-1}} \, dP_2 \wedge \cdots \wedge dP_N)}{\int_{\R^{3(N-1)}}
e^{\lambda_1 U'_{N-1}} dP_2 \wedge \cdots \wedge dP_N} \\
& = & E_1 + E_{N-1}.
\eeastar
We now express
\beastar
E_1 & =  & \frac{\int_{\R^3}U_1 e^{\lambda_1U_1}\, dP_1} {\int_{\R^3} e^{\lambda_1 U_1} dP_1}
=\frac{\int_0^{2\pi }\int_0^\pi \int_0^\infty \frac{r^2}{2m} e^{\frac{\lambda_1}{2m} r^2} \,
 r^2\,\sin \theta  dr \, d\varphi\, d\theta  }
 {\int_0^{2\pi }\int_0^\pi \int_0^\infty e^{\frac{\lambda_1}{2m} r^2} \,
 r^2\,\sin \varphi  dr  \, d\varphi\, d\theta } \\
 & = &
 \frac{\int_0^\infty r^4 e^{\frac{\lambda_1}{2m} r^2} dr}
 {2m \int_0^\infty r^2 e^{\frac{\lambda_1}{2m} r^2} dr} = - \frac{3}{2\lambda_1}.
 \eeastar
\end{proof}
We note that for the integral to converge, we need to choose $\lambda_1 < 0$.
In that case  a direct calculation gives rise to
\begin{lem}\label{eq:U=}
\begin{equation}\label{eq:U}
U = \langle F_1 \rangle_{\rho_\lambda}  = - \frac{3N }{2 \lambda_1}.
\end{equation}
\end{lem}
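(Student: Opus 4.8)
The plan is to recognize that the quantity $U = \langle F_1 \rangle_{\rho_\lambda}$ is precisely the ratio $E_N$ defined and evaluated in the preceding lemma, so that the statement follows by assembling the pieces already in hand. First I would observe that the Gibbs density $\rho_\lambda$ of \eqref{eq:frho2} factors as a product of a term $e^{\lambda_1 F_1}$ depending only on the momenta $P$ and a term $e^{\lambda_2 C h}$ depending only on the container height; since the observable $F_1 = \frac{1}{2m}\sum_{i=1}^N |P_i|^2$ is a function of $P$ alone, the integrations over the configuration variables $Q$ and over $h$ occur identically in the numerator and denominator of $\langle F_1 \rangle_{\rho_\lambda}$ and cancel. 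This reduction---already carried out in the display immediately preceding the statement---yields the identity $\langle F_1 \rangle_{\rho_\lambda} = E_N$.

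Second I would invoke the recurrence lemma just established, namely $E_N = E_1 + E_{N-1}$ together with $E_1 = -\frac{3}{2\lambda_1}$, and conclude by induction on $N$ that $E_N = -\frac{3N}{2\lambda_1}$, which is the asserted value of $U$. In this sense the present statement is essentially a corollary of the previous one, the new content being merely the identification of the thermodynamic energy $U$ with the momentum average $E_N$.

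The only point requiring care---and the closest thing to an obstacle---is the convergence of the Gaussian momentum integrals: the integrals $\int_0^\infty r^{2k} e^{\lambda_1 r^2/(2m)}\, dr$ appearing in the spherical-coordinate evaluation of $E_1$ are finite precisely when $\lambda_1 < 0$. I would therefore record the sign constraint $\lambda_1 < 0$ as a standing hypothesis under which the preceding computations are legitimate; physically this is the statement that the inverse-temperature multiplier has the correct sign, and it is exactly the condition under which the equilibrium density $\rho_\lambda$ is normalizable. With this understood, the formula $U = -\frac{3N}{2\lambda_1}$ follows at once.
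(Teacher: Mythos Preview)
Your proposal is correct and follows essentially the same approach as the paper: the paper has already reduced $\langle F_1\rangle_{\rho_\lambda}$ to the ratio $E_N$, proved the recurrence $E_N = E_1 + E_{N-1}$ with $E_1 = -\frac{3}{2\lambda_1}$ in the preceding lemma, and noted the convergence requirement $\lambda_1 < 0$ just before stating the present lemma as its immediate consequence. Your write-up simply makes explicit the assembly of these pieces, which the paper leaves implicit by the phrase ``a direct calculation gives rise to.''
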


\subsubsection{Calculation of $V$}

Similar calculation, which is easier, leads to
\beastar
\langle F_2 \rangle_{\rho_\lambda}
& = & \frac{\int_{\Gamma} (C h)
e^{\lambda_1 F_1 + \lambda_2 Ch} dQ \wedge dh}
{\int_{\Gamma}
e^{\lambda_1 F_1 + \lambda_2 Ch }  dQ \wedge dh} \\
& = & \frac{\int_{C^N_h \times \R_+} (C h) e^{\lambda_2 Ch}\, dQ \wedge dh}
{\int_{C^N_h \times \R_+} e^{\lambda_2 Ch}  \, dQ \wedge dh}
=  \frac{ \int_0^\infty Ch (\pi h)^N e^{\lambda_2 Ch} \, dh}
{ \int_0^\infty (\pi h)^N e^{\lambda_2 Ch} \, dh}
\eeastar
We now evaluate integral
\bea
\int_0^\infty (C h) (\pi h)^N e^{\lambda_2 Ch}   dh & = &
\left(\frac{-1}{\lambda_2 C}\right)^{N+2} (N+1)! C \pi^N
\nonumber \\
\int_0^\infty (\pi h)^N  e^{\lambda_2 Ch} dh & = &
\left(\frac{-1}{\lambda_2 C}\right)^{N+1} (N)! \pi^N.
\eea
This proves
\begin{lem}\label{eq:V=}
\begin{equation}\label{eq:V}
V  : = \langle F_2 \rangle_{\rho(w,q,p)} \,= - \frac{N+1}{\lambda_2}.
\end{equation}
\end{lem}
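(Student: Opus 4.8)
The plan is to reduce the computation of $V = \langle F_2 \rangle_{\rho_\lambda}$ to a single one-dimensional integral in the height variable $h$ and then evaluate it via a standard factorial integral. First I would invoke Corollary \ref{cor:MNSS}, which gives $\rho_\lambda = f\, d\Gamma$ with $f$ proportional to $e^{\lambda_1 F_1 + \lambda_2 F_2}$, so that
\[
\langle F_2 \rangle_{\rho_\lambda} = \frac{\int_\Gamma F_2\, e^{\lambda_1 F_1 + \lambda_2 F_2}\, d\Gamma}{\int_\Gamma e^{\lambda_1 F_1 + \lambda_2 F_2}\, d\Gamma}.
\]
Since $F_1 = \frac{1}{2m}\sum_i |P_i|^2$ depends only on the momenta $P$ while $F_2 = Ch$ depends only on $h$, the measure $d\Gamma = dP\, dQ\, dh$ factors in the quotient: the $P$-integral of $e^{\lambda_1 F_1}$ contributes an $h$-independent constant common to numerator and denominator, while the $Q$-integral over the container $(D^2(r)\times(0,h))^N$ contributes its Euclidean volume $(\pi r^2 h)^N$, i.e. a factor proportional to $h^N$. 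Both common factors cancel, leaving precisely the ratio of one-dimensional $h$-integrals displayed immediately above the statement.

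Next I would evaluate those two integrals. The only analytic input is the elementary identity
\[
\int_0^\infty h^k\, e^{\lambda_2 C h}\, dh = k!\left(\frac{-1}{\lambda_2 C}\right)^{k+1},
\]
obtained by the substitution $t = -\lambda_2 C\, h$, which reduces the improper integral to the standard factorial integral $\int_0^\infty t^k e^{-t}\, dt = k!$. Applying this with $k = N+1$ in the numerator (after pulling out the prefactor $C\pi^N$) and with $k = N$ in the denominator (after pulling out $\pi^N$) reproduces the two displayed evaluations verbatim. Taking the quotient, the factors $\pi^N$ cancel, the ratio $(N+1)!/N!$ collapses to $N+1$, and one surviving power of $-1/(\lambda_2 C)$ combines with the leftover factor $C$ to give $-1/\lambda_2$, so that $V = -\tfrac{N+1}{\lambda_2}$.

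The one point requiring genuine care—exactly the analogue of the $\lambda_1 < 0$ hypothesis in Lemma \ref{eq:U=}—is the convergence condition $\lambda_2 C < 0$. Since $C = \pi r^2 > 0$, this amounts to demanding $\lambda_2 < 0$; without it the improper integrals in $h$ diverge and the Gibbs density fails to be normalizable. I expect this to be the main (and essentially the only) obstacle, and it is resolved in the same way as in the energy computation: the Lagrange multiplier conjugate to a positive, unbounded observable must be negative for the density $\rho_\lambda$ to be a probability measure. With this restriction in force, the remaining steps are the routine factorial algebra sketched above.
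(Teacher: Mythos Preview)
Your proposal is correct and follows essentially the same approach as the paper: both factor out the momentum integral, reduce to the one-dimensional $h$-integral via the container volume $\propto h^N$, and evaluate the resulting ratio using the factorial identity $\int_0^\infty h^k e^{\lambda_2 C h}\,dh = k!\,(-1/(\lambda_2 C))^{k+1}$. Your explicit remark on the convergence condition $\lambda_2<0$ is a welcome clarification that the paper leaves implicit.
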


\subsubsection{Evaluation of $S$}

Finally we compute
\beastar
\CS(\rho_\lambda) & = & \int_\Gamma - \rho_\lambda \frac{\del \rho_\lambda}{\del \nu_0}
=\int_\Gamma (w -  \lambda_1 F_1 - \lambda_2 F_2) \rho_\lambda \\
& =  & w - \lambda_1 \langle F_1 \rangle_{\rho_\lambda} - \lambda_2
\langle F_2 \rangle_{\rho_\lambda}.
\eeastar
From \eqref{eq:frho2}, we have
$$
\rho_\lambda = f\, d\Gamma \frac{e^{\lambda_1 F_1 + \lambda_2 F_2} \, d\Gamma }{\int_\Gamma e^{\lambda_1 F_1 + \lambda_2 F_2} \, d\Gamma}.
$$
We also have
$$
w = \log \left(\int_\Gamma e^{\lambda_1 F_1 + \lambda_2 F_2} \, d\Gamma \right)
$$
from \eqref{eq:w} for which we write
$$
\int_\Gamma e^{\lambda_1 F_1 + \lambda_2 F_2} \, d\Gamma
= \int_{\R^{3N}} e^{\frac{\lambda_1}{2} |P|^2 \, dP} \cdot
\int_0^\infty (\pi h)^N e^{\lambda_2 Ch} \, dh.
$$
We have already computed
$$
 \int_0^\infty (\pi h)^N e^{\lambda_2 Ch}\, dh = \left(\frac{-1}{\lambda_2C}\right)^{N+1}
N! \pi^N
$$
and compute
$$
 \int_{\R^{3N}} e^{\frac{\lambda_1}{2} |P|^2} \, dP
 =  \left(2 \sqrt{-\frac{2\pi m}{\lambda_1}}\right)^N.
 $$
 This proves
$$
 w = \log \left( \left(\frac{-1}{\lambda_2C}\right)^{N+1}
N! \pi^N \right) + \log \left(2 \sqrt{-\frac{2\pi m}{\lambda_1}}\right)^N.
$$
Combining the above, we have computed
\begin{lem}\label{eq:w=}
\begin{equation}\label{eq:S}
w = \frac{3N}{2} \log\left(- \frac{2\pi m}{\lambda_1}\right)
+ (N+1)\log\left(-\frac{1}{\lambda_2 C}\right) + \log(N!) + N\log(2\pi).
\end{equation}
\end{lem}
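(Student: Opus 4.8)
The plan is to read off $w$ directly from its defining expression \eqref{eq:w}, which for the present system of two observables, the kinetic energy $F_1 = \frac{1}{2m}\sum_i |P_i|^2$ and the volume $F_2 = Ch$, reads $w = \log\left(\int_\Gamma e^{\lambda_1 F_1 + \lambda_2 F_2}\, d\Gamma\right)$. The key structural observation is that the two observables decouple: $F_1$ depends only on the momenta $P$ and $F_2$ only on the height $h$, while the Liouville measure factorizes as $d\Gamma = dP\, dQ\, dh$ over $\Gamma = T^*\R^{3N} \times \R_+$. Since the integrand is independent of the base position $Q$, the $Q$-integration over the configuration region $C_h^N$ merely contributes the $N$-th power of its cross-sectional volume, namely the factor $(\pi h)^N$. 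Thus the partition integral factorizes into a product of a pure Gaussian momentum integral and a one-dimensional $h$-integral, exactly as displayed just before the statement, and $w$ becomes the sum of the logarithms of these two factors.

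Next I would evaluate the two factors separately. The momentum factor is a product of $3N$ identical one-dimensional Gaussian integrals, convergent precisely when $\lambda_1 < 0$ (the sign condition already enforced in the derivation of \eqref{eq:U}); its value is a constant multiple of $\left(-2\pi m/\lambda_1\right)^{3N/2}$, whose logarithm supplies the leading term $\frac{3N}{2}\log\left(-\frac{2\pi m}{\lambda_1}\right)$. The height factor $\int_0^\infty (\pi h)^N e^{\lambda_2 Ch}\, dh$ is the same Gamma-type integral already evaluated in the computation of $V$ in Lemma \ref{eq:V=}: it converges for $\lambda_2 < 0$ and equals $\left(-\frac{1}{\lambda_2 C}\right)^{N+1} N!\, \pi^N$, whose logarithm supplies $(N+1)\log\left(-\frac{1}{\lambda_2 C}\right) + \log(N!)$ together with the remaining numerical term.

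There is no conceptual obstacle here: the whole statement is the bookkeeping of two elementary integrals, and taking logarithms turns the product into the asserted sum. The only genuine points requiring care are the convergence conditions $\lambda_1, \lambda_2 < 0$, which constrain the admissible range of Lagrange multipliers, and the precise normalization constants of the one-dimensional Gaussian and of the Gamma integral, since it is exactly these prefactors that assemble into the term $\log(N!)$ and the final additive numerical constant in the closed-form expression for $w$.
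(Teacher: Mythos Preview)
Your proposal is correct and follows essentially the same route as the paper: factorize the partition integral $\int_\Gamma e^{\lambda_1 F_1 + \lambda_2 F_2}\, d\Gamma$ into the Gaussian momentum integral over $\R^{3N}$ and the Gamma-type $h$-integral (the latter already evaluated in the proof of Lemma~\ref{eq:V=}), then take logarithms and collect terms. The paper does exactly this, with the same decoupling observation and the same reuse of the $h$-integral; your added remarks on the convergence constraints $\lambda_1,\lambda_2<0$ are appropriate and consistent with the paper's earlier discussion.
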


\subsubsection{Derivation of ideal gas law}

At the equilibrium state $R_{\CF;\CS}$, a thermodynamic potential is
nothing but $\CS_\CF^{\text{\rm red}}$ with respect to the 
preferred variable $(U,V)$ restricted to the associated
Legendrian submanifold regarded as a global function in that a point
$(q,p,w) \in R_{\CF;\CS}$ can be expressed as
$$
q = \mu, \, p = df(\mu), \, w = f(\mu).
$$
In general such a function may or may not exist as a \emph{single-valued} function.
The following definition deserves attention which is a standard terminology.

\begin{defn}\label{defn:holonomic} Let $(q_1, \cdots, q_n, p_1, \cdots, p_n,w)$ be the
canonical  coordinates of the 1-jet bundle $J^1M$ and consider the canonical contact form
$\lambda = dw - \sum_{i=1}^n p_i dq_i$. We call a Legendrian submanifold $R \subset J^1M$
\emph{holonomic with respect to the  variables $(q_1,\cdots, q_n)$}
if it admits a function $g: M \to \R$ such that
$R = \Image j^1g$. We say $R$ is \emph{non-holonomic
with respect to the variables $(q_1,\cdots, q_n)$ otherwise}.
\end{defn}

We will see that for the ideal gas, such a function indeed exists which we denote by
$S = S(U,V)$  of the variables $(U,V)$. By definition, we have the formula
$$
S = - \int_\Gamma \rho_\lambda \frac{\del \rho_\lambda}{\del \nu_0}.
$$
\begin{prop} Let $R_{\CF;\CS} \subset \R^5$ be as above.
Then $R_{\CF;\CS} \subset \R^5$ is holonomic with respect to $(U,V)$ and
 have the formula for the thermodynamic potential
$$
S(U,V) = \frac{3N}{2}\log\left(\frac{4\pi m U}{3N}\right) +
(N+1)\log\left(\frac{V}{C}\right)  + \log(N!) + N \log 2\pi.
$$
\end{prop}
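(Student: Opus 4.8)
The plan is to reduce the statement to the one-variable integrals already packaged in the preceding lemmas, and to split it into two independent assertions: that $R_{\CF;\CS}$ is a graph over the $(U,V)$-plane (holonomicity in the sense of Definition \ref{defn:holonomic}), and the closed form of the potential. Both follow once the correspondence between the Lagrange multipliers $(\lambda_1,\lambda_2)$ parametrizing the Gibbs densities $\rho_\lambda$ and the observation values $(U,V)$ is inverted explicitly.

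For holonomicity I would argue as follows. The equilibrium states are the Gibbs densities $\rho_\lambda = e^{-w+\lambda_1 F_1 + \lambda_2 F_2}\,\nu_0$ of Corollary \ref{cor:MNSS}, specialized in \eqref{eq:frho2}, so a fibre of $R_{\CF;\CS}$ over the observation base is parametrized by $(\lambda_1,\lambda_2)\in(-\infty,0)^2$, the negativity being forced by the convergence of the Gaussian momentum integral and of the exponential $h$-integral. By \eqref{eq:U} and \eqref{eq:V} one has $U=-3N/(2\lambda_1)$ and $V=-(N+1)/\lambda_2$; since $U$ depends only on $\lambda_1$ and $V$ only on $\lambda_2$, and each map is strictly monotone, the assignment $(\lambda_1,\lambda_2)\mapsto(U,V)$ is a diffeomorphism of $(-\infty,0)^2$ onto $(0,\infty)^2$, with inverse $\lambda_1=-3N/(2U)$, $\lambda_2=-(N+1)/V$. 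I would then conclude that the projection $R_{\CF;\CS}\to\R^2_{(U,V)}$ is a diffeomorphism onto its image, so $R_{\CF;\CS}=\Image j^1 S$ for a single-valued $S=S(U,V)$. The decoupling of the two defining relations is the whole point: it is exactly this feature that fails for the van der Waals isotherm and produces non-holonomicity there.

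For the potential itself I would start from the identity $S=\CS(\rho_\lambda)=w-\lambda_1 U-\lambda_2 V$ derived just before \eqref{eq:S}. From \eqref{eq:U} and \eqref{eq:V} one gets $-\lambda_1 U=3N/2$ and $-\lambda_2 V=N+1$, while feeding the inverse relations into Lemma \ref{eq:w=} through $-2\pi m/\lambda_1 = 4\pi m U/(3N)$ and $-1/(\lambda_2 C)=V/(C(N+1))$ turns $w$ into precisely the right-hand side asserted in the statement. This recovers the stated formula up to the additive constant $3N/2+N+1=5N/2+1$; since the thermodynamic entropy is fixed only up to an additive constant — here the classical $5N/2$ of the Sackur--Tetrode expression — the formula for $S(U,V)$ holds as written. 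I would also record $\partial S/\partial U=-\lambda_1=1/T$ and $\partial S/\partial V=-\lambda_2=P/T$, which simultaneously confirms that $\Image j^1 S$ carries the first law $dS=\frac1T\,dU+\frac{P}{T}\,dV$ and pins down the conjugate variables $T$ and $P$.

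The main obstacle is conceptual rather than computational: one must be sure that the horizontal projection in the generating-function construction of Section \ref{sec:reduced-entropy} genuinely presents $R_{\CF;\CS}$ as a graph, and this reduces exactly to the global invertibility of $(\lambda_1,\lambda_2)\mapsto(U,V)$ verified above. The integrals themselves are already carried out in Lemmas \ref{eq:U=}--\ref{eq:w=}, so the remaining work is only the bookkeeping of the substitutions and the observation that the leftover constant is the physically irrelevant entropy normalization.
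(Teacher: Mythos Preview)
Your proof is correct and follows essentially the same route as the paper: invert $(\lambda_1,\lambda_2)\mapsto(U,V)$ using \eqref{eq:U} and \eqref{eq:V}, then substitute into \eqref{eq:S}. You are in fact more careful than the paper, both in spelling out the monotonicity that gives global invertibility and in flagging the additive constant $5N/2+1$ that the stated formula silently omits.
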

\begin{proof} From the explicit formulae for $\U$ and $V$ in \eqref{eq:U}, \eqref{eq:V},
we can \emph{invert} the  map $(\lambda_1, \lambda_2) \mapsto (U,V)$ and get
$$
\lambda_1 = - \frac{3N}{2U}, \quad \lambda_2 = -\frac{N+1}{V}
$$
into \eqref{eq:S}. The formula for $S$ \emph{as a single-valued} function follows.
\end{proof}

Since the conjugate variables of $U,\, V$ are
 $\frac{1}{T}, \, \frac{P}{T}$, we have derived the equation
\beastar
 \frac1{T}  & =  & \frac{\del S}{\del U} = \frac{3N}{2U} \\
 \frac{P}{T} & = &  \frac{\del S}{\del V} = \frac{N+1}{V}
 \eeastar
which is equivalent to the well-known ideal gas equation
\begin{cor}[Ideal gas equation] Let $N$ be the number of particles
in a piston of ideal gas. Then we have the relations
\begin{equation}\label{eq:ideal-gas-eq}
U = \frac{3}{2}NT, \qquad
V = (N+1) \frac{T}{P}
\end{equation}
(modulo a multiplication by constant).
\end{cor}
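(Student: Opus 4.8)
The plan is to read off the two equations of state directly from the explicit entropy potential supplied by Proposition \ref{prop:S(U,V)}, using the first law of thermodynamics in the entropy representation to identify the thermodynamic conjugate variables. In our framework the temperature $T$ and pressure $P$ are \emph{defined} along the equilibrium Legendrian $R_{\CF;\CS}$ by the conjugacy relations $\partial S/\partial U = 1/T$ and $\partial S/\partial V = P/T$ encoded in $dS = \frac{1}{T}\,dU + \frac{P}{T}\,dV$. Proposition \ref{prop:S(U,V)} guarantees that $R_{\CF;\CS}$ is holonomic with respect to $(U,V)$, so these partial derivatives are genuinely well-defined single-valued functions on the base, and the conjugacy relations make sense pointwise.

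First I would differentiate the formula for $S(U,V)$ furnished by Proposition \ref{prop:S(U,V)}. Since only the first logarithmic term depends on $U$ and only the second on $V$, the computation collapses immediately to
\[
\frac{\del S}{\del U} = \frac{3N}{2U}, \qquad \frac{\del S}{\del V} = \frac{N+1}{V}.
\]
Equating these with $1/T$ and $P/T$ respectively gives $\tfrac{1}{T} = \tfrac{3N}{2U}$ and $\tfrac{P}{T} = \tfrac{N+1}{V}$, and a one-line inversion produces the claimed relations $U = \tfrac{3}{2}NT$ and $V = (N+1)\tfrac{T}{P}$.

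There is no genuine obstacle in the corollary itself: the entire content is differentiate-then-invert, and the real labour lies upstream in the Gaussian and Gamma-integral evaluations (Lemmas \ref{eq:U=}, \ref{eq:V=} and \ref{eq:w=}) together with the inversion of the map $(\lambda_1,\lambda_2)\mapsto(U,V)$ that turns $S$ into a single-valued potential. The only points worth a sentence of commentary are interpretive rather than computational: the factor $(N+1)$ in place of the textbook $N$ originates from treating the volume as a \emph{non-local} observable, for which the configuration weight $(\pi h)^N$ contributes the extra unit in Lemma \ref{eq:V=}; and the qualifier \emph{modulo a multiplication by constant} absorbs Boltzmann's constant and the normalization implicit in the choice $\nu_0 = d\Gamma$. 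I would therefore close by remarking that $\tfrac{P}{T} = \tfrac{N+1}{V}$ is precisely $PV = (N+1)T$, recovering the classical ideal gas law $PV = NkT$ up to these normalizations.
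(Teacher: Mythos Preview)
Your proposal is correct and follows essentially the same route as the paper: differentiate the explicit potential $S(U,V)$ from Proposition \ref{prop:S(U,V)}, identify the partials with $1/T$ and $P/T$ via the first law, and invert. Your added remarks on the origin of the $(N+1)$ factor and the meaning of ``modulo a multiplication by constant'' are accurate interpretive glosses that go slightly beyond what the paper spells out.
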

Here we would like to highlight the fact that the equilibrium
state $R_{\CF;\CS}$ has constant energy in the isotherm, and
hence it has the form of a cylinder over the one dimensional
curve in the $PV$-plane. In particular \emph{its
 $PV$-diagram undergoes no phase transition}.

\section{Phase transition, Maxwell construction and graph selector}
\label{sec:maxwell}

In our construction, a relevant thermodynamic equilibrium is given by
the Legendrian submanifold $R$ generated by the $\CF$-reduced  entropy function
$$
\CS^{\text{\rm red}}_\CF: \CM^{\CF} \to \R
$$
where $\CM^{\CF} \to \R^2 = \R^2_{U,V}$ is a (infinite dimensional) symplectic fibration.

If we change the preferred variables $(U,V)$ to something else, 
the coordinate representation of the resulting Legendrian submanifold is not necessarily holonomic
(or graph-like) in the given preferred variables. This  indeed
occurs for the case of the van der Waals gas for the preferred variables $(P,T)$.
As a result, the aforementioned thermodynamic variable cannot be globally holonomic
(at least so over the $PT$-plane or over $PU$-plane). An effort of correcting this deficiency
is precisely the celebrated Maxwell construction in equilibrium thermodynamics
the explanation of which is now in order.

\subsection{Van der Waals model and its phase transition}

The van der Waals gas is the model describing the gas whose assumptions are
more rigid than the ideal gas \cite{Moran:Thermo}.
It needs two more restrictions dictated by 2 additional parameters $a, \, b$:
\begin{enumerate}
    \item {\bf Potential hypothesis} : There exists attractive potential between
    the particles with the constant $a$.
    \item {\bf Volume hypothesis} : Each particle has non-zero size with
    the internal volume of  $b$.
\end{enumerate}
Then the observables, volume and energy, should have the form of
\beastar
U(\vec{x},\vec{p},h) &=& \frac{1}{2m}\sum_i |p_i|^2 - \frac{aN^2}{C}\cdot\frac{1}{h}\\
V(\vec{x},\vec{p},h) &=& Ch-Nb
\eeastar
for some constant $m$ and $C$.

Consider the new observables $X,Y$ defined as
\beastar
X &=& V + Nb = Ch \qquad \hskip1in \text{(volume of container)}\\
Y &=& U + aN^2\cdot \frac{1}{V+Nb} = \frac{1}{2m}
\sum_i |p_i|^2. \qquad \text{(kinetic energy)}
\eeastar
Note that the conjugate variables $(\lambda_X,\lambda_Y)$
of $(X,Y)$ are related to those $(\lambda_U,\lambda_V)$
conjugate to $(U,V)$ by the equation
$$
dS - \lambda_X dX -\lambda_Y dY = dS - \lambda_U dU - \lambda_V dV
$$
where the integration of $(Ch-bN)^N$ is the reason why the volume of the space each particles available is the only $(Ch-bN)$ for fixed $h$.
Therefore, we derive
\beastar
\lambda_X &=& \lambda_V+aN^2\cdot\lambda_U \frac{1}{(V+Nb)^2} \\
\lambda_Y &=& \lambda_U.
\eeastar
By the same kind of calculation as in the ideal gas case, we compute
\beastar
X &=& \frac{\partial{w}}{\partial{\lambda_X}}
= -(N+1) \frac{1}{\lambda_X} + bN
\simeq NT \frac{1}{ P +a \frac{N^2}{(V+Nb)^2}} + bN \\
Y &=& \frac{\partial{w}}{\partial{\lambda_Y}}
= -\frac{3}{2}(N+1) \frac{1}{\lambda_Y} \simeq \frac{3}{2} NT
\eeastar
after some truncation of higher order terms.
Therefore, we derive the Van der Waals gas equation
\begin{equation}\label{eq:van-der-Waals-eq}
\begin{cases}
(P +a \frac{N^2}{(V+Nb)^2})V = NT \\
U = \frac{3}{2} NT-a \frac{N^2}{(V+Nb)^2}.
\end{cases}
\end{equation}
In terms of the effective volume of container as
$$
V_{\text{\rm eff}} = Ch=V+Nb,
$$
the equation is equivalent to
\begin{equation}\label{eq:van-der-Waals-eq2}
\begin{cases}
(P+a \frac{N^2}{\widetilde V^2})(V_c-bN) = NT \\
U = \frac{3}{2} NT-a \frac{N^2}{\widetilde V^2}
\end{cases}
\end{equation}
where $\widetilde V: = V_{\text{\rm eff}}$.

We now identify the associated thermodynamic potential  relative
to the preferred variable $(T,P)$
by rewriting the  first thermodynamic law $dU = TdS - PdV$
 into the following form
\begin{equation}\label{eq:TP-potential}
d(U+PV - TS) =-SdT + VdP.
\end{equation}
The function $\Upsilon : \R^5 \to \R$ defined by
\begin{equation}\label{eq:Upsilon}
\Upsilon = U + PV -TS (= H - TS)
\end{equation}
is nothing but the well-known Gibbs free energy (or \emph{available energy} in Gibbs' own term) \cite{gibbs},
where $H: = U + PV$ is call the \emph{enthalpy} in thermodynamics.
This $\Upsilon$ will play the role for the preferred thermodynamic variables
$(P,T)$.  (We avoid using the common alphabet `$G$' to write the Gibbs free energy
not to confuse readers here since the same alphabet $G$ appears in
the graph of Figure \ref{fig:graph2}
where we use the same alphabets as Maxwell did in \cite{maxwell:construction}.)

\begin{ques} Can we express the equilibrium Legendrian submanifold $R_{\CF;\CS}$
as the image of one-jet graph of differentiable function $f: \R^2_{(T,P)} \to \R$, i.e.,
such that
$$
R_{\CF;\CS} = \left\{ (q,df(q), f(q)) \, \Big|\, q \in \R_{(T,P)}^2 \right\}?
$$
I.e., is $R_{\CF;\CS}$ is holonomic with respect to the variables $(T,P)$?
\end{ques}

We will now illustrate by the van der Waals model that such a
task is not possible in general, and that the
\emph{Maxwell construction} is precisely
the outcome of Maxwell's effort to overcome this deficiency.
It is easy to show that the change of coordinates
$$
(U,T,P,V,S) \mapsto (T,-S, V, P, \Upsilon)
$$
preserves the thermodynamic contact structure
$$
\xi: = \ker\left(dS - \frac1T dU + \frac{P}{T} dV\right) ( = \ker (dU - TdS + PdV)).
$$
With this preparation, we are now ready to do the analysis of van der Waals model.
When $T = T_0$ is given, the first equation of
\eqref{eq:van-der-Waals-eq} becomes
a cubic equation of $V$,
$$
\left(P+a \frac{N^2}{\widetilde V^2}\right)(\widetilde V-bN)
= N T_0.
$$
When $P = P_0$ is given in addition to $T$, we have the cubic equation of $\widetilde V$
\begin{equation}\label{eq:Vc-eq}
\widetilde V^3 + \alpha \widetilde V^2 + \beta \widetilde V + \gamma = 0
\end{equation}
with
$$
\alpha = -\left(bN + N \frac{T_0}{P_0}\right), \, \beta = \frac{aN^2}{P_0}, \,
\gamma = - \frac{abN^3}{P_0}.
$$
The second equation of \eqref{eq:van-der-Waals-eq2} is of the form
\begin{equation}\label{eq:U2}
U = - \frac{aN^2} {\widetilde V^2} + \frac32 NT_0.
\end{equation}
\begin{rem} A fundamental difference of the van der Waals equation \eqref{eq:U2} from
the ideal gas equation \eqref{eq:ideal-gas-eq} is that the energy
$U$ is no longer constant in the \emph{isotherm} of the $PV$-diagram.
\end{rem}

The cubic equation \eqref{eq:Vc-eq} shows that there is a phase transition in the
$PV$-diagram at
the critical temperature $T = T_c$ at which the discriminant
\begin{equation}\label{eq:discriminant}
D = 18\alpha \beta \gamma- 4\alpha^3 \gamma + \alpha^2\beta^2 - 4\beta^3 - 27\gamma^2
\end{equation}
of the cubic polynomial vanishes: When $T < T_c$, \eqref{eq:Vc-eq} has 3 distinct roots, while when $T > T_c$,
it has a unique root.

\subsection{Metamorphosis of the isotherms in the $PV$-diagram}
\label{subsec:PVdiagram}

In thermodynamic equilibrium, a necessary condition
for stability is that pressure
$P$ does not increase with volume $V$, i.e., that \emph{volume should decrease as
pressure increases}. As we see in Figure \ref{fig:graph1}, this fails when temperature
$T$ is below the critical temperature $T_c$. The Maxwell construction  is a
way of correcting this deficiency by  considering a \emph{continuous} thermodynamic potential function, \emph{if we can}.

\begin{figure}[htbp]
\centering
\def\svgwidth{10pt}
\includegraphics[scale=0.5]{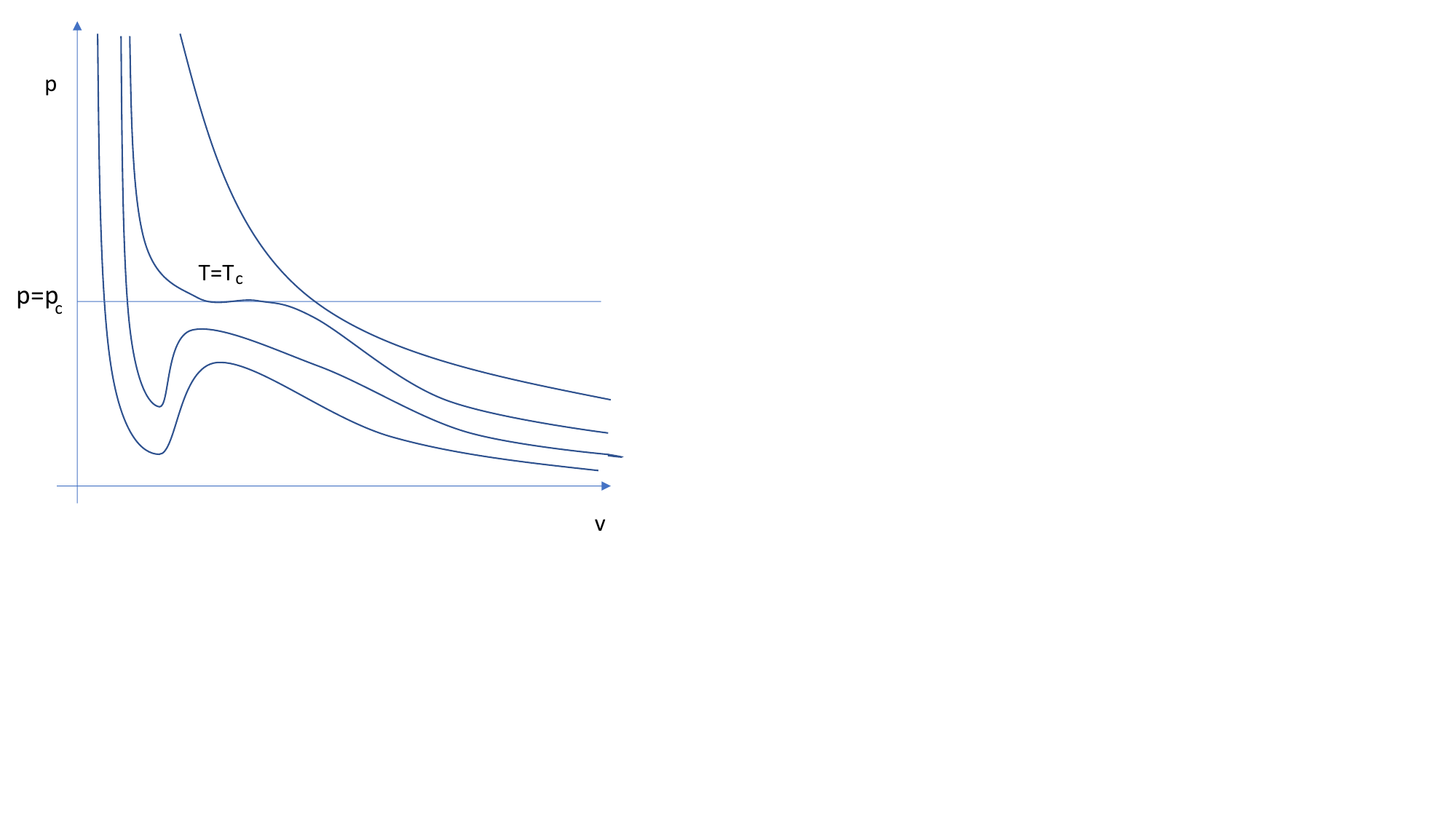}
\caption{Phase transition of $PV$-diagram of isotherm}
\label{fig:graph1}
\end{figure}

The above two cases can be also differentiated by the geometry of associated
equilibria whose explanation is now in order.

For this purpose, we first introduce the notion of
\emph{graph selector} of
a Legendrian submanifold $R \subset J^1M$.
The following definition is the Legendrian analog in  the 1-jet bundle $J^1M$
to the notion of graph selectors for the Lagrangian submanifolds
in the cotangent bundle $T^*M$.
(See \cite{arnaud}, \cite{amorim-oh-santos} for the definition of symplectic  graph selectors.)

\begin{defn}[Legendrian graph selector] \label{defn:graph-selector}
Let $\iota: N \to J^1M$ be a compact, Legendrian embedding
of $J^1M$. A \emph{graph selector} of a Legendrian submanifold $R \subset J^1M$
is a Lipschitz function $f:M\to\mathbb{R}$ such that $f$ is differentiable
on a dense open set $\mathcal{U} \subset M$ of full measure
and for all points $q \in \mathcal{U}$ we have
$$
(q,df(q), f(q))\in R.
$$	
\end{defn}

With this definition in our disposal, we can differentiate the two cases as follows.

\begin{thm}\label{thm:maxwell-construction}
Let $T_c$ be the aforementioned critical temperature. Then
\begin{enumerate}
\item When $T > T_c$, $R_{\CF;\CS}$ is a 1-jet graph of a function
$f: \R^2_{P,T} \to \R$.
\item When $T < T_c$,
\begin{enumerate}
\item such a globally single-valued function does not exist.
\item  Maxwell construction provides a graph selector
in the sense of  Definition \ref{defn:graph-selector} over the $PT$-plane, i.e., regarding
$(P,T)$ as the variables.
\end{enumerate}
\end{enumerate}
\end{thm}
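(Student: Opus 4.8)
The plan is to reduce everything to the multivaluedness of the van der Waals isotherms and to read off holonomy from the projection $\pi \colon R_{\CF;\CS} \to \R^2_{P,T}$. By \eqref{eq:TP-potential} the Gibbs free energy $\Upsilon = U + PV - TS$ of \eqref{eq:Upsilon} satisfies $d\Upsilon = VdP - SdT$ along the equilibrium, so $R_{\CF;\CS}$ is holonomic over $(P,T)$ precisely when $\Upsilon$ descends to a single-valued $f(P,T)$ with $\partial_P f = V$ and $\partial_T f = -S$, in which case $R_{\CF;\CS} = \Image j^1 f$. Since for fixed $(P_0,T_0)$ the admissible values of the effective volume $\widetilde V$ are exactly the real roots of the cubic \eqref{eq:Vc-eq}, the number of sheets of $\pi$ over $(P_0,T_0)$ equals the number of those roots, governed by the discriminant \eqref{eq:discriminant}.

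For part (1), when $T > T_c$ the cubic \eqref{eq:Vc-eq} has a unique real root (negative discriminant \eqref{eq:discriminant}); by the implicit function theorem $\widetilde V = \widetilde V(P,T)$ is single-valued and smooth. Substituting into \eqref{eq:van-der-Waals-eq2} and into the entropy formula produces single-valued smooth $U(P,T), V(P,T), S(P,T)$, hence a single-valued $\Upsilon =: f$; the relation $d\Upsilon = VdP - SdT$ then identifies $R_{\CF;\CS}$ with $\Image j^1 f$.

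For part (2a), when $T < T_c$ there is a nonempty open interval of pressures over which \eqref{eq:Vc-eq} has three distinct real roots, so $\pi$ is three-to-one there; concretely, the isotherm is $S$-shaped and folds along the spinodal locus $\{\partial P/\partial \widetilde V = 0\}$, which is the caustic of $\pi$. Were $R_{\CF;\CS} = \Image j^1 f$ for a single-valued $f$, the projection $\pi|_{R_{\CF;\CS}}$ would be injective, contradicting the three-to-one behavior; hence no globally single-valued potential exists over $(P,T)$.

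For part (2b), I build the selector as the lower envelope of the multivalued Gibbs energy. Fixing $T = T_0 < T_c$ and integrating $d\Upsilon = VdP$ along the folded isotherm produces a self-intersecting (swallowtail) graph of $\Upsilon$ against $P$, whose two outer branches are the stable liquid and gas phases and whose middle branch is unstable; I set $f(P,T) := \min \Upsilon$ over the branches above $(P,T)$. The two stable branches cross at a unique Maxwell pressure $P_M(T_0)$ determined by $\Upsilon_{\mathrm{liq}}(P_M,T_0) = \Upsilon_{\mathrm{gas}}(P_M,T_0)$, and since $\Upsilon$ is the integral of $VdP$, this equality is exactly the Maxwell equal-area law. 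The resulting $f$ is continuous, piecewise smooth, and differentiable off the measure-zero locus $\{P = P_M(T)\}$, with $(P,T,df(P,T),f(P,T)) \in R_{\CF;\CS}$ at each differentiability point, so it is a graph selector in the sense of Definition \ref{defn:graph-selector}. I expect the main obstacle to lie here: verifying the Lipschitz bound and full-measure differentiability at the fold points of the spinodal (where $\partial_P f = V$ stays finite while $\partial_P V$ blows up), and matching the corner of $f$ across $P_M$ --- where the selected volume $\partial_P f = V$ jumps between the two phases --- so that $f$ is genuinely non-$C^1$ yet remains Lipschitz, with the equal-area law furnishing precisely the continuity of $f$ at $P_M$.
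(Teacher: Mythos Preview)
Your argument is correct and tracks the paper's proof closely: both of you dispose of (1) by the single-root case of the cubic, read off (2a) from the three-sheetedness of the projection, and build the selector in (2b) by integrating $V\,dP$ along the isotherm so that Maxwell's equal-area law is exactly the continuity condition for the potential at the distinguished pressure. The one difference is in how the selected branch is singled out. The paper works piecewise: it explicitly picks the minimal-volume branch $C^{\min}$ for $P>P_{\text{mx}}$ and the maximal-volume branch $C^{\max}$ for $P<P_{\text{mx}}$, integrates each separately, and then invokes the equal-area law to glue them continuously at $P_{\text{mx}}$. You instead integrate along the full folded isotherm to produce the swallowtail in the $\Upsilon$--$P$ plane and then define $f=\min\Upsilon$ as the lower envelope, recovering $P_{\text{mx}}$ as the crossing point of the two stable branches (and hence the equal-area law) rather than assuming it. Your variational formulation has the advantage of tying the construction directly to the thermodynamic principle of Gibbs-energy minimization and to the general min--max machinery behind graph selectors, while the paper's explicit branch bookkeeping makes the geometry of the wavefront (Figures~\ref{fig:graph3}--\ref{fig:graph4}) and the jump of $\partial_P f=V$ across $P_{\text{mx}}$ more transparent. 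The Lipschitz and full-measure concerns you flag are real but mild here: $V$ stays bounded on the physical sheets, so $f$ is Lipschitz, and the non-differentiability locus is just the codimension-one Maxwell curve $\{P=P_{\text{mx}}(T)\}$; the paper handles this at the same level of rigor you do.
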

\begin{proof} Statement (1) is obvious and so we focus on Statement (2).

In the latter case, we have $0 < T < T_c$.
From the first equation of \eqref{eq:van-der-Waals-eq}
$$
\left(P+a \frac{N^2}{\widetilde V^2}\right)(\widetilde V-bN) = k_BN T_0,
$$
we obtain the pressure formula
\begin{equation}\label{eq:P}
P = - a \frac{N^2}{\widetilde V^2} + \frac{k_B NT_0}{\widetilde V - bN}.
\end{equation}
There exists a unique critical pressure $P_c = P_c(T_c)$ such that
\begin{itemize}
\item when $P > P_c$, \eqref{eq:Vc-eq} has a unique solution $\widetilde V$,
\item when $P< P_c$, it has 3 distinct values of $\widetilde V$.
\end{itemize}
(See Figure \ref{fig:graph1}.)

For each $0 < T < T_c$, there is a unique pressure
denoted by $P_{\text{\rm mx}} = P_{\text{\rm mx}}(T) $ which we call 
\emph{Maxwell pressure} at which \emph{Maxwell equal area law holds}. (See Figure \ref{fig:graph2}.)

\begin{figure}[htbp]
 \centering
 \def\svgwidth{10pt}
\includegraphics[scale=0.5]{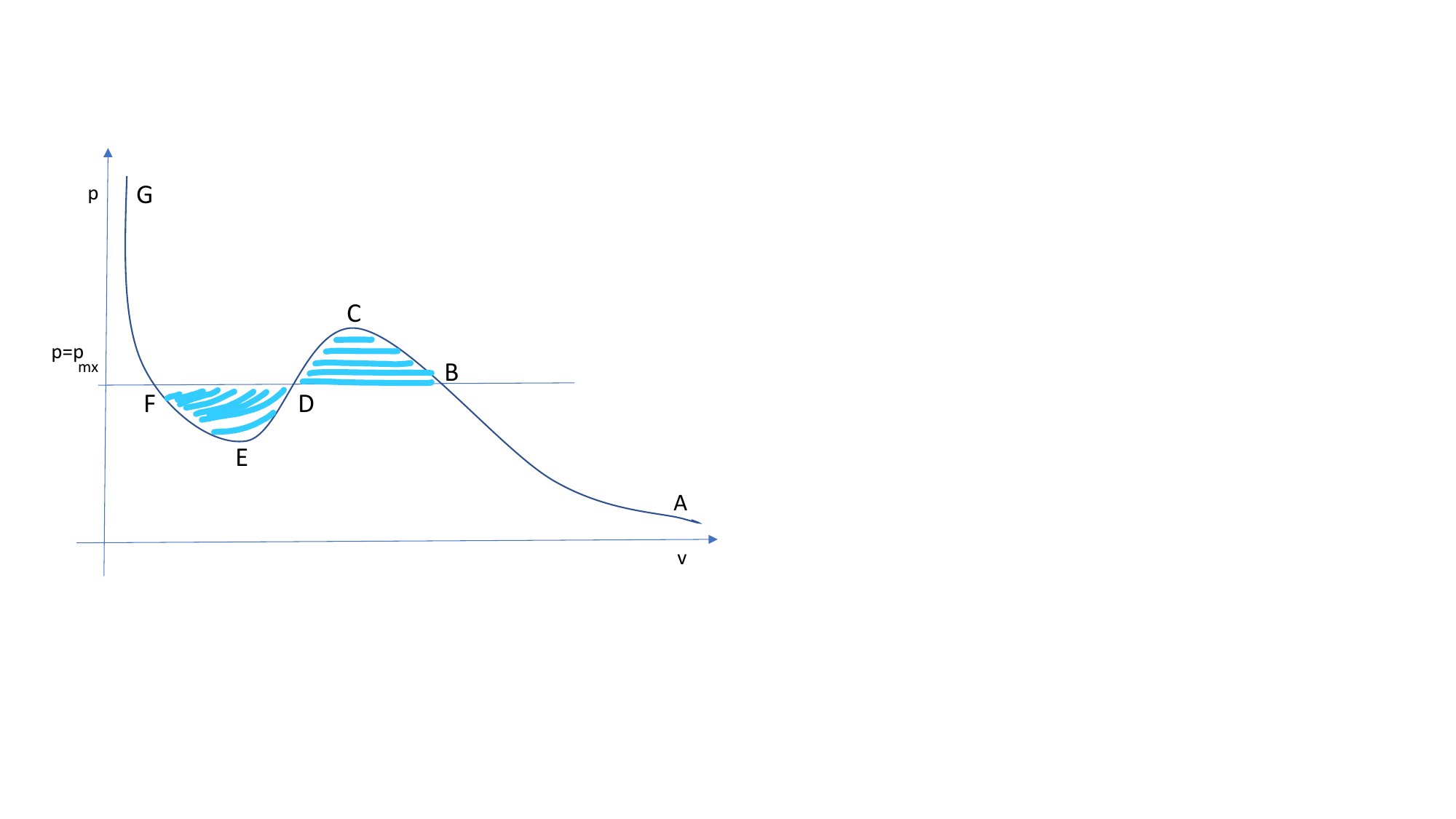}
\caption{Maxwell equal area law}
\label{fig:graph2}
\end{figure}

Now we will attempt to find a function of the type $f = f(P,T)$ so that
\begin{equation}\label{eq:graph-selector}
\left(T,P, \frac{\del f}{\del T}(T,P), \frac{\del f}{\del P}(T,P), f(T,P)\right) \in R_{\CF;\CS}
\end{equation}
explicitly by integration regarding the $PV$-diagram as the differential of a multi-valued
function whose differential is determined by the Legendrian submanifold $R_{\CF;\CS}$ (See \cite{arnaud,amorim-oh-santos} for such a procedure.):
Regard $\R^2_{(P,V)}$ as the cotangent bundle $T^*\R_P$ and
then the isotherm in the $PV$-diagram can be regarded as the Lagrangian submanifold
$$
(P,V) \in  \R^2_{(P,V)} \cong T^*\R_P
$$
where $\alpha = V dP \in T_P^*\R_P$.

To see how the aforementioned thermodynamic stability fails,
we consider the commutative diagram of the projections
$$
\xymatrix{\R_{(V,P)}^2 \ar[d] & \ar[l] R_{\CF;\CS} \cap \{T= T_0\} \ar[r] \ar[d]  & R_{\CF;\CS}
\ar[d] \\
\R_P & \ar[l] \R_{(P,T_0)} \ar[r] & \R_{(P,T)}^2
}
$$
and try to find a \emph{single-valued} section
$\Upsilon = \Upsilon(P,T)$ of the RHS projection. The $PV$-diagram
is the union of the projections of the top middle term to
the top left plane  $\R^2_{(V,P)}$ as $T_0$ varies.

\subsection{Maxwell construction and non-differentiable thermodynamic potential}
\label{subsec:maxwell}

\emph{For the simplicity of notation
for the following discussion, we will just write $V$ for $\widetilde V$ with $V$ 
representing the effective volume of container.}

We then consider the \emph{front projection}
$$
R_{\CF;\CS} \to \R_{(T,P,S)}^3; \quad (U,T,V,P,S) \to (T,P,S)
$$
and followed by the projection $\R_{(T,P,S)}^3 \mapsto \R_{(T,P)}^2$.
We call the latter composition $R_{\CF;\CS} \to \R_{(T,P)}^2$ the \emph{base projection}.
At each fixed temperature $T$, we would like to express the base projection of the isotherm
of the equilibrium $R_{\CF;\CS}$ as the graph $(P, g_T(P))$
of a single-valued function $V = g_T(P)$, \emph{if possible}.

Recalling that $P$ and $V$ are conjugate variable, we make the
 identification
$$
\R_{(P,V)}^2 \cong T^*\R_P
$$
so that the $V$-coordinate is the one for the coordinate representation
of the $\alpha \in T_P^* \R_P$ as
$$
\alpha = V \, dP, \quad V = \frac{\del f_T}{\del P}(P)
$$
for a germ $g$ of functions at $P$. As seen
in the isotherm pictured around the shaded region
 in Figure \ref{fig:graph2}, it has 3 different  branches of the
 graph seen horizontally.

Here we attempt to construct the function $f_T$
by assigning the value $f_T(P)$ of $f_T$ as follows:
Denote by $C^{\text{\rm min}}$ and $C^{\max}$
be the minimal and the maximal branches of the isotherm
in the $PV$ diagram for the region $P > P_{\text{\rm mx}}(T)$
and $P < P_{\text{\rm mx}}(T)$ respectively. We express
\beastar
C^{\text{\rm min}} & = & \{(P, \alpha^-(P)) \mid P > P_{\text{\rm mx}} \} \\
C^{\text{\rm max}} & = & \{(P, \alpha^+(P)) \mid P < P_{\text{\rm mx}} \}.
\eeastar
Then we define
$$
f_T(P) = \begin{cases} - \int_P^\infty \alpha^-(p) \, dp \quad & \text{\rm for }\, P > P_{\text{\rm mx}} \\
- \int_{P_{\text{\rm mx}}}^\infty \alpha^-(p) \, dp  - \int_{P_{\text{\rm mx}}}^P \alpha^+(p) \, dp
\quad & \text{\rm for }\, P < P_{\text{\rm mx}}.
\end{cases}
$$
It follows \emph{from Maxwell's equal area condition}
 that this function $g_T$
\emph{continuously} extends across the Maxwell pressure point
$P = P_{\text{\rm mx}}(T)$.
(See Figure \ref{fig:graph3}.)

\begin{figure}[htbp]
\centering
\def\svgwidth{10pt}
\includegraphics[scale=0.5]{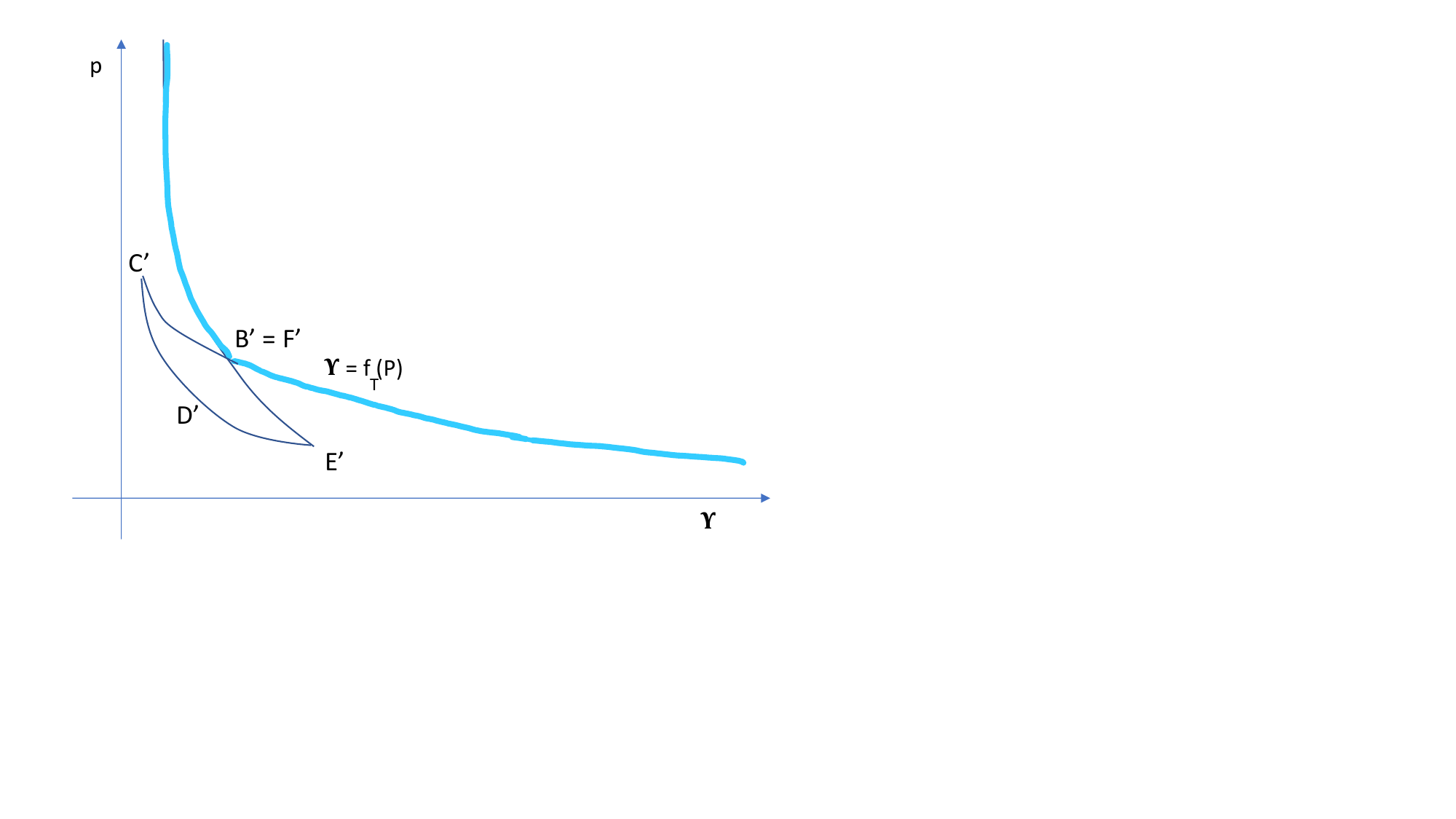}
\caption{Maxwell construction 1}
\label{fig:graph3}
\end{figure}

By varying $T$, this determines $V$ as a continuous function of
$(T,P) \mapsto f_T(P)$.  Then  substituting we determine $U = U(P,T)$ from
\eqref{eq:van-der-Waals-eq}.

Next we determine  the thermodynamic potential $S = S(P,T)$
by integrating the equation of the thermodynamic first law
$$
dS = \frac{1}{T} \, dU + \frac{P}{T}\, dV
$$
with $T$ fixed: the equation is integrable
since the function $V = V(P,T)$ and $U = U(P,T)$
are Lipschitz functions and hence their derivatives are bounded.
The resulting integral defines a function that is continuous even across
the Maxwell pressure hypersurface
$$
\{(U,S,T,P,V) \mid P = P_{\text{\rm mx}}(T)\}
$$
by the equal area law.

This enables us to express Gibbs free energy $\Upsilon$ on the equilibrium $R_{\CF;\CS}$
as a continuous function  $\Upsilon = f(T,P): = f_T(P)$ where $f$
is continuous but not differentiable across the curve
$\{(T,P)  \mid P = P_{\text{\rm mx}}(T) \}$:
The derivative $\frac{\del g}{\del P}$ jumps on $P = P_{\text{\rm mx}}(T)$
for $ 0 < T < T_c$. (See the highlighted part of Figure \ref{fig:graph4}.)
This finishes construction of a graph selector of
the equilibrium Legendrian submanifold $R_{\CF;\CS}$ over the
$PT$-plane, i.e., regarding $(P, T)$ as the preferred variables.

\begin{figure}[htbp]
\centering
\def\svgwidth{10pt}
\includegraphics[scale=0.5]{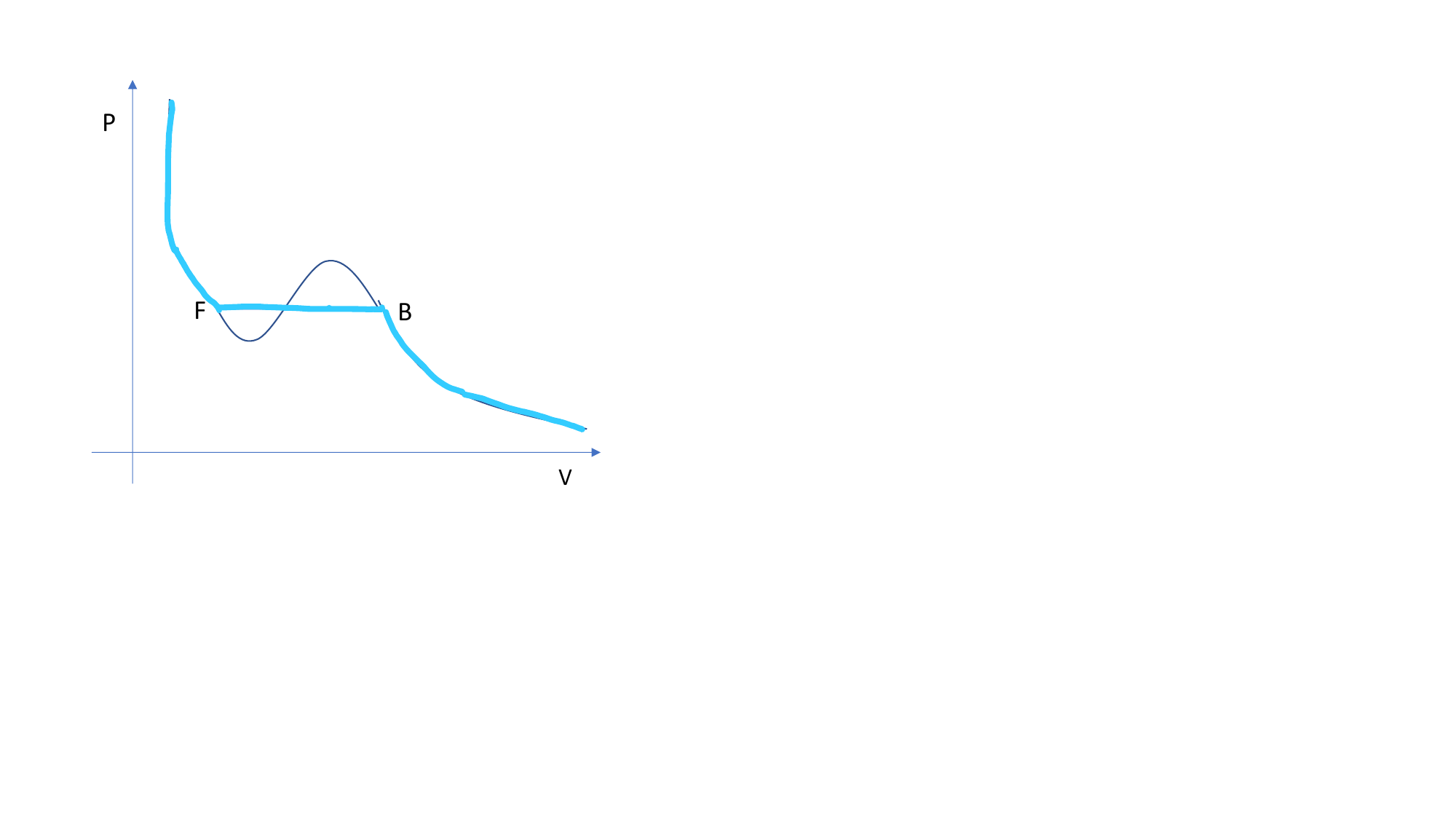}
\caption{Maxwell construction 2}
\label{fig:graph4}
\end{figure}

\end{proof}

The image of the one-jet map $j^1f_T: \R^2_{(T,P)} \to \R^5$ has
its  image contained in $R_{\CF;\CS}$ by construction with the
properties (See Figure \ref{fig:graph4}.):
\begin{enumerate}
\item  its cotangent projection  contains
jump-discontinuity across the set
$\{P = P_{\text{\rm mx}}\}$.
\item the function $f$ is defined on the open dense subset
$$
\pi_{(T,P)}\left(
R_{\CF;\CS} \setminus \{ P \neq P_{\text{\rm mx}}(T), \, 0 < T < T_c \}\right).
$$
\item the function, however, continuously extends across
the discontinuity locus
$$
\{(T,P) \mid P = P_{\text{\rm mx};T}, \, T \in \R_+\}.
$$
\item The aforementioned jump discontinuity  can be \emph{canonically} filled in by the Maxwell construction
on $\{ P \neq P_{\text{\rm mx}}(T), \, 0 < T < T_c \}$
using the equal area law.
\end{enumerate}

\begin{defn}[Maxwell adjustment] Denote by $R_{\CF;\CS}^{\text{\rm mx}}$
the above adjustment performed on the Legendrian submanifold
$R_{\CF;\CS}$ the Maxwell adjustment of the thermodynamic equilibrium. We denote by $R_{\CF;\CS}^{\text{\rm mx}}$ the
Maxwell adjustment.
\end{defn}
(See \cite[Section 3 \& 4]{oh:Lag-spectral} for detailed
construction of the general  Maxwell adjustment
and its explanation in general, and
Appendix \ref{sec:maxwell-adjustment} for a summary.)

\begin{rem}
\begin{enumerate}
\item The sudden jump of the volume $V = \frac{\del f_T}{\del P}$ at
the Maxwell pressure $P = P_{\text{\rm mx}}$ maintaining continuity of
the potential  $\Upsilon = H - TS = U + PV -TS$ across the pressure
has the following thermodynamic interpretation. The state whose pressure
is above that of the equilibrium $R_{\CF;\CS}$ describes a  liquid while
that below the Maxwell pressure describes vapour. Maxwell \cite{maxwell:construction}
himself described his construction of replacing the middle
sinusoidal part by the horizontal straight line in Figure \ref{fig:graph4}
by saying  \emph{``...... Since the temperature has
been constant throughout, no heat can have been transformed into work. Now the heat
transformed into work is represented by the excess of the area FDE over BCD. Hence the
condition which determines the maximum pressure of the vapour at given
temperature is that the line BF cuts off equal areas from the curve above and below.''}
\item
It is worthwhile to point out that the Maxwell construction is the
simplest scenario for which existence of such a continuous single-valued selector can be
directly established by simple integration. In the more complex system, establishing
existence of such a single-valued selector is a nontrivial problem in general,
as illustrated by \cite{oh:jdg}: The proof of existence of such a single-valued selector
involves either the Floer theory or the stable Morse theory in
symplectic geometry.
\end{enumerate}
\end{rem}

\section{Discussion on future direction: dynamics}
\label{sec:discussion}

We would like to mention that we have not touched upon the time
evolution or the dynamical point of view of nonequilibrium thermodynamics focus mostly on the geometric structure and
the kinematical aspect thereof.
For example, we have not touched upon anything about Boltzmann
analysis of involving the evolution equations
leading to equilibrium thermodynamics as $t \to \infty$.

The most common form of the Boltzmann equation
has the form
$$
\frac{\del f}{\del t} + v \cdot \nabla_x f = Q(f,f)
$$
in the position-velocity representation by $(x,v) \in \R^K \times \R^K$,
where $Q$ is the quadratic Boltzmann collision operator,
$$
Q(f,f) = \int_{\R^K} \int_{S^{K-1}}(f'f'_* - ff_*)B(v-v_*,\sigma)\, d\sigma
dv_*
$$
where $S^{K-1} \subset \R^K$ is the $K-1$ dimensional unit
sphere and $d\sigma$ is its standard volume form.
(see \cite[Equation (3)]{villani:kinetic} for meanings of the
unexplained notations $f'$, $f_*$, $f'_*$ and the Boltzmann's
collision kernel $B = B(v-v_*,\sigma)$. (In the notation of the current paper,
$v$ is $P$ and $x$ is $Q$.)

There are recent articles by Esen-Grmela-Pavelka
\cite{esen-grmela-pavelka-I,esen-grmela-pavelka-II} which investigate  the role of geometry in statistical mechanics and thermodynamics
in the themes somewhat similar to those studied in the present paper.
In addition, they unravel some interesting contact geometric aspect
and its role in the dynamical perspective. In particular, they put the full
kinetic equation, \emph{General Equation for Non-Equilibrium
Reversible-Irreversible Coupling} (GENERIC),  involving both Hamiltonian structure and gradient
structure into the framework of \emph{evolution Hamiltonian
dynamics}, i.e., dynamics generated by \emph{evolution
Hamiltonian vector field} \cite{SdLVdD}. In the course of their
investigation, they also consider
some mesoscopic dynamical theory, a bridge between
the statistical mechanics and the thermodynamics,  as some
sort of reduction, which they call the \emph{hierarchy reduction}.
Esen-Grmela-Pavelka write down a very
general form of the kinetic equation in \cite[Section 3.3]{esen-grmela-pavelka-II} which they acronym GENERIC:
Write $\rho = f\, \nu_0$ with  $f = \frac{\del \rho}{\del \nu_0}$. Then GENERIC has the following elegant form
$$
\dot f = \{f, E\} + \nabla_\beta \Xi|_{\beta = \delta_f S}
$$
$E$ is the energy of the system, $S$ is the Boltzmann
entropy and $\Xi = \Xi(f,\beta)$ is a general dissipation potential
defined on $T^*\CP(\Gamma)$, which satisfies
a list of properties \cite[Equation (9)]{esen-grmela-pavelka-II}
so that its evolution leads to
a thermodynamic equilibrium as $t \to \infty$.

It seems that our two-step process of reduction, following our
two mottos we take, are related to the Hamiltonian dynamics and the
gradient dynamics of their GENERIC framework respectively.
It remains to see if the study of the present
article would give some new light in the study of  the kinetic equation,
e.g., of Boltzmann's equation, and clarifying
the relationship between our study and that of
\cite{esen-grmela-pavelka-I,esen-grmela-pavelka-II} seems to
the first step towards that direction.

\appendix

\section{Information entropy}

The information entropy is firstly introduced by C. E. Shannon, under the attempt to research the communication in terms of mathematics \cite{shannon:entropy}.
Consider the probability measure space $(X,\CB,\mu)$.
We want to construct the function $I$ that
 measures the uncertainty of the information saying
$$
\text{``$x$ is in some `event' $A\in \CB$"}{}
$$
for any measurable subset $A\in \CB$.
The conditions for the information measure $I:\CB\rightarrow \mathbb{R}$ should be followings :
\begin{enumerate}
\item $I(A)$ is a function of $\mu(A)$
\item $I(A)\geq 0$
\item The smaller $\mu(A)$ is, the larger $I(A)$ is.
\item If $A,B$ are independent, then $I(A\cap B)= I(A)+I(B)$
\end{enumerate}
The condition (3) means that the uncertainty of the fact ``$x\in A$"increases as the size of $A$ decreases.
Intuitively, it reflects conviction that the statement ``I am an athlete" gives less information than
``I am an NBA player".

The condition (4) is from the ansatz that the two independent statements
connected with 'and'
give the same information obtained by considering the each statement individually.
Roughly speaking, consider the statement ``I'm an NBA player and he is an NFL player"
This statement just gives no more or less information from the two fact ``I'm an NBA player" and ``He is an NFL player".

\begin{lem}
The increasing continuous function $f:\mathbb{R}_+\rightarrow \mathbb{R}$ satisfying
$$
f(x)+f(y)=f(xy)
$$
should be $f(x)=-\log{x}$.
\end{lem}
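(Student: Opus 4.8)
The plan is to convert the multiplicative functional equation $f(xy) = f(x) + f(y)$ into the additive Cauchy equation by the standard logarithmic change of variable, and then to use the continuity hypothesis to rule out the pathological additive solutions. First I would set $g := f \circ \exp$, that is, $g(u) = f(e^u)$ for $u \in \R$. Since $\exp : \R \to \R_+$ is a continuous bijection and $f$ is continuous by hypothesis, $g$ is continuous; moreover, substituting $x = e^u$ and $y = e^v$ (so that $xy = e^{u+v}$) turns the hypothesis into
$$
g(u+v) = f(e^{u+v}) = f(e^u e^v) = f(e^u) + f(e^v) = g(u) + g(v),
$$
which is the additive Cauchy equation on all of $\R$.

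Next I would solve this Cauchy equation under the continuity assumption. Setting $u = v = 0$ gives $g(0) = 0$; a straightforward induction gives $g(nu) = n\,g(u)$ for every integer $n$ (including negatives, using $g(-u) = -g(u)$), and dividing then yields $g(u/q) = g(u)/q$, so that $g(ru) = r\,g(u)$ for every rational $r$. Writing $c := g(1)$, this gives $g(r) = c\,r$ for all $r \in \Q$. Because $g$ is continuous and $\Q$ is dense in $\R$, the identity $g(u) = c\,u$ extends to every real $u$. Translating back through $u = \log x$ gives $f(x) = g(\log x) = c\,\log x$ for all $x \in \R_+$.

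Finally I would pin down the constant $c$ from the remaining hypotheses. Monotonicity forces the sign of $c$, since a nonconstant $c\log x$ is strictly monotone and the prescribed direction of monotonicity selects the sign (in the information-measure context of the preceding axioms, the requirement that $I$ decrease on $(0,1]$ selects $c < 0$), while the choice of units — equivalently the base of the logarithm — fixes the magnitude, giving $f(x) = -\log x$ after the normalization $c = -1$.

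The only genuine obstacle is the passage from rational to real homogeneity of $g$: without a regularity hypothesis the additive Cauchy equation admits wildly discontinuous solutions built from a Hamel basis, so the continuity of $f$ (inherited by $g$) is precisely the input that excludes these and forces linearity. Every other step is a routine manipulation of the functional equation.
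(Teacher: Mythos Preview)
The paper states this lemma without proof, so there is nothing to compare your argument against. Your approach---converting to the additive Cauchy equation via $g = f \circ \exp$, solving over $\Q$ by induction, and extending to $\R$ by continuity---is the standard one and is correct.

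You are also right to flag the tension between the hypothesis ``increasing'' and the conclusion $f(x) = -\log x$, which is decreasing on $\R_+$. The statement as written is slightly off: the surrounding axioms (specifically condition (3), that smaller probability yields larger information) demand a \emph{decreasing} function, and your identification $c < 0$ reflects exactly that. Strictly from the lemma's hypotheses alone one gets only $f(x) = c\log x$ with the sign of $c$ fixed by monotonicity; the normalization $c = -1$ is a choice of units (base of the logarithm), as you note.
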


By this lemma, the unique choice of $I$ satisfying condition (1)--(4) should be $I(A)=-\log{\mu(A)}$. Therefore, we define
the information measure function.

\begin{defn}
Let $(X,\CB,\mu)$ be a probability space. and $\CA$ be a subalgebra of $\CB$. Then, the \textbf{information measure} $I_\mu$ is defined by
$$
I_\mu(A)= -\log{\mu(A)}
$$
and the \textbf{expected information measure} $\widehat I (\mu)$ is defined by
$$
\widehat I_\mu(A)=\mu(A) I_\mu(A) = -\mu(A)\log{\mu(A)}
$$
for all $A\in \CA$.
\end{defn}

Suppose that $X$ is a finite set of events. Then, it is easy to define the expected information of probability distribution $\mu$, i.e
$$
S(\mu)=-\sum_{q^i \in X} \mu(q^i) \log \mu(q^i).
$$
However, if $X$ is infinite, the summation should become an integral, but the problem is that we do not have any measure on $X$ except the probability measure $\mu$.
Also, the logarithm term cannot be defined since $\mu$ does not
 give any pointwise value without comparing with other reference measure.
Therefore, we should consider a reference measure $\nu_0$
 in order to define the entropy of a probability  \cite{Jaynes:InfoStat}.
(This is the Liouville measure in the present article.)

\begin{defn}
Let $(X,\CB,\nu_0)$ be a measure space and $\rho$ be a probability measure which is absolutely continuous with respect to $\nu_0$. Then, the \textbf{relative entropy of $\rho$} is defined by
$$
\CS(\rho)= - \int_X \rho \log{\frac{d\rho}{d\nu_0}}
$$
where $\frac{d\rho}{d\nu_0}$ is a Radon-Nikodym derivative.
\end{defn}

\section{Relative information entropy and Legendrian generating function}
\label{sec:generating}

In this Appendix, we recall the definition of \emph{Legendrian generating function} and
explain what we mean by the statement
\emph{``Relative information entropy in $\SPS$ is a generating function of
thermodynamic equilibrium"}.  This framework is
motivated by  the following Weinstein's framework he observed on the
role of the classical action functional in the representation of Hamiltonian
deformations of the zero section of the cotangent bundle $T^*M$. To put
our framework in some perspective, we recall Weinstein's observation.

The standard generating function of a Lagrangian submanifold in the cotangent bundle
$T^*B$ is given by a vector bundle $\pi: E \to B$ equipped with a function $S: E\to \R$.
We will not repeat its definition in the symplectic context but refer readers to the
contact case below, except that we make the following remark.

Laudenbach-Sikorav \cite{laud-sikorav} and Sikorav \cite{sikorav}
proved that such a Lagrangian submanifold admits a generating function on a finite dimensional vector bundle
$\pi: E \to B$ by considering an approximation of the classical action functional
on the set of piecewise smooth paths which
approximates Hamiltonian trajectories issued at the zero section $o_{T^*B}$.
The procedure replaces
\emph{geodesics} in Bott's geodesic approximations of smooth loops \cite{bott:stable}
by the \emph{Hamiltonian paths}.

Then Weinstein observed that instead of considering this approximation,
one may directly put it into the following \emph{infinite dimensional} framework on
the general fibration, not just on the vector bundle.

\begin{lem}[Weinstein's lemma, \cite{alan:observation}]\label{lem:weinstein}
Let $H= H(t,x)$ be a time-dependent Hamiltonian on
 the cotangent bundle $T^*B$. Then the classical action functional
$$
\CA_H(\gamma) = \int \gamma^*\theta - H(t,\gamma(t))\, dt
$$
defined on the path space
$$
\CL_0(T^*B,0_{T^*B}) = \{ \gamma: [0,1] \to T^*B \mid \gamma(0) \in 0_{T^*B}\}
$$
is a `generating function' of the time-one image $\phi_H^1(0_{T^*B})$ of the zero section
under the Hamiltonian flow of $H$.
\end{lem}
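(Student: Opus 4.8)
The plan is to carry out the standard first-variation calculation for $\CA_H$ and to read off the generating-function data from it, after exhibiting the path space $\CL_0(T^*B,0_{T^*B})$ as the total space of an (infinite-dimensional) fibration over $B$. Throughout I would work formally, ignoring the functional-analytic structure of the path space, exactly as the paper declares it will do for such infinite-dimensional constructions. First I would fix the fibration: letting $\pi : T^*B \to B$ denote the cotangent projection, set
$$
q : \CL_0(T^*B,0_{T^*B}) \to B, \qquad q(\gamma) = \pi(\gamma(1)),
$$
so that the fiber of $q$ over $b$ consists of all admissible paths whose endpoint lies in the single cotangent fiber $T^*_bB$. I would then take the pair $(\CL_0, q)$ as the fibration $E\to B$ and $\CA_H$ as the function $S\colon E\to\R$ in the generating-function recipe recalled above.

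Next I would compute the first variation. For a variation $\gamma_s$ with infinitesimal generator $\xi$, a vector field along $\gamma$ subject to $\xi(0)\in T_{\gamma(0)}0_{T^*B}$ so that the varied paths remain in $\CL_0$, the usual manipulation (writing the variation of $\int\gamma^*\theta$ via $d\theta$ and integrating by parts) yields, with the standard sign conventions and $X_H$ normalized so that Hamilton's equation reads $\dot\gamma = X_H(\gamma)$,
$$
\delta\CA_H(\gamma)(\xi) = \theta_{\gamma(1)}(\xi(1)) - \theta_{\gamma(0)}(\xi(0)) + \int_0^1 \omega\big(\xi,\ \dot\gamma - X_H(\gamma)\big)\, dt .
$$
The two boundary simplifications are the heart of the matter. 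At $t=0$ the tautological form vanishes along the zero section, so $\theta_{\gamma(0)}(\xi(0)) = 0$ for every admissible $\xi$. For the \emph{vertical} differential $d^v\CA_H$ I restrict to those $\xi$ with $\xi(1)$ tangent to the fiber $T^*_bB$, i.e. $d\pi(\xi(1)) = 0$; since $\theta$ annihilates fiber directions as well, the $t=1$ boundary term drops out too. Hence $d^v\CA_H(\gamma)=0$ forces the bulk integral to vanish against all interior-free $\xi$, which gives $\dot\gamma = X_H(\gamma)$. Thus the fiber-critical set $\Sigma_{\CA_H}$ is precisely the family of Hamiltonian trajectories issuing from the zero section, parametrized by their starting point, so $\Sigma_{\CA_H} \cong 0_{T^*B} \cong B$.

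Finally I would evaluate the horizontal differential on $\Sigma_{\CA_H}$. For $\gamma\in\Sigma_{\CA_H}$ the bulk term already vanishes, so for a variation with $d\pi(\xi(1)) = v \in T_{q(\gamma)}B$ the only surviving contribution is the boundary term $\theta_{\gamma(1)}(\xi(1)) = p(v)$, where $\gamma(1) = (b,p)$. Therefore $D^h\CA_H(\gamma)$, regarded as a covector at $q(\gamma)=b$, is exactly the fiber coordinate $p$ of the endpoint, and the generating-function map $\gamma \mapsto (q(\gamma),\, D^h\CA_H(\gamma))$ sends $\gamma$ to $\gamma(1) = \phi_H^1(\gamma(0))$. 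As $\gamma(0)$ ranges over the zero section, the image is $\phi_H^1(0_{T^*B})$, which is the assertion that $\CA_H$ generates $\phi_H^1(0_{T^*B})$. The main obstacle is conceptual rather than computational: everything hinges on the single geometric fact that the tautological one-form annihilates both the zero section at $t=0$ and the cotangent fibers at $t=1$, which is exactly what collapses the boundary terms in the manner the generating-function formalism demands. A genuinely rigorous treatment would, in addition, require Banach- or Fr\'echet-manifold structures on $\CL_0(T^*B,0_{T^*B})$ and a verification that $q$ is a submersion with the claimed fiber-critical locus, but as the paper works formally I would only flag these points rather than develop the analysis.
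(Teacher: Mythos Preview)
Your proof is correct and follows essentially the same approach as the paper's own sketch: the fibration is $\pi\circ\ev_1$, the vertical critical equation yields Hamilton's equation, and the horizontal differential at a fiber-critical path picks out the endpoint momentum, giving $\phi_H^1(0_{T^*B})$. You have simply supplied the explicit first-variation computation that the paper leaves implicit in its three-bullet outline.
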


More precisely, consider the diagram
\begin{equation}\label{eq:weinstein-diagram}
\xymatrix{\CL_0(T^*B,0_{T^*B}) \ar[d]^{\pi \circ \ev_1} \ar[r]^(.7){\CA_H} & \R \\
B &
}
\end{equation}
where $\ev_1: \CL_0(T^*B;0_{T^*B}) \to T^*B$ is the evaluation map
$$
\ev_1(c) : = c(1), \quad \text{\rm for } \, c \in \CL_0(T^*B;0_{T^*B}).
$$
Then the content and its proof of the above Weinstein's lemma
go as follows:
\begin{itemize}
\item First solve the fiberwise (or vertical) critical point equation. This provides
the \emph{equation of motion}, Hamilton's equation $\dot x = X_H(t,x)$.
\item Then push forward the image of the differential $d\CA_H$ by
the map $\pi \circ \ev_1$ to the cotangent bundle $T^*B$. \emph{This provides
the final location of the particle.}
\item The resulting push-forward image is precisely $\phi_H^1(0_{T^*B})$.
\emph{This provides the final momentum of the particle.}
\end{itemize}

Now we recall the formal definition of Legendrian generating functions
in the context of contact geometry.
We refer to \cite{sandon:generating} for a good introduction thereto in relation to
the purpose of the present section.

Let $B$ be a  manifold and $J^1B = T^*B \times \R$ be the
one-jet bundle. We equip $J^1B$ with the contact form
$$
\lambda = dz - pdq =: dz - \pi^*\theta
$$
where $(q,p,z) =: y$ be the canonical coordinates of $J^1B$.
We also denote $x = (q,p) \in T^*B$.

We denote by $\pi:E \to B$ a finite dimensional vector bundle.

\begin{defn} We say a Legendrian submanifold $R \subset J^1B$ is generated by
a function $S: E \to \R$ if we can express
$$
R = \{(q,p,z) \mid p = D^hS, \, z = S(q,e), \, d^vS(q,e) = 0\}.
$$
When this holds, we say $S$ is a generating function of the Legendrian submanifold $R$.
\end{defn}

Now we apply the above discussions to
our diagram \eqref{eq:generating-diagram}
\begin{equation}\label{eq:generating-diagram3}
\xymatrix{ {\CM^\CF}  \ar[d]_{\pi_\CF} \ar[r]^>>>>>{\CS^{\text{\rm red}}_\CF} & \R \\
\mathfrak g_{\CO_\CF}^* &
}
\end{equation}
This diagram is in the same infinite dimensional spirit as Weinstein's diagram \eqref{eq:weinstein-diagram}.
Here we break down the procedure of determining the
thermodynamic equilibrium and the associated probability distribution:

\begin{enumerate}[(1)]
\item (Finite dimensional reduction) The fiberwise (or vertical) critical point equation provides
the \emph{equation of motion}, which is the \emph{constrained} extremization problem
$$
d^v(\CS_\CF^{\text{\rm red}})([(\rho,\beta)]) = 0.
$$
\item (Generating Lagrangian shadow)
Then we push forward the image of the differential $d\CS_\CF^{\text{\rm red}}$ by
the map $\pi_\CF$ to the cotangent bundle $T^*\mathfrak g_\CF^*$.
\item (Generating thermodynamic equilibrium)
The canonical lifting to $J^1 \mathfrak g_\CF^*$ provides a
thermodynamic equilibrium state which satisfies the first law of
thermodynamics.
\item (Determination of equilibrium probability distribution)
The \emph{full} extremization problem over all observation data which
is equivalent to the \emph{full} critical point equation of $\CS_\CF^{\text{\rm red}}$
$d(\CS_\CF^{\text{\rm red}})(\rho) = 0$ is now reduced to the
finite dimensional problem of solving
\begin{equation}\label{eq:finite-dim-reduction}
D^h(\CS_\CF^{\text{\rm red}})([(\rho,\beta)]) = 0
\end{equation}
on $R_{\CF;\CS}$.
\end{enumerate}

Some thermodynamic interpretation of each step of the above is now in order.

What we have shown in Subsection \ref{sec:reduced-entropy}
is the statement that the $\CF$ iso-data mesoscopic KTPS $\CE_{\CF\CS}/\CG$
is the collection of states after performing the finite dimensional
reduction  of solving $d^v(\CS_\CF^{\text{\rm red}}) = 0$:
This corresponds to the above steps (1) and (2).

Step (3) above corresponds to the step of mapping $\CE_{\CF\CS}/\CG_\CF$ into
the finite dimensional TPS $J^1 \mathfrak g_\CF^*$ in the way that
the points in the image $R_{\CF;\CS}$ of the map $[\text{\rm pr}]$
satisfy the first thermodynamic law $dw - \sum_{i=1} p_i dq_i = 0$.
By the dimensional reason $R_{\CF;\CS}$ becomes a Legendrian submanifold.
This is the step (3) above.

When $R_{\CF;\CS}$ is holonomic,
in which case $R_{\CF;\CS} = \Image j^1f$ for some \emph{smooth} function
$f: \mathfrak g_\CF^* \to \R$.
the finite-dimensional reduction \eqref{eq:finite-dim-reduction} becomes a
finding a critical point of the relevant thermodynamic potential $f$. Otherwise
one has to solve the extremization problems
of a \emph{Lipschiz continuous}, not necessarily differentiable function, after applying
the Maxwell construction and selecting a single-valued branch of the
non-holonomic equilibrium state $R_{\CF;\CS}$. This process has been explained
in more detail with the van der Waals model in Section \ref{sec:maxwell}.

\section{General construction of Maxwell adjustment}
\label{sec:maxwell-adjustment}

In this section, we recall the details of construction of Maxwell adjustment
from \cite{oh:Lag-spectral} (see also \cite{arnaud} and \cite{amorim-oh-santos})
for a general compact exact Lagrangian submanifold in the cotangent bundle $T^*N$.
It is proved in \cite{oh:jdg}, \cite{amorim-oh-santos} that such a Lagrangian submanifold carries a
canonical graph selector $f_L$ called the \emph{basic phase function},
and that the structure of its differential $\sigma_F$, called
the \emph{Lagrangian selector}, is analyzed in \cite{oh:Lag-spectral} . Such an existence of graph selector has been constructed in
\cite{oh-yso:J1B} for a general compact Legendrian submanifold contact isotopic to the zero section of
one-jet bundle \cite{oh-yso:J1B}.

This being said we introduce the following general definition motivated by this.

\begin{defn} Assume $N$ is a closed manifold. Let $R \subset J^1N$ be any compact
smooth Legendrian submanifold projecting
surjectively to $N$. We call a (densely defined) single-valued function $f:N \to \R$  a \emph{graph selector}
and its differential $\sigma: = df$ of $T^*N$ with values
lying in $L = \pi_{T^*N}(R)$ its associated \emph{Lagrangian selector} of $R$.
\end{defn}
$R$ carries two projections, the projection of $R$ to $N \times \R$ called
the \emph{front projection} $\pi_{\text{\rm front}}$ and projection of $R$ to $T^*N$,
called the Lagrangian projection. We denote by $L$ the image of \emph{Lagrangian projection} or \emph{cotangent projection}.

We make the following hypothesis for the further discussion below, which we know holds
for the Legendrian submanifold contact isotopic to the zero section of $J^1N$ from
\cite{oh-yso:J1B}.

\begin{hypo}\label{hypo:selector-exists} We assume that
the projection of $R$ to $N \times \R$ is surjective and that
$R$ admits a graph selector $f: N \to \R$ and denote by
$\sigma: = df$ the associated Lagrangian selector.
\end{hypo}
The differential $\sigma$ is not a continuous map but is a well-defined current.
Under this hypothesis, the construction given in \cite[Section 3 \& 4]{oh:Lag-spectral}
immediately generalizes to such Legendrian submanifolds as follows.

For a given graph selector $f: N \to \R$ given by Hypothesis \ref{hypo:selector-exists},
we denote by $S(\sigma)$ the singular set of
the current $\sigma$. Then we consider the open subset of $R$
$$
\Sigma_R: = \{(q,z) \in J^1N \mid q \in N \setminus S(\sigma_R),
\, p = \sigma_R(q), \,(q,p,z) \in R \} \subset R.
$$
The projection $\pi_R: R \to N$ restricts to a one-one
correspondence on $\Sigma_R \subset R$, and the smooth function
$$
f|_{N_0}:  N_0 \to \R
$$
continuously extends to $\overline{N}_0 = N$ where $N_0 = N \setminus S(\sigma)$.
 By construction, we have the bound
\begin{equation}\label{eq:dfL}
 |df_R(q)| \leq \max_{x \in R} |p(x)|
\end{equation}
for any $q \in N_0$, where $x = (q(x),p(x),z(x))$ and the norm
$|p(x)|$ is measured by any given Riemannian metric on $N$.

The general structure theorem of the wave front (see
\cite{eliash:front}) proves that the section
$\sigma$ is a differentiable map on a set of full measure for a
generic choice of $R$ which is, however, \emph{not necessarily
continuous}.

We note that the singular locus $S(\sigma) \subset N$ is a subset of
the \emph{bifurcation diagram} of the Legendrian submanifold $R$:
The bifurcation diagram is the union of the caustic and the \emph{Maxwell set}
where the latter is the set of points of which the different branches of the
generating function merge. (See \cite[Section 4]{givental:singular} for
the definition of bifurcation diagram of Lagrangian submanifold $L \subset T^*N$
which can be generalized to the Legendrian submanifold of the type we consider here
in general.)

For a generic $R$, $S(\sigma_R)$ is stratified into a finite union of smooth submanifolds
$$
\bigcup_{k=1}^{n} S_k(\sigma_R), \quad S_k(\sigma_R)= \Sing_k(\sigma_R), \quad n = \dim N
$$
(see \cite{arnold:normalform,eliash:front,givental:singular} e.g., for such a result)
so that its conormal variety $\nu^*S(\sigma_R)$ can be defined as a finite union of
conormals of the corresponding strata.
Each stratum $S_k(\sigma_R)$ has codimension $k$ in $N$.
The stratum for some $k$ could be empty.
In $\dim N = 2$, there are two strata to consider,
one $S_1(\sigma_R)$ and the other $S_2(\sigma_R)$.

For $k=1$, $S_1(\sigma) \subset N$ is a hypersurface and each given point $q \in S_1(\sigma_R)$ has a neighborhood
$U(q) \subset N$ such that $U(q) \setminus S_1(\sigma_R)$
has two components.
We also note that $\sigma_R$ carries a natural orientation induced from $N$
by projection when $N$ is orientable.
When $N$ is oriented, $S_1(\sigma_R)$ is also
orientable as a finite union of smooth hypersurface. We fix any orientation on $S_1(\sigma_R)$.

We denote by $U^\pm(q)$ the closure of
each component of $U(q) \setminus S_1(\sigma_R)$ in $U(q)$ respectively.
Here we denote by $U^+(q)$ the component whose boundary orientation
on $\del U^+(q)$ coincides with that of the given orientation on
$S_1(\sigma_R)$
and by $\del U^-(q)$ the other one.
Then each of $U^\pm(q)$ is an open-closed domain with the same boundary
$$
\del U^\pm(q) = U(q) \cap S_1(\sigma_R).
$$
Denote
\begin{equation}\label{eq:fpmatq}
\sigma^\pm_R(q) = \lim_{p_\pm \to q} \sigma_R(p_\pm)
\end{equation}
obtained by taking the limit on $U^\pm(q)$ respectively. The limits
are well-defined from the definition of $\sigma_R$ since $R$ is a smooth
manifold.

The following theorem is proved in \cite{oh:Lag-spectral}. See also
\cite{givental:singular}, \cite{zak-roberts} for a related statement.

\begin{prop}[Theorem 4.1 \cite{oh:Lag-spectral}]\label{prop:conormal} Let $q \in S_1(\sigma_R)$.
Then
$$
\sigma^-_R(q) - \sigma^+_R(q) \in T_q^*N,
$$
is contained in the conormal space $\nu_q^*[S_1(\sigma_R);N] \subset T_q^*N$.
\end{prop}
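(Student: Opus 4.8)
The plan is to exploit the single fact that a graph selector $f$ is a \emph{continuous} (indeed Lipschitz) function on $N$, while on each side of the hypersurface $S_1(\sigma_R)$ it restricts to a genuine smooth branch of $R$. Continuity of $f$ across $S_1(\sigma_R)$ will force the two one-sided differentials to agree in every direction tangent to $S_1(\sigma_R)$, and that is exactly the asserted conormality.

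Fix $q \in S_1(\sigma_R)$ and a neighborhood $U(q)$ as in the discussion preceding the statement, with closed half-neighborhoods $U^\pm(q)$ meeting along $U(q)\cap S_1(\sigma_R)$. First I would record that on each open piece $U^\pm(q)\setminus S_1(\sigma_R)$ the selector $f$ coincides with a smooth branch of $R$: over $N\setminus S(\sigma_R)$ the cotangent projection $\pi_R : R \to N$ restricts to a one-one correspondence onto $\Sigma_R$, and $R$ is a smooth manifold (Hypothesis \ref{hypo:selector-exists}). Since the one-sided limits \eqref{eq:fpmatq} of $\sigma_R = df$ exist, these branches extend to functions $f^\pm$ that are $C^1$ up to the boundary $U(q)\cap S_1(\sigma_R)$, with $df^\pm = \sigma^\pm_R$ along $S_1(\sigma_R)$.

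Next I would invoke continuity of $f$. Because $f$ is continuous across $S_1(\sigma_R)$ and each $f^\pm$ agrees with $f$ on its closed side, we have $f^+ = f^- = f$ along $U(q)\cap S_1(\sigma_R)$. Let $\iota: S_1(\sigma_R) \into N$ denote the inclusion. Restricting the equal functions $f^+$ and $f^-$ to $S_1(\sigma_R)$ and differentiating tangentially gives
\[
\iota^*\sigma^+_R(q) = d\big(\iota^* f^+\big)(q) = d\big(\iota^* f^-\big)(q) = \iota^*\sigma^-_R(q),
\]
where the outer equalities are the interchange of restriction with differentiation valid for the $C^1$ functions $f^\pm$. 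Hence $\iota^*\big(\sigma^-_R(q) - \sigma^+_R(q)\big) = 0$, i.e.\ the covector $\sigma^-_R(q) - \sigma^+_R(q)$ annihilates $T_q S_1(\sigma_R)$, which is precisely the assertion that it lies in $\nu^*_q[S_1(\sigma_R);N]$.

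The main obstacle will be the analytic justification of the second paragraph: that the one-sided limits of $df$ are not merely pointwise values but assemble into $C^1$ extensions $f^\pm$ on the closed half-neighborhoods, so that restriction to $S_1(\sigma_R)$ commutes with differentiation at boundary points. This is where the smoothness of $R$ and the structure-of-wave-fronts input cited before the statement enter; once one knows each sheet of $R$ lying over $q$ is a smooth branch meeting $S_1(\sigma_R)$ transversally (the generic codimension-one stratum), the $C^1$-up-to-boundary claim and the tangential-derivative computation above are routine.
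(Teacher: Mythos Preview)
The paper does not supply its own proof of this proposition; it is quoted from \cite{oh:Lag-spectral} (and the related references \cite{givental:singular}, \cite{zak-roberts}) without argument. Your proof is correct and is essentially the standard one: continuity of the graph selector $f$ across the codimension-one stratum $S_1(\sigma_R)$ forces the two smooth branches $f^\pm$ to agree along $S_1(\sigma_R)$, and differentiating that identity in directions tangent to $S_1(\sigma_R)$ gives exactly the conormality of $\sigma_R^-(q)-\sigma_R^+(q)$.

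The only point that deserves care is the one you already flag: that each one-sided branch is $C^1$ up to the boundary. This is fine here because $S_1(\sigma_R)$ is the Maxwell-type stratum (values of distinct branches merge), not the caustic; at such points each sheet of $R$ over $q$ is transverse to the fibers of $J^1N \to N$, so the cotangent projection restricted to that sheet is a local diffeomorphism and its inverse furnishes the smooth extension $f^\pm$. With that in hand your tangential-derivative computation is rigorous.
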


The boundary orientations
of the two components arising from that of $\Sigma_R$, which in turn
is induced from that of $N$ via the cotangent projection have opposite orientations. We call
the one whose projection to $S_1(\sigma_R)$ coinciding with the given orientation
the \emph{upper branch} and the one with the opposite one the \emph{lower branch}
and denote them by
$$
\del^+\Sigma_R, \, \del^-\Sigma_R
$$
respectively.

Now let $\ell_q$ be the line segment connecting the two vectors
$\sigma^\pm_R(q)$, i.e.,
\begin{equation}\label{eq:Lq}
\ell_q: u \in [0,1] \mapsto  \sigma^+_R(q) + u(\sigma^-_R(q) - \sigma^+_R(q)) \subset T_q^*N.
\end{equation}
This is an affine line in the vector space $T_q^*N$ that is parallel to the conormal space $\nu^*_{q}S_1(\sigma_R)$.
Therefore the union
\begin{equation}\label{eq:unionLq}
\Sigma_{R;[-+]}: = \bigcup_{q \in S_1(\sigma_R)} \ell_q
\end{equation}
is contained in the translated conormal space
\begin{equation}\label{eq:trans-conormal+}
\sigma_R^+ + \nu^*[S_1(\sigma_R);N]
\end{equation}
Here the bracket $[-+]$ stands for the line segment $\ell_q$, and
 $\nu^*[S_1(\sigma_R);N]$ is the conormal bundle of $S_1(\sigma_R)$ in $N$.
We would like to point out that since $\sigma_R^+(q) - \sigma_R^-(q) \in \nu^*[S_1(\sigma_R);N]$
we have the equality
$$
\sigma_R^+(q) + \nu_q^*[S_1(\sigma_R);N] = \sigma_R^-(q) + \nu_q^*[S_1(\sigma_R);N]
$$
for all $q \in S_1(\sigma_R)$.
Therefore we can simply write \eqref{eq:trans-conormal+} as
\begin{equation}\label{eq:trans-conormal}
\sigma_R + \nu^*[S_1(\sigma_R);N]
\end{equation}
unambiguously. The following definition was called the \emph{basic Lagrangian selector chain}
in \cite{oh:Lag-spectral} as a singular chain.

\begin{defn}
We denote by $L_\sigma^{\text{\rm mx}}$ the Lagrangian chain
\begin{equation}\label{eq:SigmaF}
\overline \Sigma_R \subset T^*N
\end{equation}
with the orientation given as above.
\end{defn}

The two components of $\del \overline \Sigma_R$ associated to each connected
component of $S_1(\sigma_R)$ are the graphs of $\sigma_R^\pm$ for the functions
$f_F^\pm$ near $S_1(\sigma_R)$. Again we regard $\sigma_R^\pm$ as singular chains.

Note that each connected component of $S_1(\sigma_R)$ gives rise to
two components of $\del \Sigma_{R;[-+]} \cap \Sigma_R$.
We can bridge the `cliff' between the two branches
of $\del \Sigma_R$ over each connected component of $S_1(\sigma_R)$.

\begin{defn}[Cliff wall chain] We
define a `cliff wall' chain $\sigma_{R;[-+]}$ whose support is given by the union
$$
\Sigma_{R;[-+]} = \bigcup_{q \in S_1(\sigma_R)} \ell_q
$$
Then we define the chain
$\Sigma_{R;[-+]}$ similarly as we define $\Sigma_R$ by taking its closure in $T^*N$.
\end{defn}

By definition, its tangent space at $x = (q,u)$ has natural identification with
$$
T_x \Sigma_{F;[-+]} \cong \nu^*_q S_1(\sigma_R) \oplus T_q S_1(\sigma_R).
$$
Thanks to Proposition \ref{prop:conormal}, it carries a natural direct sum orientation
$$
o_{\Sigma_{R;[-+]}}(q) = \{df^-_R(q) - df^+_R(q)\} \oplus o_{S_1(\sigma_R)}(q).
$$
Therefore $\Sigma_{R;[-+]}$ carries a natural orientation and defines a singular chain.
(Under the natural identification of $T_qN$ with $T_q^*N$ by the dual pairing,
which induces an identification
$$
\nu^*_q S_1(\sigma_R) \oplus T_q S_1(\sigma_R) \cong \nu_q S_1(\sigma_R) \oplus T_q S_1(\sigma_R)
$$
as an oriented vector space.) Then we have the relation
\begin{equation}\label{eq:orient}
\del \Sigma_R = - \del \Sigma_{R;[-+]}
\end{equation}
along the intersection $\del \Sigma_R \cap \del \Sigma_{R;[-+]}$. The above discussion
leads us to the following

\begin{prop}[Maxwell adjustment]
Suppose a compact Legendrian submanifold $R \subset J^1N$ satisfies Hypothesis \ref{hypo:selector-exists},
and let $f: N \to \R$ be a graph selector.
We consider the union
$$
L_f^{\text{\rm mx}}: = \Sigma_R \cup \Sigma_{R;[-+]}.
$$
Then the set
$$
R_f^{\text{\rm mx}}: = \{(q,p,z) \in J^1N \mid (q,p) \in L_f^{\text{\rm mx}}, \, z = f(q)\}.
$$
defines a Legendrian cycle. We call $R_f^{\text{\rm mx}}$ the \emph{Maxwell adjustment} of
the Legendrian submanifold $R$ associated to $f$.
\end{prop}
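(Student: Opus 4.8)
The plan is to verify the two defining properties of a Legendrian cycle separately: first, that $R_f^{\text{\rm mx}}$ is tangent to the contact distribution $\ker(dz - p\,dq)$ on each of its smooth strata, and second, that as a stratified $n$-chain in $J^1N$ (with $n = \dim N$) it has vanishing boundary. Since the lift to $J^1N$ is performed by the single continuous function $z = f(q)$, both properties can be checked downstairs on $L_f^{\text{\rm mx}} = \Sigma_R \cup \Sigma_{R;[-+]} \subset T^*N$ and then transported up through the lift.

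For the Legendrian condition, on the stratum $\Sigma_R$ there is nothing to prove: these points already lie on $R$, so the contact form vanishes there. On the cliff-wall stratum $\Sigma_{R;[-+]}$, I would use the splitting $T_x\Sigma_{R;[-+]} \cong \nu_q^* S_1(\sigma_R) \oplus T_q S_1(\sigma_R)$ at a point $x = (q,p)$ with $p = \sigma_R^+(q) + u(\sigma_R^-(q) - \sigma_R^+(q))$. Along the vertical ($u$-)direction one has $dq = 0$ while $z = f(q)$ is constant, so $dz - p\,dq = 0$ automatically. Along a direction $v \in T_q S_1(\sigma_R)$, Proposition \ref{prop:conormal} gives $\sigma_R^-(q) - \sigma_R^+(q) \in \nu_q^* S_1(\sigma_R)$, hence $p(v) = \sigma_R^+(q)(v)$ \emph{independently of} $u$; on the other hand $f$ extends to a $C^1$ function on $\overline{U^+(q)}$ with differential $\sigma_R^+(q)$ along $S_1(\sigma_R)$, so $d(f|_{S_1(\sigma_R)})(v) = \sigma_R^+(q)(v)$ as well. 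Thus $dz - p\,dq$ annihilates both summands, and $R_f^{\text{\rm mx}}$ is Legendrian on every stratum.

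For the cycle condition, I would compute $\partial L_f^{\text{\rm mx}} = \partial \Sigma_R + \partial \Sigma_{R;[-+]}$. Over the interior of each codimension-one stratum $S_1(\sigma_R)$, the oriented boundary pieces are exactly the two branches $\partial^+\Sigma_R, \partial^-\Sigma_R$ together with the two ends $\{\sigma_R^\pm(q)\}$ of the segments $\ell_q$, and these cancel in pairs by the orientation relation \eqref{eq:orient}. The only remaining boundary terms live over the higher-codimension strata $S_k(\sigma_R)$, $k \geq 2$, which have dimension at most $n-2$; since the boundary of an $n$-chain is $(n-1)$-dimensional, such strata contribute nothing. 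Because the lift $z = f(q)$ is a homeomorphism onto its image given by a globally continuous function, $\partial$ commutes with the lifting, and I conclude $\partial R_f^{\text{\rm mx}} = 0$.

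The main obstacle is the cycle step. One must fix the orientations of $\Sigma_R$ and $\Sigma_{R;[-+]}$ so that the matching relation \eqref{eq:orient} holds with the correct sign along \emph{every} component of $S_1(\sigma_R)$, and then argue rigorously — in the language of currents or stratified chains, using Hypothesis \ref{hypo:selector-exists} and the finite stratification into the $S_k(\sigma_R)$ — that the contributions over the lower strata $S_k$, $k \geq 2$, are genuinely negligible rather than merely small. By comparison the Legendrian verification on the wall is soft, but it is not formal: it rests entirely on the conormality statement of Proposition \ref{prop:conormal}, without which the segment lift would fail to be horizontal.
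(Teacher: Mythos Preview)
Your proposal is correct and follows the same approach the paper sets up: the paper does not give a separate proof after the proposition, but relies on the preceding construction --- specifically Proposition~\ref{prop:conormal} for the conormality of the jump and the orientation relation \eqref{eq:orient} for the boundary cancellation --- which is exactly what you invoke. Your explicit verification that the contact form vanishes on the cliff-wall stratum, and your dimension count showing the strata $S_k(\sigma_R)$ with $k\geq 2$ contribute nothing to the $(n-1)$-dimensional boundary, make precise what the paper leaves implicit.
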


This finishes our
discussion on the Maxwell construction for the general Legendrian submanifold in
the one-jet bundle $J^1N$ of general closed manifold $N$.
\medskip

\noindent{\bf Acknowledgements:} The second named author thanks Sang-Jin Sin for his
interest in this work and illuminating discussions on the nature of volume variable
$V$ which greatly helps in finding our formulation of the volume, a nonlocal thermodynamic
observable, from a similar kind of viewpoint applied to the local
observables. We thank the referees for providing many useful
comments and suggestions which help us to much improve
the presentation of the present article. 
This work is supported by the IBS project \# IBS-R003-D1.

\medskip

\noindent{\bf Competing interest:} There is no competing interest.

\bibliographystyle{plain}

\begin{thebibliography}{10}

\bibitem{abraham-marsden}
R.~Abraham and J.~Marsden:
\newblock {\em Foundations of Mechanics},
\newblock Benjamin/Cummings, Reading, MA, 1978.
\newblock 2nd Edition.

\bibitem{amorim-oh-santos}
L.~Amorim, Y.-G. Oh, and J.~Oliveira~dos Santos: 
\newblock Exact {L}agrangian submanifolds, {L}agrangian spectral invariants and
  {A}ubry-{M}ather theory, 
\newblock {\em Math. Proc. Cambridge Philos. Soc.\/ }  {\bf 165(3)}, 411 (2018).

\bibitem{arnaud}
M.-C. Arnaud:
\newblock On a theorem due to {B}irkhoff,
\newblock {\em Geom. Funct. Anal.\/} {\bf 120}, 1307 (2010).

\bibitem{arnold:normalform}
V.~I. Arnold:
\newblock Normal forms for functions near degenerate critical points, the weyl
  groups of $a_k$, $d_k$, $e_k$ and {L}agrangian singularities, 
\newblock {\em Funct. Anal. Appl.\/ } {\bf 6}, 254 (1972). 
\bibitem{bernard-santos}
P.~Bernard and J.~Oliveira~dos Santos:
\newblock A geometric definition of the {M}a{\~n}{\'e}-{M}ather set and a
  {T}heorem of {M}arie-{C}laude {A}rnaud, 
\newblock {\em Math. Proc. Camb. Phil. Soc.\/} {\bf 152}, 167 (2012).

\bibitem{bott:stable}
R.~Bott:
\newblock The stable homotopy of the classical groups, 
\newblock {\em Ann. of Math. (2)\/} {\bf 70}, 313 (1959).

\bibitem{caratheodory}
C.~Carathéodory: 
\newblock Untersuchungen \"uber die {G}rundlagen der {T}hermodynamik (German),
\newblock {\em Math. Annalen\/ } {\bf 67(3)}, 355 (1909).

\bibitem{eliash:front}
Y.~Eliashberg:
\newblock A theorem on the structure of wave fronts and its applications,
\newblock {\em (Russian) Funksional. Anal. i Prilzhen\/}, {\bf 21(3)}, 65 (1987).

\bibitem{esen-grmela-pavelka-I}
O.~Esen, M.~Grmela, and M.~Pavelka: 
\newblock On the role of geometry in statistical mechanics and thermodynamics
  {I}: geometric perspective,
\newblock {\em Journal of Mathematical Physiscs\/} {\bf 63(12)} (2022),
\newblock Paper No. 123305, 22 pp.

\bibitem{esen-grmela-pavelka-II}
O.~Esen, M.~Grmela, and M.~Pavelka: 
\newblock On the role of geometry in statistical mechanics and thermodynamics
  {II}: thermodynamic perspective,
\newblock {\em Journal of Mathematical Physics \/} {\bf 63(12)}, (2022),
\newblock Paper No. 122902, 32 pp.

\bibitem{espanol}
Pep Esponol.
\newblock {\em Statistical mechanic of coarse-graining\/},
\newblock Springer Verlag, 2004.

\bibitem{gibbs}
Josiah~Willard Gibbs.
\newblock A method of geometrical representation of the thermodynamic
  properties of substances by means of surfaces,
\newblock {\em Transactions of the Connecticut Academy of Arts and Sciences\/}
 {\bf  2}, 382 (1873).

\bibitem{givental:singular}
A.~Givental:
\newblock Singular {L}agrangian submanifolds and their {L}agrangian mappings,
\newblock {\em J. Soviet Math.\/} {\bf 52(4)}, 3246 (1990).

\bibitem{grmela:mesoscopic}
M.~Grmela:
\newblock Contact geometry of mesoscopic thermodynamics and dynamics,
\newblock {\em Entropy \/} {\bf 16(3)}, 1652 (2014).

\bibitem{gromov:relation}
M.~Gromov:
\newblock {\em Particial {D}ifferential {R}elations}, volume~9 of {\em
  Ergebnisse der Mathematik und ihrer Grenzgebiete (3) [Results in Mathematics
  and Related Areas (3)]},
\newblock Springer-Verlag, Berlin, 1986.

\bibitem{hermann:geophy}
M.~Hermann:
\newblock {\em Geometry, physics, and systems\/},
\newblock Lecture Notes In Pure and Applied Mathematics. M. DEKKER, 1973.

\bibitem{Jaynes:InfoStat}
E.T. Jaynes:
\newblock Information theory and statistical mechanics,
\newblock {\em Phys. Rev.\/} {\bf 106}, 620 (1957).

\bibitem{kapranov:thermodynamics}
M.~Kapranov:
\newblock Thermodynamics and the moment map, 
\newblock preprint, 2011, arXiv:1108.3472.

\bibitem{kullback-leibler}
S.~Kullback and R.~A. Leibler:
\newblock On information and sufficiency,
\newblock {\em Ann. Math. Statistics\/} {\bf 22}, 79 (1951).

\bibitem{laud-sikorav}
F.~Laudenbach and J.-C. Sikorav:
\newblock Persistance d{\`i}ntersection avec la section nulle au cours d{\`u}ne
  isotopie hamiltonienne dans un fibre cotangent,
\newblock {\em Invent. Math. \/} {\bf 82(2)}, 349 (1985).

\bibitem{MACKAY}
David J.~C. MacKay:
\newblock {\em Information theory, inference and learning algorithms},
\newblock Cambridge University Press, New York, 2003.
\newblock xii+628 pp.

\bibitem{marle}
Charles-Michel Marle:
\newblock From tools in symplectic and poisson geometry to J.-M. Souriau’s
  theories of statistical mechanics and thermodynamics,
\newblock {\em Entropy}, 2016.
\newblock www.mdpi.com/1099-4300/18/10/370, 46 pp.

\bibitem{marsden-weinstein}
J.~Marsden and A.~Weinstein:
\newblock Reduction of symplectic manifolds with symmetry,
\newblock {\em Rep. Mathematical Phys.} {\bf 5(1)}, 121 (1974).

\bibitem{marsden-weinstein:vortices}
J.~Marsden and A.~Weinstein:
\newblock Coadjoint orbits, vortices and {C}lebsch variables for incompressible
  fluids,
\newblock {\em Physica D\/} {\bf 7}, 305  (1983).

\bibitem{maxwell}
J.~C. Maxwell:
\newblock On the dynamic theory of gases,
\newblock {\em Philos. Trans. Roy. Soc. London Ser A.\/} {\bf 157}, 49 (1867).

\bibitem{maxwell:construction}
J.~C. Maxwell:
\newblock On the dynamical evidence of the molecular constitution of bodies,
\newblock {\em NATURE \/} {\bf 11 (279)}, 357 (1875).

\bibitem{Moran:Thermo}
M.J. Moran, H.N. Shapiro, D.D. Boettner, and M.B. Bailey:
\newblock {\em Moran's Principles of Engineering Thermodynamics}.
\newblock John Wiley \& Sons, Limited, 2018.

\bibitem{Mrug:TPS}
R.~Mrugala:
\newblock Geometrical formulation of equilibrium phenomenological
  thermodynamics,
\newblock {\em Reports on Mathematical Physics\/} {\bf 14(3)}, 419 (1978).

\bibitem{MNSS}
R.~Mrugala, J.~D. Nulton, J.~C. Sch{\"o}n, and P.~Salamon:
\newblock Statistical approach to the geometric structure of thermodynamics,
\newblock {\em Phys. Rev. A \/} {\bf 41(6)}, 3156 (1990).

\bibitem{oh:jdg}
Y.-G. Oh:
\newblock Symplectic topology as the geometry of action functional. {I}.
  {R}elative {F}loer theory on the cotangent bundle,
\newblock {\em J. Differential Geom.} {\bf 46(3)}, 499 (1997).

\bibitem{oh:Lag-spectral}
Y.-G. Oh:
\newblock Geometry of generating functions and {L}agrangian spectral
  invariants,
\newblock preprint, arXiv:1206.4788, 2012.

\bibitem{oh:C0-continuity}
Y.-G. Oh:
\newblock Continuous {H}amiltonian dynamics and area-preserving homeomorphism
  group of {$D^2$},
\newblock {\em J. Korean Math. Soc.\/} {\bf 53(4)}, 795 (2016).

\bibitem{oh-yso:J1B}
Y.-G. Oh and Seungook Yu:
\newblock Legendrian spectral invariants on the one-jet bundle via perturbed
  contact instantons,
\newblock preprint, arXiv:2301.06704, 2023.

\bibitem{sandon:generating}
S.~Sandon:
\newblock Contact homology, capacity and non-squeezing in {$\R^{2n} \times
  S^1$} via generating functions,
\newblock {\em Ann. Inst. Fourier (Grenoble) \/} {\bf  61}, 145 (2011).

\bibitem{shannon:entropy}
C.E. Shannon and W.~Weaver: 
\newblock {\em The Mathematical Theory of Communication},
\newblock Number V. 1 in Illini books. University of Illinois Press, 1949.

\bibitem{sikorav}
J.-C. Sikorav:
\newblock Probl{\`e}mes d'intersections et de points fixes en g{\'e}om{\'e}trie
  hamiltonienne,
\newblock {\em Comment. Math. Helv.\/} {\bf  62(1)}, 62 (1987).

\bibitem{SdLVdD}
Alexandre~Anahory Simoes, Manuel de~León, Manuel Lainz~Valcázar, and
  David~Martín de~Diego:
\newblock Contact geometry for simple thermodynamical systems with friction,
\newblock {\em Proc. A.\/} {\bf  476}, 2241 (2020).

\bibitem{souriau}
J.-M. Souriau: 
\newblock {\em Structure des Syst\`emes Dynamiques \/}.
\newblock Dunod: Malakoff, France, 1969.
\newblock In French.

\bibitem{villani:kinetic}
C.~Villani:
\newblock {\em A review of mathematical topics in collisional kinetic theory},
  volume~I, pages 71--305.
\newblock North-Holland, Amsterdam, 2002.

\bibitem{alan:observation}
A.~Weinstein:
\newblock A graduate course in University of California, Berkeley, 1987.

\bibitem{zak-roberts}
V.~Zakalyukin and M.~Roberts:
\newblock On stable singular {L}agrangian varieties,
\newblock {\em Funkts. Anal. Prilozhen.\/} {\bf 26(3)}, 28 (1992).

\end{thebibliography}
\def\cprime{$'$}

\end{document}